\newcommand{\nc}{\newcommand}
\nc{\rnc}{\renewcommand}
\nc\mnb[1]{\medskip\noindent{\bf #1}}
\newcommand{\M}{{\mathbb{M}}}
\newcommand{\ot}{\otimes}
\newcommand{\Hom}{\operatorname{Hom}}
\renewcommand{\>}{\rangle}
\newcommand\be{\begin{equation}}
\newcommand\ee{\end{equation}}
\providecommand{\customgenericname}{}
\newcommand{\newcustomtheorem}[2]{%
	\newenvironment{#1}[1]
	{%
		\renewcommand\customgenericname{#2}%
		\renewcommand\theinnercustomgeneric{##1}%
		\innercustomgeneric
	}
	{\endinnercustomgeneric}
}
\DeclareMathOperator{\tr}{Tr}
\newtheorem{theorem}{Theorem}
\newtheorem{corollary}[theorem]{Corollary}
\newtheorem{fact}[theorem]{Fact}
\newtheorem{definition}[theorem]{Definition}
\newtheorem*{theorem*}{Theorem}
\newtheorem{lemma}[theorem]{Lemma}
\newtheorem{notation}[theorem]{Notation}
\newtheorem{proposition}[theorem]{Proposition}
\newtheorem{remark}[theorem]{Remark}
\begin{document}
\title{Port-based teleportation in arbitrary dimension}

\author{Micha{\l} Studzi{\'n}ski}
\author{Sergii Strelchuk}
\affil[1]{\small DAMTP, Centre for Mathematical Sciences, University of Cambridge, Cambridge~CB30WA, UK}
\author{Marek Mozrzymas}
\affil[2]{\small Institute for Theoretical Physics, University of Wrocław
	50-204 Wrocław, Poland}
\author{Micha{\l} Horodecki}
\affil[3]{\small Institute of Theoretical Physics and Astrophysics, National Quantum Information Centre, Faculty of Mathematics, Physics and Informatics, University of Gda{\'n}sk, Wita Stwosza 57, 80-308 Gda{\'n}sk, Poland}
\date{}
\maketitle			 
\begin{abstract}
	Port-based teleportation (PBT), introduced in 2008, is a type of quantum teleportation protocol which transmits the state to the receiver without requiring any corrections on the receiver's side. Evaluating the performance of PBT was computationally intractable and previous attempts succeeded only with small systems. We study PBT protocols and fully characterize their performance for arbitrary dimensions and number of ports. We develop new mathematical tools to study the symmetries of the measurement operators that arise in these protocols and belong to the algebra of partially transposed permutation operators. First, we develop the representation theory of the mentioned algebra which provides an elegant way of understanding the properties of subsystems of a large system with general symmetries. In particular, we introduce the theory of the partially reduced irreducible representations which we use to obtain a simpler representation of the algebra of partially transposed permutation operators and thus explicitly determine the properties of any port-based teleportation scheme for fixed dimension in polynomial time.
\end{abstract}	

\section{Introduction}
Quantum teleportation is one of the most important primitives in the Quantum Information Processing~\cite{bennett_teleporting_1993}. This technique allows to transfer the state of an unknown quantum system from the sender to the receiver without having to exchange the physical system. It has led to a large number of theoretical advances in quantum information theory and quantum computing~\cite{boschi_experimental_1998,gottesman_demonstrating_1999,gross_novel_2007, jozsa_introduction_2005, knill_scheme_2001, pirandola_advances_2015, raussendorf_one-way_2001, zukowski_event-ready-detectors_1993}. 

The first teleportation protocol involved two parties, Alice and Bob, each sharing a half of the maximally entangled state~\cite{bennett_teleporting_1993}. We will further refer to it as a `resource state'.  Alice wants to send an (unknown) state of a subsystem in her possession to Bob. She performs a projective measurement on her subsystem and the half of the maximally entangled state and communicates its classical outcome to Bob. He then reliably recovers the state which Alice communicated by applying a unitary correction operation conditioned on Alice's message. 

%A different teleportation protocol was introduced by Knill, Laflamme and Milburn~\cite{knill_scheme_2001}. Alice and Bob shared a large multipartite entangled state which is markedly different from a single maximally entangled resource state used in the original teleportation protocol. It followed the same sequence of steps: projective measurement, communication of the classical outcome to Bob, and unitary correction, but possessed two interesting properties. The initial step consists of the Fourier transform followed by the projective measurement and may be implemented by linear optical elements. Secondly, the protocol yields a non-unit success probability even with the optimal resource state (which turns out to be different from the maximally entangled state), and tends to $1$ as the dimension of the resource state increases~\cite{grudka_optimal_2008}. 

In 2008, a breakthrough result from Ishizaka and Hiroshima introduced a novel port-based teleportation protocol (PBT) which does not require the last step in the sequence~\cite{ishizaka_asymptotic_2008}. In this setup, parties share a large resource state consisting of $N$ copies of the maximally entangled states $|\Psi^-\rangle^{\otimes N}$, where each singlet is a two-qubit state, termed {\it port}.  Alice performs a joint measurement $\cal X$ on the unknown state $\theta$ which she wishes to teleport and her half of the resource state, communicating the outcome to Bob. 
The outcome of the measurement points to the subsystem where the state has been teleported to. To obtain the teleported state Bob discards all ports except for the one indicated by Alice's outcome. There are two versions of the PBT protocol, depending on the exact set of measurements used by Alice. The first type, so-called {\it deterministic} teleportation, is described by the set of $N$ POVM elements ${\cal X} = \{\Pi_a\}_{a=1}^{N}$. 
Upon measuring $a$-th element the teleported state ends up in the $a$-th port on Bob's side. He then traces out all but $a$-th subsystem which contains the teleported state. The second type, ${\it probabilistic}$ PBT, consists of a measurement with $N+1$ POVM elements $\{\Pi_a\}_{a=0}^{N}$, where $\Pi_0$ indicates a failure of the teleportation. In this protocol, when Alice obtains the input $a\in\{1,\ldots,N\}$, the parties proceed as above. When she obtains $0$, then they abort the protocol. 

In the probabilistic PBT the state of a qubit always gets teleported to Bob, but it decoheres during the process. The lower bound on the fidelity of the teleported state tends to 1 as the number of ports $N\to\infty$. In the deterministic case protocol, the state always gets teleported to Bob with perfect fidelity, but with some probability (which vanishes in the limit $N\to\infty$) Alice aborts.

PBT schemes found novel applications in the areas where the existing teleportation schemes fell short of.  They provided new architecture for the universal programmable quantum processor performing computation by teleportation with the property of it being composable.~\cite{ishizaka_quantum_2009}. In position-based cryptography, PBT schemes were used to engineer efficient protocols for instantaneous implementation of measurement and computation. It resulted in new attacks on the cryptographic primitives, reducing the amount of consumable entanglement from doubly exponential to exponential~\cite{beigi_konig}. 

Recently, the composable nature of the qubit PBT schemes made it possible to connect the field of communication complexity and a Bell inequality violation~\cite{buhrman_quantum_2016}. It allowed to show that any quantum advantage obtained by a protocol for an arbitrary communication complexity problem resulted in the violation of a Bell inequality, certifying the quantum nature of the advantage. 

The full characterization of the qubit PBT schemes was used to obtain the performance of the square-root measurements for mixed states obtaining explicit probabilities of success when the set of states to be discriminated has certain symmetries and POVMs are of nearly maximal rank~\cite{beigi_konig}.

Evaluating the performance of the PBT is tantamount to determining the spectral properties of the measurement operators $\cal X$. To determine them, authors in~\cite{ishizaka_asymptotic_2008} viewed $N+1$ qubits (with one extra qubit representing the teleported state) as spins, recursively building a basis for constituents of $\cal X$ making the use of the Clebsch-Gordan (CG) coefficients and with the painstaking amount of effort determined their eigenvalues. This approach has been successful for studying systems of $N+1$ qubits and relied on the existence of the closed form for the CG coefficients and therefore was limited to $SU(2)^{\otimes N}$. In the case of $SU(d)^{\otimes N}$, with $d>2$ there exists no closed form of the CG coefficients and thus it is impossible to obtain the spectrum of $\cal X$ without incurring an exponential overhead in $d$ and $N$. It is however possible to obtain a closed-form lower bound on the performance of deterministic PBT, but it only works in the regime $N\gg d$. Moreover, no bound is known for the probabilistic PBT.

By using graphical variant of Temperley-Lieb algebra, authors obtained an explicit closed-form expressions for the fidelity and success probability of PBT for an arbitrary $d$ and $N\in \{2,3,4\}$~\cite{wang_higher-dimensional_2016}. Here too the mathematical formulae contain the number of different terms which grows exponentially in $N$.

In our work we develop new mathematical tools to study the symmetries of $\cal X$ which enable us to efficiently evaluate the performance of ${\it any}$ PBT scheme for arbitrary $N$ and $d$. Our first contribution is the theory of the partially reduced irreducible representations (PRIR). They provide an elegant way of understanding the properties of subsystems of a large system which has general symmetries. We further use these techniques to provide a simple way to approach to the representation of the algebra of the partially transposed permutation operators. Remarkably, the operators describing measurements in any PBT scheme possess the exact symmetries of an element of this algebra. By exploring these symmetries in a principled way we are able to fully analyze all teleportation schemes.

We thus characterize the performance of the main PBT schemes and find exact expressions for the fidelity of the teleportation and the probability of success in the deterministic and probabilistic schemes respectively. Moreover, we describe the spectral properties of the POVMs and exhibit polynomial algorithms to efficiently calculate the properties of quantum systems with similar symmetries using our framework. 

%The rest of this paper is organized as follows: In Sec.~\ref{Mainresults},
%we formulate the precise setting of the port-based teleportation schemes and explain the main results and their implications. In Sec.~\ref{Spectrum} we analyze the measurement operator, relate it to the algebra of partially transposed permutation operators and investigate its properties. In Sec.~\ref{Deterministic} we derive the universal expression for the fidelity of the deterministic PBT. In Sec.~\ref{Probabilistic} we obtain a simple expression for the probability of success for the probabilistic scheme. Sec.~\ref{Discussion} concludes with discussion and open questions.

\section{Setting and main results}\label{Mainresults}

In this section we introduce the setting and outline the results obtained by new mathematical techniques developed in our work.

Consider a probabilistic protocol defined by a $d$-dimensional maximally entangled resource state $\bigotimes_{i=1}^N|\Phi^+_d\rangle_{{A_iB_i}}$, set of POVMs ${\cal X} = \{\Pi_a\}_{a=0}^{N}$, where each $\Pi_a = \rho^{-\frac{1}{2}}\varrho_a\rho^{-\frac{1}{2}}$ for $a\ge 1$ and $\Pi_0 = \mathbf{1} - \sum_{a=1}^N\Pi_a$ with 
\be\label{rho}
\rho=\sum_{a=1}^{N}\varrho_a,
\ee and $\varrho_a=\frac{1}{d^N}P^+_{CA_a}\ot \mathbf{1}_{\overline{A_a}}$, for  $a=1,\ldots, N$.

\begin{figure}[ht]
	\centering
	\includegraphics[width=0.5\textwidth]{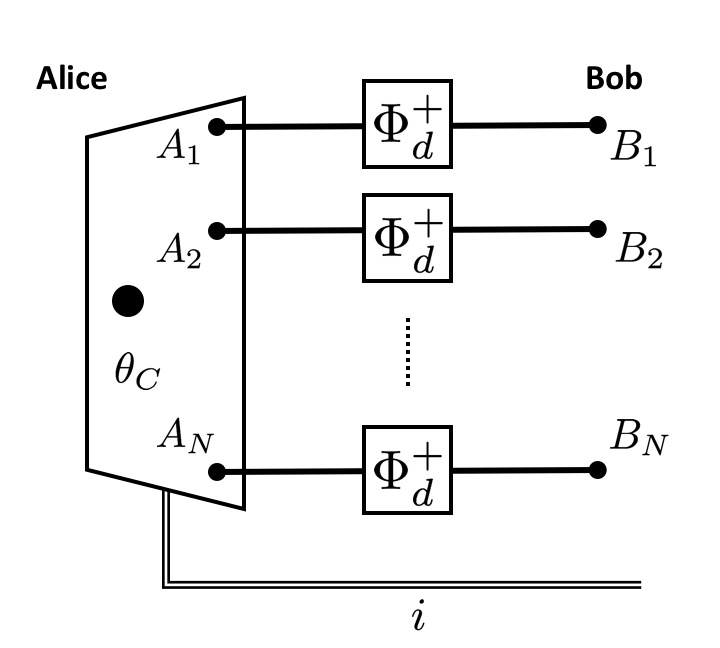}
	\caption{Schematic description of PBT in the arbitrary dimension.}
	\label{fig:pbt_general}
\end{figure}
The operator $P^+_{CA_a}$ denotes an unnormalised projection onto the state $|\Phi^+\rangle_{CA_a}=\sum_{i=1}^d|ii\>_{CA_a}$ between systems $C$ and $A_a$, and $\mathbf{1}_{\overline{A_a}}$ is identity operator on all subsystems $ A$ except $A_a$. {We will henceforth refer to $\rho$ as the PBT operator}.
Alice wishes to teleport a qudit $\theta_C$ to Bob. After she measures ${\cal X}$ and communicates the classical outcome $i$ to Bob, he performs one of the following actions: (a) if $i\in\{1,\ldots, N\}$ he traces out all but the $i$-th port which contains the teleported state with perfect fidelity; (b) if $i=0$ he aborts.
The schematic representation is shown in Fig~\ref{fig:pbt_general}.
\subsection{Eigenvalues of the PBT operator}
We develop new mathematical tools and find that the PBT operator has the following spectrum:

\begin{customprop}{2}{[Short version]}
	The eigenvalues of the PBT operator $\rho$ in Eqn.\eqref{rho} are given by $\lambda_{\mu}(\alpha)=\frac{N}{d^N}\frac{m_{\mu}d_{\alpha}}{m_{\alpha}d_{\mu}}$.
	Quantities $m_{\alpha},m_{\mu}$  denote multiplicities in the natural representation,  $d_{\alpha}, d_{\mu}$ denote the respective dimensions of the irreps. By $\mu$ we denote Young diagrams obtained from Young diagrams $\alpha$ of $n-2$ boxes by adding a single box in a proper way. We take Young diagrams $\alpha, \mu$ whose height is not greater then dimension $d$ of the local Hilbert space.
\end{customprop}

Further in this manuscript by symbol $\nu \vdash m$ we denote  Young diagram of $m$ boxes, by $\mu \in \alpha$ we denote all Young diagrams $\mu$ which can be obtained from the Young diagram $\alpha$ by adding a single box in a proper way, $\alpha \in \mu$ denotes all Young diagrams $\alpha$ which can be obtained from the Young diagram $\mu$ by removing a single box. By 'proper way' we understand a situation when the height of final Young diagram is less or equal than dimension $d$ of the local Hilbert space. For the Young diagram $\mu$ its height (number of rows) is denoted as $h(\mu)$. Since there is one-to-one correspondence between Young diagrams made up of $n$ boxes and inequivalent irreps of the symmetric group $S(n)$ we use symbols $\alpha, \mu$ etc. interchangeably for Young diagrams and irreps whenever it is clear from the context. 
\subsection{Probabilistic PBT}
In this scheme the teleported state reaches the recipient with high probability  and one distinguishes two different protocols each with perfect teleportation fidelity. In the first type we consider a resource state which consists of a number of maximally entangled states, and in the second type we obtain the resource state as well set of POVMs as a result of the optimization procedure. In the former case, the probability of success is given by

\begin{customthm}{3}
	The maximal average success probability in the probabilistic PBT with a resource state consisting of maximally entangled pairs is given by 
	${p= \frac{1}{d^N}\sum_{\alpha}m^2_{\alpha}\min_{\mu \in \alpha}\frac{d_{\mu}}{m_{\mu}}}$,
	where $\mu$ denotes Young diagram obtained from Young diagrams $\alpha \vdash n-2$  by adding a single box in a proper way and $\gamma_{\mu}(\alpha)$ is given in Proposition~\ref{m2}.
\end{customthm}

In the latter case, the probability of success is given by

\begin{customthm}{4}
    The optimal state in the probabilistic PBT is given by
	$X_A=\sum_{\mu}c_{\mu}P_{\mu}\quad \text{with}\quad c_{\mu}=\frac{d^{N}g(N)m_{\mu}}{d_{\mu}}$
	where $g(N)=1/\sum_{\nu}m_{\nu}^2$, and $\nu$ labels irreps of $S(n-1)$. Operators $P_{\mu}$ are Young projectors onto irreps of $S(n-1)$. The corresponding optimal probability is of the form $p=1-\frac{d^2-1}{N+d^2-1}$,
	where $N$ is the number of ports, and $d$ is the dimension of the local Hilbert space.
\end{customthm}

Figure~\ref{fig:psuccess} depicts the success probability computed by our algorithm for both cases.
\begin{figure}[ht]
	\centering
	\includegraphics[width=0.5\textwidth]{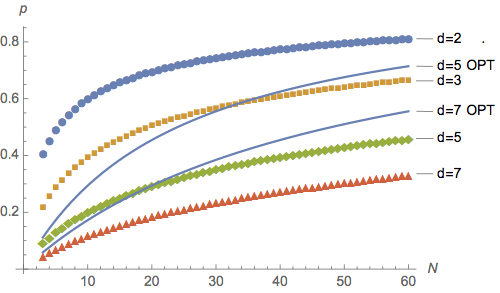}
	\caption{Exact performance of the probabilistic PBT protocol. Dotted lines correspond to the average success probability when we only optimize the measurements using maximally entangled resource state. Solid lines correspond to the average success probability when we optimize both the measurement and the resource state. }
	\label{fig:psuccess}
\end{figure}

\subsection{Deterministic PBT}
The deterministic version of the PBT guarantees that the teleported state always reaches the recipient, but at a cost of being distorted.
It is described by $N$ POVM elements $\{\Pi_a\}_{a=1}^{N}$ with each $\Pi_a = \rho^{-1/2}\varrho_a\rho^{-1/2} + \Delta$, where the term $\Delta=(1/N)(\mathbf{1} -\sum_{a=1}^N\Pi_a)$ with $\tr\varrho_a\Delta=0$ is required to ensure that $\sum_{a=1}^N\Pi_a = \mathbf{1}$. For simplicity, we take $\theta_C$ to be a half of the maximally entangled state. As described in the introduction, the sender, Alice, performs a joint measurement $\{\Pi_a\}_{a=1}^{N}$ on $\theta_C$ and her share of the resource state. She then communicates  the classical outcome $i\in \{1,\ldots, N\}$ to Bob, who then traces out all but the $i$-th port which contains the teleported state.

\begin{customthm}{12}
The fidelity for Port-Based Teleportation is given by $
F=\frac{1}{d^{N+2}}\sum_{\alpha \vdash n-2}\left(\sum_{\mu \in \alpha}\sqrt{d_{\mu}m_{\mu}} \right)^2$,
where sums over $\alpha$ and $\mu$ are taken, whenever number of rows in corresponding Young diagrams is not greater than  the dimension of the local Hilbert space $d$. 
\end{customthm}

From the entanglement fidelity computed in Theorem 12 one can easily obtain the average fidelity using $f= (Fd+ 1)/(d+ 1)$.
Figure~\ref{fig:entfidelity} shows the performance of the deterministic PBT when Alice wishes to teleport higher-dimensional states.
\begin{figure}[ht]
	\centering
	\includegraphics[width=0.5\textwidth]{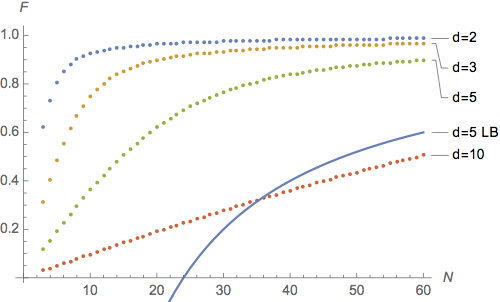}
	\caption{Performance of the deterministic PBT protocol for $d\in\{2,3,4,10\}$. Dotted line denotes explicit values for the entanglement fidelity computed by our algorithm. Solid line denotes the best lower bound for $d=5$ derived in~\cite{ishizaka_asymptotic_2008}.}
	\label{fig:entfidelity}
\end{figure}

\section*{Discussion and open questions}\label{Discussion}
We found explicit expressions for the performance of all variations of the PBT in arbitrary dimension with any number of ports. We expect the tools and techniques introduced here to find a number of applications ranging from the study of quantum states with restricted symmetries and calculating properties of the antiferromagnetic systems to problems in the quantum measurement theory. 

{Our successful approach to studying properties of the port-based teleportation may be replicated for the study of arbitrary systems with partial symmetries. First, one needs to classify the symmetries of the system ($S(n-1)$ and $S(n-2)$ in the case of the PBT). Next, to identify and study the structure and the natural representation of the algebra corresponding to the elements with these symmetries ($\mathcal{A}^{t_n}_n(d)$ in the case of the PBT). Finally, to compute various tracial quantities of interest efficiently one needs to adjust the theory PRIRs to account for the symmetries of the system in question.}

We now mention some open questions. Firstly, how to characterize entanglement content in the resource state after one run of any PBT protocol for $d>2$. We know that in the qubit case the residual entanglement may be recycled to teleport more states~\cite{strelchuk_generalized_2013}. In addition, when probabilistic PBT fails, Alice can nevertheless make use of the standard teleportation protocol reliably~\cite{ishizaka_remarks_2015}. One might wonder if it is possible to similarly utilize the residual entanglement of the resource state in the qudit case. In particular, to show that such higher-dimensional teleportation scheme works one needs to extend our analysis to the properties of the left ideal $\cal S$ depicted in Fig~\ref{fig:algebraStruct}.

Second, for the probabilistic PBT protocol we have shown that the measurement operators are optimal for a fixed resource state of the form $|\Phi^+\rangle^{\otimes N}$. We do not know whether this holds for the deterministic PBT in our case. From~\cite{grudka_optimal_2008, ishizaka_asymptotic_2008, ishizaka_quantum_2009} we know that for both the KLM scheme and the PBT protocols teleporting qubits, the resource state and the corresponding measurement differ when optimized simultaneously.

Another important problem is to determine the asymptotic performance of the PBT in arbitrary dimension. This presents a challenging task in particular because the asymptotic representation theory for the regime $d/N\to0$ is still in its infancy.

\section*{Methods}
\subsection{Structure of the port-based teleportation operator}
\label{Spectrum}
 In order to quantify the effect of measurement $\cal X$ one has to find the spectral properties of $\rho$. To simplify the analysis, we represent $\rho$ in a different form. First, observe that every unnormalised projector $P^+_{CA_a}$ can be written as $P^+_{CA_a}=V^{t_{C}}_{(CA_a)}$,
where $V_{(CA_a)}$ denotes a permutation operator  between systems $C$ and $A_a$, and $t_{C}$ denotes a transposition with respect to subsystem $C$.  Therefore, $\rho$ in Eqn.~\eqref{rho} may be written in terms of partially transposed permutation operators:
\begin{equation}
\label{introPswap}
\rho=\frac{1}{d^N}\sum_{a=1}^{N}V^{t_{C}}_{(CA_a)}\ot \mathbf{1}_{\overline{A_a}}.
\end{equation}
Every element $V^{t_{C}}_{(CA_a)}\ot \mathbf{1}_{\overline{A_a}}$ acts as a permutation operator on a full $n=N+1$-particle (we will use $n-1$ and $N$ interchangeably) Hilbert space $\mathcal{H}=\left( \mathbb{C}^{d}\right)^{\ot n}$. When it is clear from the context, we denote every operator  $V^{t_{C}}_{(CA_a)}\ot \mathbf{1}_{\overline{A_a}}$ just by $V^{t_{C}}_{(CA_a)}$.
The above form enables us to identify $\rho$ as the element of a recently studied algebra of partially transposed permutation operators $\mathcal{A}_{n}^{\operatorname{t}_{n}}(d)$ acting  in the space $\left( \mathbb{C}^{d}\right)^{\ot n}$, where $d\in \mathbb{N}$ and $d\geq 2$~\cite{Moz1,Stu1}. It turns out that $\mathcal{A}^{t_n}_n(d)$ decomposes into a direct sum of two types of left ideals (A left ideal of an algebra $\mathcal{A}$ is a subalgebra $\mathcal{I}\subset \mathcal{A}$ such that $ax\in \mathcal{I}$ whenever $a\in \mathcal{A}$ and $x\in \mathcal{I}$) $\mathcal{A}^{t_n}_n(d)=\mathcal{M}\oplus \mathcal{S}$.
To describe the functioning of the PBT protocols it suffices to only consider $\mathcal{M}$. The latter includes the irreducible representations of $\mathcal{A}_{n}^{t_{n}}(d)$  indexed by the irreducible
representations of the group $S(n-2)$ which are strictly connected with
the representations of the group $S(n-1)$~\cite{Moz1,Stu1} (see Fig.~\ref{fig:algebraStruct}).
\begin{figure}[ht]
	\centering
	\includegraphics[width=0.5\textwidth]{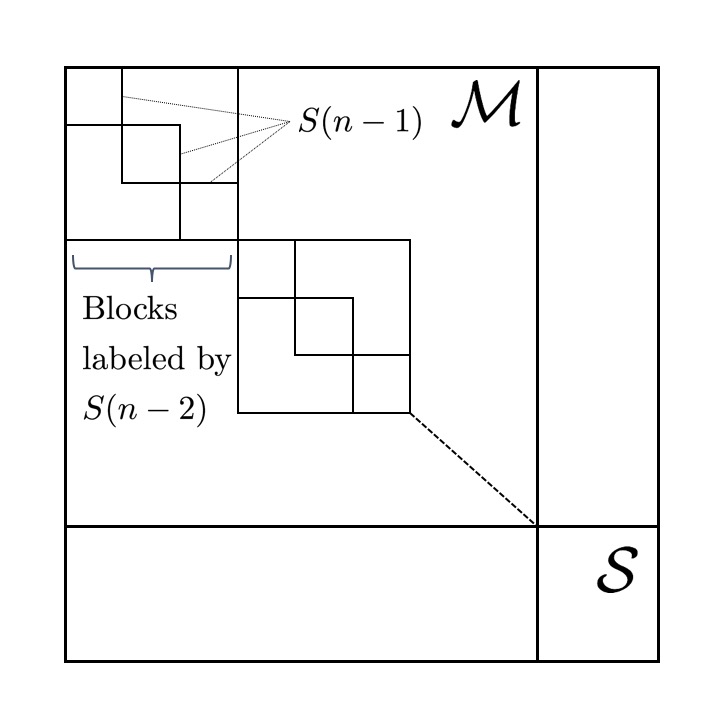}
	\caption{The structure of $\mathcal{A}_{n}^{\operatorname{t}_{n}}(d)$. It splits into direct sum of two ideals $\mathcal{M}$ and $\mathcal{S}$. The irreps of $\mathcal{M}$ are labelled by the irreps of $S(n-2)$ and they are strictly connected with the representations of the group $S(n-1)$ induced from the irreps of $S(n-2)$~\cite{Moz1}.}
	\label{fig:algebraStruct}
\end{figure}
To keep the notation consistent with the previous analysis of the algebra $\mathcal{A}_{n}^{\operatorname{t}_{n}}(d)$, we consider operator $\rho$ without factor $1/d^{n-1}$ and we change the numbering of the subsystems rewriting the general form of equation~\eqref{introPswap} as:
\be
\label{ro1}
\eta=\sum_{a=1}^{n-1}V^{t_{n}}(a,n),
\ee
where $t_n$ denotes a partial transposition on the $n^{\text{th}}$ subsystem, and $(a,n)$ is a permutation between subsystems $a$ and $n$. Here, the subsystem $C$ is labelled by $n$.
To evaluate the performance of the PBT scheme explicitly, we need to characterize the spectral properties of the above operator given in~\eqref{ro1}. For $d=2$ and $N\ge2$ this was done in~\cite{ishizaka_asymptotic_2008} using the CG coefficients. The constraint for the dimension cannot be improved because the CG formalism does not admit closed-form solutions beyond $d=2$. Our first contribution is the operator decomposition of $\eta$ which leads to a universal decomposition method which works for all $d\ge2$ and  $N\ge2$.

\begin{theorem}
	\label{m1}
	The operator $\eta=\sum_{a=1}^{n-1}V^{t_n}(a,n)$ has the form:
	\be
	\label{m1eq1}
	\eta=\bigoplus_{\alpha \vdash n-2}\bigoplus_{\substack{\mu \vdash n-1 \\ \mu \in \alpha}}\eta_{\mu}(\alpha)=\bigoplus_{\alpha \vdash n-2}\bigoplus_{\substack{\mu \vdash n-1 \\ \mu \in \alpha}} P_{\mu}\sum_{a=1}^{n-1}V(a,n-1)P_{\alpha} V^{t_n}(n-1,n)V(a,n-1),
	\ee
	where $\alpha,\mu$ run over Young diagrams whose heights are not greater then  the dimension $d$ of a single system, $\mu \in \alpha$ denotes a valid Young diagram $\mu$ obtained from $\alpha$ by adding one box in a proper way, $P_{\mu}, P_{\alpha}$ denote Young projectors onto irreps of $S(n-1),S(n-2)$ labelled by $\mu \vdash n-1, \alpha\vdash n-2$ respectively. Each $\eta_{\mu}(\alpha)$ is proportional to a projector, i.e. $\eta_{\mu}(\alpha)=\gamma_{\mu}(\alpha)F_{\mu}(\alpha)$, $\gamma_{\mu}(\alpha)\in \mathbb{C}$, $F_{\mu}(\alpha)$ is a projector of the dimension $\dim F_{\mu}(\alpha)=d_{\mu}\widetilde{m}_{\alpha}$, where $\widetilde{m}_{\alpha}$ is the multiplicity of the {irrep of the algebra labelled by $\alpha$} in $\mathcal{A}_{n}^{\operatorname{t}_{n}}(d)$, and $d_{\mu}$ is the dimension of the irrep of $S(n-1)$ labelled by $\mu$. The projectors $F_{\mu}(\alpha)$ satisfy  $F_{\mu}(\alpha)=M_{\alpha}P_{\mu}$, where $M_{\alpha}$ is projector including multiplicities onto $\alpha$-th irrep of the algebra $\mathcal{A}_{n}^{\operatorname{t}_{n}}(d)$.
\end{theorem}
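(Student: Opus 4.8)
The plan is to bring $\eta$ into a manifestly block‑diagonal form by two successive insertions of Young projectors, and then to identify each resulting block by Schur's lemma using the structure of the ideal $\mathcal{M}$ recalled above. The starting point is the elementary identity $V^{t_n}(a,n)=V(a,n-1)\,V^{t_n}(n-1,n)\,V(a,n-1)$, valid for every $a\in\{1,\dots,n-1\}$ (with $V(n-1,n-1)=\iden$): the transposition $(a,n-1)$ fixes the $n$‑th subsystem, hence commutes with the partial transposition $t_n$, while conjugating $(n-1,n)$ by $(a,n-1)$ yields $(a,n)$. Summing over $a$ gives $\eta=\sum_{a=1}^{n-1}V(a,n-1)V^{t_n}(n-1,n)V(a,n-1)$. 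Next I insert $\iden=\sum_{\alpha\vdash n-2}P_\alpha$, the resolution of the identity into Young projectors of $S(n-2)$ acting on subsystems $1,\dots,n-2$. Since $V^{t_n}(n-1,n)$ is supported on subsystems $n-1,n$ only, it commutes with each $P_\alpha$, so $V^{t_n}(n-1,n)=\sum_\alpha P_\alpha V^{t_n}(n-1,n)$ and therefore $\eta=\sum_{\alpha\vdash n-2}\eta(\alpha)$ with $\eta(\alpha):=\sum_{a=1}^{n-1}V(a,n-1)\,P_\alpha V^{t_n}(n-1,n)\,V(a,n-1)$.

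I then insert $\iden=\sum_{\mu\vdash n-1}P_\mu$ on the left. The key observation is that only $\mu\in\alpha$ contribute: the range of each summand $V(a,n-1)P_\alpha V^{t_n}(n-1,n)V(a,n-1)$ lies in $V(a,n-1)$ applied to $\operatorname{range}P_\alpha$, i.e.\ in the $\alpha$‑isotypic subspace for the copy of $S(n-2)$ acting on $\{1,\dots,n-1\}\setminus\{a\}$, and as $a$ ranges over $1,\dots,n-1$ this sum of subspaces is contained in $\bigoplus_{\mu\in\alpha}\operatorname{range}P_\mu$, because an $S(n-1)$‑irrep $\mu$ contains the $S(n-2)$‑irrep $\alpha$ on restriction precisely when $\mu$ is obtained from $\alpha$ by adding one box. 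Hence $\eta(\alpha)=\sum_{\mu\in\alpha}P_\mu\eta(\alpha)$, and setting $\eta_\mu(\alpha):=P_\mu\eta(\alpha)$ reproduces formula~\eqref{m1eq1}. (The projectors $P_\mu,P_\alpha$ range only over diagrams of height at most $d$, since the others act as zero on $(\cC^d)^{\otimes n}$.)

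It remains to show that each $\eta_\mu(\alpha)$ is a scalar multiple of the projector $F_\mu(\alpha)=M_\alpha P_\mu$, of rank $d_\mu\widetilde m_\alpha$. Two facts about $\eta$ drive this. First, $\eta\in\mathcal{M}$: each $V^{t_n}(a,n)$ arises from the ideal generator $V^{t_n}(n-1,n)$ of $\mathcal{M}$ by conjugation with a permutation from $S(n-1)\subset\mathcal{A}^{t_n}_n(d)$, hence lies in $\mathcal{M}$, and so does $\eta$; in particular its $\mathcal{S}$‑component vanishes. Second, $\eta$ is $S(n-1)$‑invariant: conjugation by $V(\sigma)$, $\sigma\in S(n-1)$, sends $V^{t_n}(a,n)\mapsto V^{t_n}(\sigma(a),n)$ and so merely permutes the $n-1$ summands. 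Consequently $\eta$ commutes with every Young projector $P_\mu$ and, lying in $\mathcal{A}^{t_n}_n(d)$, with every central idempotent $M_\alpha$; since $\eta\in\mathcal{M}$ this gives $\eta=\big(\sum_\alpha M_\alpha\big)\eta=\sum_{\alpha\vdash n-2}\sum_{\mu\in\alpha}(M_\alpha P_\mu)\,\eta$, each summand being the compression $(M_\alpha P_\mu)\eta(M_\alpha P_\mu)$. Now invoke the structure of $\mathcal{M}$ (and the PRIR description of its matrix units, together with \cite{Moz1,Stu1}): in the natural representation $\operatorname{range}M_\alpha\cong W_\alpha\otimes\cC^{\widetilde m_\alpha}$, where $W_\alpha$ is the irrep of $\mathcal{A}^{t_n}_n(d)$ labelled by $\alpha$ and $\mathcal{M}$ acts as $\End(W_\alpha)\otimes\iden$; moreover $W_\alpha$, as a module over $S(n-1)\subset\mathcal{A}^{t_n}_n(d)$, is the induced module $\operatorname{Ind}^{S(n-1)}_{S(n-2)}\alpha\cong\bigoplus_{\mu\in\alpha}\mu$, which is multiplicity‑free. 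Being an element of $\mathcal{M}$ that intertwines the $S(n-1)$‑action, $\eta$ must, by Schur's lemma, act as a single scalar $\gamma_\mu(\alpha)$ on the (unique) $\mu$‑component of each $W_\alpha$; hence $(M_\alpha P_\mu)\eta=\gamma_\mu(\alpha)\,M_\alpha P_\mu$, a multiple of the projector onto one copy of $\mu$ carried with multiplicity $\widetilde m_\alpha$, of rank $d_\mu\widetilde m_\alpha$. Matching this with $\eta=\sum_{\alpha,\mu\in\alpha}\eta_\mu(\alpha)$ from the previous step — i.e.\ identifying $\eta_\mu(\alpha)=P_\mu\eta(\alpha)$ with $(M_\alpha P_\mu)\eta$, and so $F_\mu(\alpha)$ with $M_\alpha P_\mu$ — comes down to checking $\eta(\alpha)=M_\alpha\eta$; the scalar is then $\gamma_\mu(\alpha)=\tr\!\big(M_\alpha P_\mu\,\eta\big)/(d_\mu\widetilde m_\alpha)$, whose explicit evaluation is deferred to Proposition~\ref{m2}.

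The step I expect to be the main obstacle is precisely this last reconciliation: passing from the concrete operator $\eta(\alpha)=\sum_a V(a,n-1)P_\alpha V^{t_n}(n-1,n)V(a,n-1)$ to the abstract block it represents requires verifying that $\operatorname{range}\eta(\alpha)\subseteq\operatorname{range}M_\alpha$ — equivalently $\eta(\alpha)=M_\alpha\eta$, so that the central idempotent can legitimately be introduced — and that the $S(n-1)$‑module underlying the $\alpha$‑irrep of $\mathcal{A}^{t_n}_n(d)$ is exactly $\operatorname{Ind}^{S(n-1)}_{S(n-2)}\alpha$, with its branching multiplicity‑free. Both rest on the explicit description of the central idempotents $M_\alpha$ and the matrix units of $\mathcal{M}$ in terms of the $S(n-2)$‑Young projectors $P_\alpha$ and the cap–cup $V^{t_n}(n-1,n)$, i.e.\ on the PRIR machinery and the results of \cite{Moz1,Stu1}. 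The attendant bookkeeping — notably which diagrams $\mu$ remain admissible for a given local dimension $d$ — is delicate but mechanical, whereas the Schur‑lemma core of the argument is short.
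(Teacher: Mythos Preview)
Your proposal is correct and follows essentially the same architecture as the paper: both rewrite $V^{t_n}(a,n)=V(a,n-1)V^{t_n}(n-1,n)V(a,n-1)$, insert $\sum_\alpha P_\alpha$ to define $\eta(\alpha)$, observe $S(n-1)$-invariance, and conclude (via Schur, implicit in the paper) that $\eta(\alpha)$ is scalar on each $F_\mu(\alpha)=M_\alpha P_\mu$ with $\dim F_\mu(\alpha)=d_\mu\widetilde m_\alpha$. The one practical difference worth noting is that the paper dispatches the step you flag as the ``main obstacle'' --- namely $\eta(\alpha)=M_\alpha\eta$, i.e.\ that the support of $\eta(\alpha)$ is exactly $S(M_\alpha)$ --- not through your induced-module/branching argument but simply by rewriting $\eta(\alpha)=\sum_{i=1}^{d_\alpha}\sum_{a=1}^{n-1}v_{ii}^{aa}(\alpha)$ in the explicit matrix-unit basis $v_{ij}^{ab}(\alpha)=V(a,n-1)E_{ij}^{\alpha}V^{t_n}(n-1,n)V(b,n-1)$ of the $\alpha$-block from~\cite{Stu1}, which makes membership in that block immediate; your more conceptual route via $\operatorname{Ind}^{S(n-1)}_{S(n-2)}\alpha$ is a valid substitute and arrives at the same conclusion.
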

\noindent{The structure and relation of projectors that appear in the theorem are depicted in Figure~\ref{l3}.}
\begin{proof}
	Fix an arbitrary representation of algebra $\mathcal{A}_{n}^{\operatorname{t}_{n}}(d)$. Let $M_{\alpha}$ be projector (including multiplicities) onto irrep labelled by $\alpha \vdash n-2$; denote the corresponding subspace $S(M_{\alpha})$, and set $P_{\mu}$ to be a projector onto irrep of $S(n-1)$ in the same representation (including the multiplicities). Our first goal is to determine the restriction of the operator $\eta$ to the irrep labelled by $\alpha$. We express $\eta$ in terms of operators $\{v_{ij}^{ab}(\alpha)\}$, where $1\leq a,b \leq n-1$ and $1\leq i,j \leq d_{\alpha}$ (see Definition 5 in~\cite{Stu1})
	\be
	\label{ppp}
	v_{ij}^{ab}(\alpha)=V(a,n-1)E_{ij}^{\alpha} V^{t_n}(n-1,n)V(b,n-1),
	\ee
	since they span irrep labelled by $\alpha$. The operators $\{E_{ij}^{\alpha}\}_{i,j=1}^{d_{\alpha}}$ form an operator basis in the irrep $\alpha \vdash n-2$ of $S(n-2)$ (see Appendix~\ref{A2} for the  details). Using ~\eqref{ppp} we can decompose $\eta$:
	\be
	\begin{split}
		\eta&=\sum_{a=1}^{n-1}V(a,n-1)V^{t_n}(n-1,n)V(a,n-1)=\sum_{\alpha}\sum_{a=1}^{n-1}V(a,n-1)P_{\alpha}V^{t_n}(n-1,n)V(a,n-1)\\
		&=\sum_{\alpha}\sum_{i=1}^{d_{\alpha}}\sum_{a=1}^{n-1}V(a,n-1)E_{ii}^{\alpha}V^{t_n}(n-1,n)V(a,n-1)=\sum_{\alpha}\sum_{i=1}^{d_{\alpha}}\sum_{a=1}^{n-1} v_{ii}^{aa}(\alpha)=\sum_{\alpha}\eta(\alpha),
	\end{split}
	\ee
where 
\be
\eta(\alpha)=\sum_{i=1}^{d_{\alpha}}v_{ii}^{aa}(\alpha)=\sum_{a=1}^{n-1}V(a,n-1)P_{\alpha}V^{t_n}(n-1,n)V(a,n-1).
\ee	
Thus the support of $\eta(\alpha)$ is precisely the space $S(M_{\alpha})$ which is invariant under the action of $S(n-1)$, hence its eigenprojectors are $F_{\mu}(\alpha)=M_{\alpha}P_{\mu}$, and this results in the following decomposition:
	\be
	\label{eq8}
	\eta(\alpha)=\bigoplus_{\substack{\mu \vdash n-1 \\ \mu \in \alpha}}\gamma_{\mu}(\alpha)M_{\alpha}P_{\mu}=\bigoplus_{\mu \in \alpha}\eta_{\mu}(\alpha),\quad \gamma_{\mu}(\alpha)\in\mathbb{C},
	\ee
	with $\eta_{\mu}(\alpha)=P_{\mu}\eta(\alpha)P_{\mu}$. This immediately implies that
	\be
	\label{FF}
	F_{\mu}(\alpha)=\gamma^{-1}_{\mu}(\alpha)P_{\mu}\eta(\alpha)P_{\mu}.
	\ee
	All of the structural properties above are derived solely from the properties of the underlying algebra, and are thus independent of representation. It is known that for any representation $\tr\left[P_{\mu}M_{\alpha} \right]=d_{\mu}\widetilde{m}_{\alpha}$, where $\widetilde{m}_{\alpha}$ is the multiplicity of the projector $M_{\alpha}$~\cite{Curtis}. 
\end{proof}
\begin{figure}[ht]
	\centering
	\includegraphics[width=0.7\textwidth]{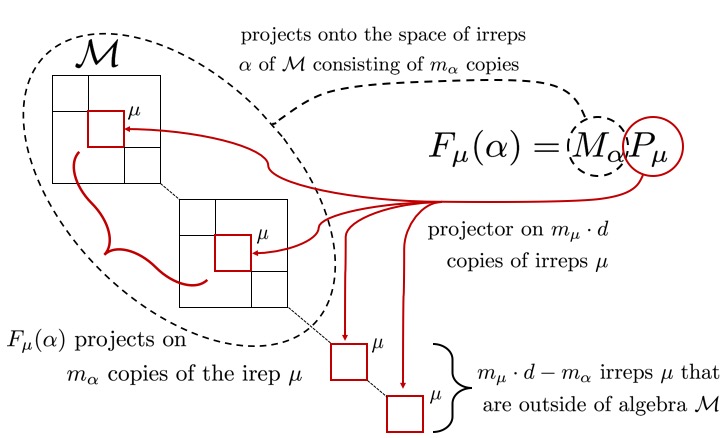}
	\caption{Graphical illustration of the action of the projector $F_{\mu}(\alpha)$ in  Theorem~\ref{m1}.}
	\label{l3}
\end{figure}
To simplify the presentation, we may occasionally switch to the natural representation (for instance when we want to compute the partial trace). A number of fundamental results about the structure of the above algebra was obtained by~\cite{Moz1,Stu1} who in particular showed that $\widetilde{m}_{\alpha}=m_{\alpha}$, where $m_{\alpha}$ is the multiplicity of irrep labelled by $\alpha$ in the natural representation of the group $S(n-2)$.
Keeping all the notation introduced in the previous theorem we now find the formula for the eigenvalues of the operator $\eta$ as well as port-based teleportation operator $\rho$. 
\begin{proposition}{[Extended version]}
	\label{m2}
	The numbers $\gamma_{\mu}(\alpha)$ given by Eqn.~\eqref{eq8} are the eigenvalues of the operator $\eta$ given by
	\be
	\label{m2eq1}
	\gamma_{\mu}(\alpha)=(n-1)\frac{m_{\mu}d_{\alpha}}{m_{\alpha}d_{\mu}}
	\ee
	or equivalently: 
	\be
	\label{m2eq2}
	\gamma_{\mu}(\alpha)=d+\frac{1}{2}(n-1)(n-2)%
	\frac{\chi ^{\mu }(12)}{d_{\mu}}-\frac{1}{2}(n-2)(n-3)\frac{%
		\chi ^{\alpha }(12)}{d_{\alpha}}.
	\ee
	By $m_{\alpha},m_{\mu}$ we denote multiplicities of {$\alpha, \mu$ of $S(n-2), S(n-1)$ respectively} in the natural representation, by $d_{\alpha}, d_{\mu}$ the respective dimensions, and by $\chi^{\mu}(12), \chi^{\alpha}(12)$ the characters calculated on the transposition $(12)$ of the corresponding irreps. By $\mu$ we denote Young diagrams obtained from Young diagrams $\alpha$ of $n-2$ boxes by adding a single box in a proper way. We take Young diagrams $\alpha, \mu$ whose height is not greater then dimension $d$ of the local Hilbert space.
\end{proposition}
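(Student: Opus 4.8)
The plan is to pin down each scalar $\gamma_\mu(\alpha)$ by evaluating a single trace in two different ways, using only the structural data already supplied by Theorem~\ref{m1}. From that theorem, $\eta_\mu(\alpha)=\gamma_\mu(\alpha)F_\mu(\alpha)$ with $F_\mu(\alpha)=M_\alpha P_\mu$ a genuine projector and $\tr F_\mu(\alpha)=d_\mu\widetilde m_\alpha=d_\mu m_\alpha$ (the last step invoking $\widetilde m_\alpha=m_\alpha$ from \cite{Moz1,Stu1}); moreover $\eta(\alpha)=\bigoplus_{\mu'\in\alpha}\gamma_{\mu'}(\alpha)M_\alpha P_{\mu'}$ by \eqref{eq8}, the $P_{\mu'}$ being orthogonal central idempotents of $\mathbb{C}[S(n-1)]$ commuting with $M_\alpha$. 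Hence $\tr[\eta(\alpha)P_\mu]=\gamma_\mu(\alpha)\,d_\mu m_\alpha$, and the problem reduces to computing $\tr[\eta(\alpha)P_\mu]$ by a second route. (This simultaneously records that the $\gamma_\mu(\alpha)$ really are the eigenvalues of $\eta=\bigoplus_{\alpha,\mu}\eta_\mu(\alpha)$, so only the explicit value is new.)

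For the second route I would start from $\eta(\alpha)=\sum_{a=1}^{n-1}V(a,n-1)P_\alpha V^{t_n}(n-1,n)V(a,n-1)$. Cyclicity of the trace turns the $a$-th summand into $\tr\big[V(a,n-1)P_\mu V(a,n-1)\,P_\alpha V^{t_n}(n-1,n)\big]$, and since $V(a,n-1)\in S(n-1)$ while $P_\mu$ is central in $\mathbb{C}[S(n-1)]$ we have $V(a,n-1)P_\mu V(a,n-1)=P_\mu$, so every summand equals $\tr\big[P_\mu P_\alpha V^{t_n}(n-1,n)\big]$ and $\tr[\eta(\alpha)P_\mu]=(n-1)\tr\big[V^{t_n}(n-1,n)\,P_\mu P_\alpha\big]$. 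Now $V^{t_n}(n-1,n)$ is the unnormalised maximally-entangled projector $P^+_{n-1,n}=\sum_{k,l}\ketbra{kk}{ll}_{n-1,n}$, while $P_\mu P_\alpha$ acts as the identity on subsystem $n$; the elementary transfer identity $\tr\big[(Y\ot\iden_n)P^+_{n-1,n}\big]=\tr Y$ for any $Y$ on subsystems $1,\dots,n-1$ (proved by tracing out subsystem $n$) then gives $\tr[\eta(\alpha)P_\mu]=(n-1)\tr[P_\mu P_\alpha]$. Finally, by Schur--Weyl duality the range of $P_\mu$ on $(\mathbb{C}^d)^{\ot(n-1)}$ is $V_\mu^{GL}\ot V_\mu^{S(n-1)}$, and $P_\alpha$ restricts the $S(n-2)$-action to its $\alpha$-isotypic part $V_\mu^{GL}\ot V_\alpha^{S(n-2)}$ (the branching $S(n-1)\downarrow S(n-2)$ being multiplicity free), so $\tr[P_\mu P_\alpha]=m_\mu d_\alpha$. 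Comparing the two evaluations yields $\gamma_\mu(\alpha)=(n-1)\,\dfrac{m_\mu d_\alpha}{m_\alpha d_\mu}$, which is \eqref{m2eq1}.

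To pass to the equivalent form \eqref{m2eq2}, I would substitute the hook--content formula $m_\lambda/d_\lambda=\frac{1}{|\lambda|!}\prod_{\square\in\lambda}(d+c(\square))$, where $c(\square)$ is the content of a box. Since $\mu$ is obtained from $\alpha$ by adding a single box and $(n-1)\cdot\frac{(n-2)!}{(n-1)!}=1$, all factors cancel except the one from the added box, leaving $\gamma_\mu(\alpha)=d+c_0$ with $c_0$ the content of $\mu/\alpha$. Writing $c_0=\sum_{\square\in\mu}c(\square)-\sum_{\square\in\alpha}c(\square)$ and using the classical fact that the sum of contents of $\lambda\vdash k$ equals the eigenvalue $\binom{k}{2}\chi^\lambda(12)/d_\lambda$ of the class sum of transpositions on the irrep $\lambda$ (equivalently, the eigenvalue of the sum of the Jucys--Murphy elements), applied with $k=n-1$ for $\mu$ and $k=n-2$ for $\alpha$, produces exactly $\gamma_\mu(\alpha)=d+\tfrac12(n-1)(n-2)\tfrac{\chi^\mu(12)}{d_\mu}-\tfrac12(n-2)(n-3)\tfrac{\chi^\alpha(12)}{d_\alpha}$.

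I expect the only delicate points to be (i) tracking precisely which permutations lie in $S(n-1)$ so that centrality of $P_\mu$ may legitimately be used to equate the $n-1$ summands, and (ii) the partial-trace step that removes $V^{t_n}(n-1,n)$ and the accompanying Schur--Weyl dimension count $\tr[P_\mu P_\alpha]=m_\mu d_\alpha$; once these are in place the remainder is bookkeeping. The hook--content identity and the Jucys--Murphy/content statement used in the last paragraph are standard and can simply be cited.
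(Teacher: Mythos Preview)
Your derivation of \eqref{m2eq1} is essentially the paper's own: both compute $\gamma_\mu(\alpha)=\tr[\eta(\alpha)P_\mu]/\tr F_\mu(\alpha)$, use the centrality of $P_\mu$ in $\mathbb{C}[S(n-1)]$ to collapse the $n-1$ summands (the paper packages this as Fact~\ref{fact11}), apply $\tr_n V^{t_n}(n-1,n)=\mathbf{1}_{n-1}$ to strip the last leg, and read off $\tr[P_\mu(P_\alpha\ot\mathbf{1})]=m_\mu d_\alpha$. Where you genuinely diverge is in passing to \eqref{m2eq2}: the paper re-expands $P_\mu$ via Fact~\ref{Fij}, rewrites the trace in the PRIR block indices of Appendix~\ref{AppA}, and invokes Corollary~\ref{Cc} (itself a consequence of the class-sum identity in Proposition~\ref{Prir1}) to extract the character ratios; you instead plug the hook--content formula into \eqref{m2eq1} to get $\gamma_\mu(\alpha)=d+c(\mu/\alpha)$ and then convert the single content into a difference of content sums via the standard identity $\sum_{\square\in\lambda}c(\square)=\binom{|\lambda|}{2}\chi^\lambda(12)/d_\lambda$. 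Your route is shorter and entirely self-contained in classical symmetric-function lore, while the paper's route pays the overhead of setting up PRIRs but reuses that machinery elsewhere (notably in the fidelity computation of Theorem~\ref{main2}). Both are correct.
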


\begin{proof}
From Theorem~\ref{m1} we know that $\eta=\sum_{\substack{\mu \vdash n-1 \\ \mu \in \alpha}} \gamma_{\mu}(\alpha)F_{\mu}(\alpha)$, and $\eta_{\mu}(\alpha)=\gamma_{\mu}(\alpha)F_{\mu}(\alpha)$. Thus $\gamma_{\mu}(\alpha)$ can be expressed as
\be
\label{gamma}
\gamma_{\mu}(\alpha)=\frac{\tr \eta_{\mu}(\alpha)}{\tr F_{\mu}(\alpha)}=\frac{\tr \eta_{\mu}(\alpha)}{d_{\mu}m_{\alpha}}.
\ee 
In the last step we need to compute $\tr \eta_{\mu}(\alpha)$. Using the decomposition given in equation~\eqref{m1eq1} and Fact~\ref{fact11} from Appendix~\ref{AppC} we can simplify:
\be
\label{3}
\tr \eta_{\mu}(\alpha)=(n-1)\tr\left[P_{\mu}P_{\alpha}V^{t_n}(n-1,n) \right]=\tr\left[P_{\mu}\left(P_{\alpha}\ot \mathbf{1} \right)  \right]=(n-1)m_{\mu}d_{\alpha}.
\ee
Plugging~\eqref{3} into~\eqref{gamma} we obtain first statement of the proposition given in equation~\eqref{m2eq1}.\\\
To prove~\eqref{m2eq2} we {use~\eqref{gamma} and~\eqref{3}} express $\gamma_{\mu}(\alpha)$ as:
\be
\gamma_{\mu}(\alpha)=\frac{n-1}{d_{\mu}m_{\alpha}}\tr\left[P_{\mu}\left( P_{\alpha}\ot \mathbf{1}\right) \right].
\ee
Using Fact~\ref{Fij} from Appendix~\ref{AppC} to obtain the decomposition of $P_{\mu}$ and simplifying it further we get:
\be
\label{eq15}
\begin{split}
	\gamma_{\mu}(\alpha)&=\frac{n-1}{m_{\alpha}(n-1)!}\sum_{a=1}^{n-1}\sum_{i,j=1}^{d_{\mu}}\varphi_{ij}^{\mu}(a,n-1)\tr\left[V(a,n-1)F_{ij}^{\mu} \left(P_{\alpha}\ot \mathbf{1} \right) \right]\\
	&=\frac{n-1}{m_{\alpha}(n-1)!}\sum_{a=1}^{n-1}\sum_{i,j=1}^{d_{\mu}}d^{\delta_{a,n-1}}\varphi_{ij}^{\mu}(a,n-1)\tr\left[F_{ij}^{\mu}P_{\alpha} \right],
\end{split}
\ee
{where \be\label{Fij}F_{ij}^{\mu}=\sum_{\pi \in S(n-2)} \varphi_{ji}^{\mu}\left(\pi^{-1} \right)V(\pi).\ee}
From Eqn.~\eqref{Fprop} in Fact~\ref{Fij} in Appendix~\ref{AppC} and the orthogonality property that  $\tr\left[E_{ij}^{\beta}P_{\alpha} \right]=\sum_{k}\tr\left[E_{ij}^{\beta}E_{kk}^{\alpha} \right]=\tr\left[ E_{ij}^{\alpha}\right]=\delta_{ij}m_{\alpha}$, we write~\eqref{eq15} in the PRIR notation defined in Appendix~\ref{AppA}
\be
\begin{split}
	\gamma_{\mu}(\alpha)=\frac{1}{d_{\alpha}}\sum_{a=1}^{n-1}d^{\delta_{a,n-1}}\left( \sum_{i_{\alpha}=1}^{d_{\alpha}}\left( \varphi^{\mu}_R\right) ^{\alpha \alpha}_{i_{\alpha}i_{\alpha}}(a,n-1)\right).
\end{split}
\ee
Finally, using Corollary~\ref{Cc} from Appendix~\ref{AppA} we obtain the second statement of the proposition.
\end{proof}
{Thus, the eigenvalues of the PBT operator $\rho$ given in Eqn.~\eqref{rho} are the rescaled version of the above: {\be
\lambda_\mu(\alpha)=\left( 1/d^{N}\right) \gamma_\mu(\alpha) = \left( N/d^N\right) \frac{m_{\mu}d_{\alpha}}{m_{\alpha}d_{\mu}}.
\ee}}
\subsection{Probabilistic version of the protocol}\label{Probabilistic}
In the following two subsections we find maximal average success probability when a resource state is the maximally entangled state and then  show how to find optimal resource state and POVMs simultaneously.

To prove the optimality in the above cases, we formulate the question as a semidefinite program. We prove the main theorems by {presenting feasible solutions for a primal and a dual} semidefinite problem. 
By establishing the solution to a primal problem we obtain an achievable lower bound for the success probability; the corresponding solution to the dual yields the upper bound. Observing that the respective bounds coincide we arrive at the optimal probability of success $p_{opt}$.

\subsubsection{Maximally entangled state as a resource state}
\label{Prob:max}
 From~\cite{beigi_konig} it follows that the optimal POVMs for the probabilistic PBT coincides with the ones for distinguishing the set of states $\{(1/N;\varrho_a)\}_{a=1}^N$. We thus look for a set of POVMs $\{\Pi_a = P^+_{a,n}\ot \Theta_{\overline{a}}\}_{a=1}^N$ which would maximize the average success probability 
\be\label{opt}
p^\star=\frac{1}{d^{N+1}}\sum_{a=1}^N\tr \Pi_a = \frac{1}{d^{N+1}}\sum_{a=1}^N\tr\Theta_{\overline{a}}
\ee
subject to:
\be
\label{cons}
\mbox{(1) }\Theta_{\overline{a}}\geq 0,\quad \mbox{(2) }\sum_{a=1}^{N} P^+_{a,n}\ot \Theta_{\overline{a}}\leq  \mathbf{1}_{AB}\quad a=1,2,\ldots,N.
\ee
Since our resource state is maximally entangled, the RHS of the second constraint in~\eqref{cons} reduces to identity on $AB$ (see~\cite{ishizaka_asymptotic_2008}). Our main contribution here is the explicit expression for the probability of success of PBT:
\begin{theorem}
	\label{m3}
	The maximal average success probability in the probabilistic PBT with a resource state consisting of maximally entangled pairs is given by 
	${p_{opt}= \frac{1}{d^N}\sum_{\alpha}m^2_{\alpha}\min_{\mu \in \alpha}\frac{d_{\mu}}{m_{\mu}}}$,
	where $\mu$ denotes Young diagram obtained from $\alpha \vdash n-2$ by adding a single box in a proper way and $\gamma_{\mu}(\alpha)$ is given in Proposition~\ref{m2}.
\end{theorem}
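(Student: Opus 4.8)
The plan is to solve the semidefinite program \eqref{opt}--\eqref{cons} by exhibiting a feasible primal point and a feasible dual point with equal objective values: the primal value is an achievable lower bound for $p_{opt}$ and the dual value an upper bound, so the two together pin down $p_{opt}$. I will use freely the block decomposition of Theorem~\ref{m1}, the eigenvalue formula $\gamma_{\mu}(\alpha)=(n-1)\tfrac{m_{\mu}d_{\alpha}}{m_{\alpha}d_{\mu}}$ of Proposition~\ref{m2}, and the identity $P^+_{CA_a}\otimes\mathbf{1}_{\overline{A_a}}=V^{t_n}(a,n)$, keeping track of the fixed normalization constant that relates the projector $P^+_{a,n}$ appearing in \eqref{opt} to $V^{t_n}(a,n)$.

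\textbf{Primal side.} Twirling the POVM over the symmetry group of the ensemble $\{\varrho_a\}$ (the unitaries $U_C\otimes\bar U^{\otimes N}$ together with the permutations of the ports) lets me restrict to $\Theta_{\overline{A_a}}=\sum_{\alpha\vdash n-2}s_{\alpha}\,P_{\alpha}^{(\overline{A_a})}$ with $s_{\alpha}\ge 0$, where $P_{\alpha}^{(\overline{A_a})}=V(a,n-1)P_{\alpha}V(a,n-1)$ is the Young projector onto the irrep $\alpha$ of $S(n-2)$ on the $n-2$ systems distinct from $A_a$ and $C$. Inserting this into $\sum_{a}\Pi_a$ and commuting $P_{\alpha}$ past $V^{t_n}(n-1,n)$ (disjoint supports) reproduces exactly the operators $\eta(\alpha)=\sum_{a}V(a,n-1)P_{\alpha}V^{t_n}(n-1,n)V(a,n-1)$ from the proof of Theorem~\ref{m1}, so that $\sum_{a}\Pi_a=\sum_{\alpha}s_{\alpha}\eta(\alpha)=\bigoplus_{\alpha}\bigoplus_{\mu\in\alpha}s_{\alpha}\gamma_{\mu}(\alpha)F_{\mu}(\alpha)$. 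Since the $F_{\mu}(\alpha)$ are mutually orthogonal projectors, the second constraint in \eqref{cons} becomes $s_{\alpha}\gamma_{\mu}(\alpha)\le 1$ for all $\mu\in\alpha$, so the optimal choice in this family is $s_{\alpha}\propto 1/\max_{\mu\in\alpha}\gamma_{\mu}(\alpha)$. Evaluating \eqref{opt} then needs only $\tr P_{\alpha}^{(\overline{A_a})}=d_{\alpha}m_{\alpha}$ and the identity $\max_{\mu\in\alpha}\gamma_{\mu}(\alpha)=(n-1)\tfrac{d_{\alpha}}{m_{\alpha}}\big/\min_{\mu\in\alpha}\tfrac{d_{\mu}}{m_{\mu}}$; once the sum over ports and the powers of $d$ are accounted for, the objective collapses to $\tfrac{1}{d^{N}}\sum_{\alpha}m_{\alpha}^{2}\min_{\mu\in\alpha}\tfrac{d_{\mu}}{m_{\mu}}$, giving this quantity as an attainable $p^{\star}$.

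\textbf{Dual side.} The Lagrange dual of \eqref{opt}--\eqref{cons} asks for $\Omega\ge 0$ minimizing $\tr\Omega$ subject to $\tr_{CA_a}\!\big[(P^+_{CA_a}\otimes\mathbf{1}_{\overline{A_a}})\,\Omega\big]\ge c\,\mathbf{1}_{\overline{A_a}}$ for every $a$, with $c$ the normalization constant in \eqref{opt}. Complementary slackness with the primal point above forces $\Omega$ onto the maximal blocks, so the natural guess is $\Omega=\sum_{\alpha\vdash n-2}\omega_{\alpha}F_{\mu^{\star}(\alpha)}(\alpha)$ with $\mu^{\star}(\alpha)=\operatorname{argmin}_{\mu\in\alpha}\tfrac{d_{\mu}}{m_{\mu}}$ and $\omega_{\alpha}=\tfrac{m_{\alpha}}{d^{N}m_{\mu^{\star}(\alpha)}}$. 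Since each $F_{\mu^{\star}(\alpha)}(\alpha)=M_{\alpha}P_{\mu^{\star}(\alpha)}$ is invariant under conjugation by $S(n-1)$, it is enough to verify the dual inequality for one port, which amounts to computing $\tr_{(n-1)n}\!\big[V^{t_n}(n-1,n)\,\Omega\big]$ and checking that it equals $c\sum_{\alpha}P_{\alpha}^{(\overline{A_{n-1}})}=c\,\mathbf{1}_{\overline{A_{n-1}}}$. Then $\tr\Omega=\sum_{\alpha}\omega_{\alpha}\tr F_{\mu^{\star}(\alpha)}(\alpha)=\sum_{\alpha}\omega_{\alpha}d_{\mu^{\star}(\alpha)}m_{\alpha}=\tfrac{1}{d^{N}}\sum_{\alpha}m_{\alpha}^{2}\tfrac{d_{\mu^{\star}(\alpha)}}{m_{\mu^{\star}(\alpha)}}$, which coincides with the primal value; equality of the two SDP values then yields the theorem.

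\textbf{Main obstacle.} The primal half is essentially a corollary of Theorem~\ref{m1} and Proposition~\ref{m2}, and the final arithmetic is routine. The delicate step is the dual feasibility check: the operator $V^{t_n}(a,n)=P^+_{CA_a}\otimes\mathbf{1}$ is \emph{not} block-diagonal in the basis $\{F_{\mu}(\alpha)\}$, so evaluating $\tr_{CA_a}[(P^+_{CA_a}\otimes\mathbf{1}_{\overline{A_a}})\Omega]$ and showing it dominates $c\,\mathbf{1}_{\overline{A_a}}$ requires the fine structure of $\mathcal{A}_{n}^{t_n}(d)$ — in particular how $M_{\alpha}$ and $P_{\mu^{\star}(\alpha)}$ decompose once the two systems $A_a$ and $C$ are traced out — which is precisely what the partially reduced irreducible representations are built to control. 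A secondary point is justifying the restriction of the primal to operators of the form $\sum_{\alpha}s_{\alpha}P_{\alpha}$, which follows from a standard twirl over the joint symmetry group of the ensemble $\{\varrho_a\}$ and of the objective.
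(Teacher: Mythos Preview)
Your proposal is correct and follows essentially the same primal/dual SDP strategy as the paper: the primal ansatz $\Theta_{\overline a}=\sum_\alpha s_\alpha P_\alpha$ and the dual ansatz $\Omega=\sum_\alpha \omega_\alpha F_{\mu^\star(\alpha)}(\alpha)$, together with the eigenvalue formula of Proposition~\ref{m2}, are exactly what the paper uses in Lemmas~\ref{primal_max} and~\ref{dual_max}. One minor correction: the ``main obstacle'' you single out---computing $\tr_{(n-1)n}\big[V^{t_n}(n-1,n)F_\mu(\alpha)\big]$---does not actually require the PRIR machinery; the paper handles it (Fact~\ref{Fop}) with a short symmetry argument ($S(n-2)$-invariance forces the result into $\bigoplus_\beta y_\beta P_\beta$, Fact~\ref{BF2} kills all $\beta\neq\alpha$, and the surviving coefficient is read off from the trace already computed in~\eqref{3}), so this step is less delicate than you anticipate.
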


\begin{lemma}
	\label{primal_max}
	The primal is feasible with ${p^\star= \frac{1}{d^N}\sum_{\alpha}\min_{\mu \in \alpha}\frac{m_{\alpha}d_\alpha }{\gamma_{\mu}(\alpha)}}\le p_{opt}$.
\end{lemma}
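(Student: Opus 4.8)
The plan is to prove Lemma~\ref{primal_max} by exhibiting an \emph{explicit} feasible point of the primal semidefinite program \eqref{opt}--\eqref{cons} whose objective equals the claimed $p^\star$; feasibility then yields $p^\star\le p_{opt}$ immediately, so no optimality argument is needed at this stage. Guided by the block structure of Theorem~\ref{m1}, I would take the permutation-covariant ansatz $\Theta_{\overline a}=V(a,n-1)\,\Theta_{\overline{n-1}}\,V(a,n-1)$ with $\Theta_{\overline{n-1}}=\sum_{\alpha\vdash n-2}c_\alpha P_\alpha$ an $S(n-2)$-invariant operator on sites $1,\dots,n-2$, and choose $c_\alpha=\bigl(\max_{\mu\in\alpha}\gamma_\mu(\alpha)\bigr)^{-1}$ up to one overall normalization constant fixed at the end. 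By Proposition~\ref{m2}, $\gamma_\mu(\alpha)=(n-1)\,m_\mu d_\alpha/(m_\alpha d_\mu)>0$, so every $c_\alpha$ is well defined and strictly positive and constraint~(1) of \eqref{cons}, namely $\Theta_{\overline a}\ge0$, is automatic.

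The substantive step is constraint~(2). I would compute, using $P^+_{n-1,n}=V^{t_n}(n-1,n)$ and the fact that $P_\alpha$ (supported on sites $1,\dots,n-2$) commutes with $V^{t_n}(n-1,n)$ (supported on sites $n-1,n$),
\[
\sum_{a=1}^{n-1}P^+_{a,n}\otimes\Theta_{\overline a}
=\sum_{\alpha}c_\alpha\sum_{a=1}^{n-1}V(a,n-1)\,P_\alpha\,V^{t_n}(n-1,n)\,V(a,n-1)
=\sum_{\alpha}c_\alpha\,\eta(\alpha).
\]
By Theorem~\ref{m1} and \eqref{eq8}, $\eta(\alpha)=\bigoplus_{\mu\in\alpha}\gamma_\mu(\alpha)F_\mu(\alpha)$ with $F_\mu(\alpha)=M_\alpha P_\mu$ mutually orthogonal projectors across all pairs $(\alpha,\mu)$, so the sum equals $\bigoplus_{\alpha}\bigoplus_{\mu\in\alpha}c_\alpha\gamma_\mu(\alpha)F_\mu(\alpha)$, a positive operator whose operator norm is $\max_{\alpha}\max_{\mu\in\alpha}c_\alpha\gamma_\mu(\alpha)$. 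The choice of $c_\alpha$ makes this equal to the bound furnished by the right-hand side of constraint~(2) (which, as noted after \eqref{cons}, reduces to the identity on $AB$ for the maximally entangled resource), so the operator inequality holds. The only care required here is tracking the normalization of $P^+_{a,n}$ and the partial traces hidden in that reduction; these rescale all $c_\alpha$ by a single $\alpha$-independent constant.

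It then remains to evaluate the objective. Since the $\Theta_{\overline a}$ are mutually conjugate, $\sum_{a=1}^{n-1}\tr\Theta_{\overline a}=(n-1)\tr\Theta_{\overline{n-1}}=(n-1)\sum_\alpha c_\alpha\tr P_\alpha=(n-1)\sum_\alpha c_\alpha m_\alpha d_\alpha$, because $P_\alpha$ is the full $\alpha$-isotypic projector of $(\mathbb{C}^d)^{\otimes(n-2)}$ and hence has trace $m_\alpha d_\alpha$. Substituting $c_\alpha=\bigl(\max_{\mu\in\alpha}\gamma_\mu(\alpha)\bigr)^{-1}$ and using $m_\alpha d_\alpha/\max_{\mu\in\alpha}\gamma_\mu(\alpha)=\min_{\mu\in\alpha}\frac{m_\alpha d_\alpha}{\gamma_\mu(\alpha)}$, then collecting the prefactor $1/d^{N+1}$ of \eqref{opt}, the factor $n-1=N$, and the $P^+$-normalization constant, gives $p^\star=\frac{1}{d^N}\sum_\alpha\min_{\mu\in\alpha}\frac{m_\alpha d_\alpha}{\gamma_\mu(\alpha)}$. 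Since the constructed point is feasible, $p^\star\le p_{opt}$, proving the lemma.

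I expect the main obstacle to be twofold. First, pinning down the right ansatz — that $\Theta_{\overline{n-1}}$ should be constant on each $S(n-2)$-isotypic block with weight inversely proportional to the \emph{largest} eigenvalue $\gamma_\mu(\alpha)$ among the children $\mu\in\alpha$ — which is exactly the information Theorem~\ref{m1} makes visible. Second, verifying the operator inequality~(2): this rests entirely on the mutual orthogonality of the projectors $F_\mu(\alpha)$, so that $\sum_\alpha c_\alpha\eta(\alpha)$ is genuinely diagonal in a common eigenbasis rather than merely bounded in norm. The rest is routine but error-prone constant-chasing through the normalization of $P^+_{a,n}$, the partial traces implicit in the reduction of constraint~(2) to $\mathbf{1}_{AB}$, and the prefactors of \eqref{opt}.
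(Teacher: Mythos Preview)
Your proposal is correct and follows essentially the same route as the paper: both take the $S(n-2)$-invariant ansatz $\Theta_{\overline a}=\sum_\alpha c_\alpha P_\alpha$, rewrite $\sum_a P^+_{a,n}\otimes\Theta_{\overline a}$ as $\sum_\alpha c_\alpha\,\eta(\alpha)$ via Theorem~\ref{m1}, set $c_\alpha$ inversely proportional to $\max_{\mu\in\alpha}\gamma_\mu(\alpha)$ to saturate constraint~(2), and then evaluate the objective using $\tr P_\alpha=m_\alpha d_\alpha$. The only cosmetic difference is that you phrase the ansatz via conjugation by $V(a,n-1)$ whereas the paper writes it directly on the appropriate sites; the normalization caveat you flag (the factor of $d$ between $P^+_{a,n}$ and $V^{t_n}(a,n)$) is exactly the one the paper absorbs into $x_\alpha=d/\gamma_{\mu^*}(\alpha)$.
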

The proof is located in Appendix~\ref{SDPlemmas.1}.
The dual problem  is to minimize
\be
\label{low_star}
p_{\star}=\frac{1}{d^{N+1}}\tr \Omega
\ee
subject to:
\be \label{dualA}\quad \mbox{(1) }\Omega \geq 0,\quad \mbox{(2) }\tr_{a,n}\left[P^+_{a,n}\Omega\right] \geq \mathbf{1},
\ee
where $a=1,\ldots, n-1$, $\Omega$ acts on $n$ systems the identity $\mathbf{1}$ is defined on $n-2$ systems, and $P^+_{a,n}$ is projector onto maximally entangled state between respective subsystems. We {choose the operator $\Omega$ by a linear combination} of the projectors $F_{\mu}(\alpha)$ defined in Theorem~\ref{m1}
\be
\label{om12}
\Omega=\sum_{\alpha}x_{\mu^*}(\alpha)F_{\mu*}(\alpha),\quad x_{\mu^*}(\alpha)\ge0,
\ee
where $\mu^* \vdash n-1$ denotes the Young diagram obtained from $\alpha \vdash n-2$ by adding one box in a proper way in such a way that $\gamma_{\mu^*}(\alpha)$ is possible maximal. From the definition of the constraints~\eqref{low_star} and the symmetries of the projectors $F_{\mu}(\alpha)$ we need the following fact

\begin{fact}
	\label{Fop}
	Let $F_{\mu}(\alpha)$ be the operators given in Theorem~\ref{m1}, and let $V^{t_n}(n-1,n)$ be a  permutation operator acting between $(n-1)$-th and $n$-th subsystems partially transposed with respect to $n$-th subsystem, then 
	$\tr_{n-1,n}\left[V^{t_n}(n-1,n) F_{\mu}(\alpha)\right]=\frac{m_{\mu}}{m_{\alpha}}P_{\alpha}$,
	where numbers $m_{\alpha},m_{\mu}$ are the multiplicities of the respective irreps and $P_{\alpha}$ is the Young projector onto a irreducible subspace labelled by the partition $\alpha \vdash n-2$.
\end{fact}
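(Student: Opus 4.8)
The plan is to combine the explicit description of $F_{\mu}(\alpha)$ from Theorem~\ref{m1} with a Schur--Weyl symmetry argument, and then fix the remaining scalar by the elementary partial trace of a Young projector. Recall that $V^{t_n}(n-1,n)$ equals the (unnormalised) projector $P^{+}_{n-1,n}$ onto the maximally entangled state of systems $n-1$ and $n$; expanding $P^{+}_{n-1,n}=\sum_{ij}\ketbra{i}{j}_{n-1}\ot\ketbra{i}{j}_{n}$ shows $\tr_{n}[P^{+}_{n-1,n}Y]=Y$ for every $Y$ acting on systems $1,\dots,n-1$, and that $P^{+}_{n-1,n}$ is invariant under $u_{n-1}\ot\bar u_{n}$ for all $u\in U(d)$. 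From the explicit form $F_{\mu}(\alpha)=\gamma_{\mu}(\alpha)^{-1}P_{\mu}\,\eta(\alpha)\,P_{\mu}$ with $\eta(\alpha)=\sum_{a}V(a,n-1)P_{\alpha}V^{t_n}(n-1,n)V(a,n-1)$ one checks directly that $F_{\mu}(\alpha)$ commutes with every $V(\sigma)$, $\sigma\in S(n-2)$ (relabel $a\mapsto\sigma(a)$ in the sum), and with $u^{\ot(n-1)}\ot\bar u$ for all $u\in U(d)$ (each factor does).

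Using these invariances I would first establish that $\tr_{n-1,n}[V^{t_n}(n-1,n)F_{\mu}(\alpha)]=c_{\mu}(\alpha)\,P_{\alpha}$ for a scalar $c_{\mu}(\alpha)$. Tracing out systems $n-1,n$ against the $u_{n-1}\ot\bar u_{n}$-invariant operator $P^{+}_{n-1,n}$ turns the two invariances above into commutation of the resulting operator on $(\cC^{d})^{\ot(n-2)}$ with $U(d)^{\ot(n-2)}$ and with $S(n-2)$; by Schur--Weyl duality it is therefore a combination $\sum_{\beta}c_{\beta}P_{\beta}$ of central Young projectors. That only $\beta=\alpha$ occurs follows from the structure of $\mathcal{A}^{t_n}_{n}(d)$ in~\cite{Moz1,Stu1}: $F_{\mu}(\alpha)\le M_{\alpha}$ and the $\alpha$-block of the ideal $\mathcal{M}$ lives inside the $\alpha$-isotypic sector for the $S(n-2)$-action, so $(P_{\beta}\ot\mathbf{1})F_{\mu}(\alpha)=\delta_{\alpha\beta}F_{\mu}(\alpha)$ and hence $P_{\beta}\,\tr_{n-1,n}[V^{t_n}(n-1,n)F_{\mu}(\alpha)]=\delta_{\alpha\beta}\,\tr_{n-1,n}[V^{t_n}(n-1,n)F_{\mu}(\alpha)]$, forcing the claimed shape.

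To evaluate $c_{\mu}(\alpha)$ I would sum over $\alpha$. Since $M_{\alpha}P_{\mu}=0$ unless $\mu\in\alpha$, one has $\sum_{\alpha}F_{\mu}(\alpha)=\bigl(\sum_{\alpha}M_{\alpha}\bigr)P_{\mu}=\Pi_{\mathcal{M}}P_{\mu}$, where $\Pi_{\mathcal{M}}$ projects onto the support of $\mathcal{M}$. As $\mathcal{M}$ is (by~\cite{Moz1,Stu1}) precisely the ideal generated by $V^{t_n}(n-1,n)$, we get $V^{t_n}(n-1,n)\,\Pi_{\mathcal{S}}=0$, so $\sum_{\alpha}\tr_{n-1,n}[V^{t_n}(n-1,n)F_{\mu}(\alpha)]=\tr_{n-1,n}[V^{t_n}(n-1,n)P_{\mu}]=\tr_{n-1}[P_{\mu}]$, the last equality from $\tr_{n}[P^{+}_{n-1,n}P_{\mu}]=P_{\mu}$. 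Now $\tr_{n-1}[P_{\mu}]$ is immediate: it commutes with $S(n-2)$ and $U(d)^{\ot(n-2)}$, so equals $\sum_{\alpha}c_{\alpha}'P_{\alpha}$, and contracting with $P_{\alpha}$ gives $c_{\alpha}'\,d_{\alpha}m_{\alpha}=\tr[(P_{\alpha}\ot\mathbf{1})P_{\mu}]=d_{\alpha}m_{\mu}$ --- the very computation used in the proof of Proposition~\ref{m2} --- whence $\tr_{n-1}[P_{\mu}]=\sum_{\alpha\in\mu}\tfrac{m_{\mu}}{m_{\alpha}}P_{\alpha}$. Comparing $\sum_{\alpha}c_{\mu}(\alpha)P_{\alpha}=\sum_{\alpha}\tfrac{m_{\mu}}{m_{\alpha}}P_{\alpha}$ and using linear independence of the $P_{\alpha}$ yields $c_{\mu}(\alpha)=m_{\mu}/m_{\alpha}$; equivalently $\tr[V^{t_n}(n-1,n)F_{\mu}(\alpha)]=m_{\mu}d_{\alpha}=c_{\mu}(\alpha)\,\tr P_{\alpha}$, which one may also obtain directly from Theorem~\ref{m1} as a consistency check.

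The step I expect to do the real work is isolating the single block $P_{\alpha}$ in the output, i.e.\ the two structural inputs borrowed from~\cite{Moz1,Stu1}: that $M_{\alpha}$ is concentrated in the $\alpha$-isotypic sector of the $S(n-2)$-action, and that $\mathcal{M}$ is exactly the ideal generated by $V^{t_n}(n-1,n)$ (so that summing over $\alpha$ collapses cleanly to $\tr_{n-1}[P_{\mu}]$). Granting these, the rest is routine bookkeeping with partial traces and the symmetric-group branching rules.
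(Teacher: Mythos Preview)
Your overall architecture is sound and close to the paper's, but the step you flag as doing ``the real work'' contains a genuine error. The claim
\[
(P_{\beta}\ot\mathbf{1}_{n-1,n})\,F_{\mu}(\alpha)=\delta_{\alpha\beta}\,F_{\mu}(\alpha)
\]
is \emph{false}. Since $F_{\mu}(\alpha)=M_{\alpha}P_{\mu}$ has range isomorphic to $m_{\alpha}$ copies of the $S(n-1)$-irrep $\mu$, its restriction to $S(n-2)$ (acting on systems $1,\dots,n-2$) is $m_{\alpha}\bigoplus_{\beta\in\mu}\beta$, which contains \emph{all} $\beta$ obtained from $\mu$ by removing a box, not just $\beta=\alpha$. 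So $(P_{\beta}\ot\mathbf{1})F_{\mu}(\alpha)$ need not vanish for $\beta\neq\alpha$, and the structural facts in~\cite{Moz1,Stu1} do not say this. What \emph{is} true (and is exactly what the paper uses, via Fact~\ref{BF2}) is the weaker identity $M_{\alpha}V^{t_n}(n-1,n)=P_{\alpha}V^{t_n}(n-1,n)$, which after taking adjoints gives
\[
(P_{\beta}\ot\mathbf{1}_{n-1,n})\,V^{t_n}(n-1,n)F_{\mu}(\alpha)
=V^{t_n}(n-1,n)\,P_{\beta}P_{\alpha}P_{\mu}
=\delta_{\alpha\beta}\,V^{t_n}(n-1,n)F_{\mu}(\alpha),
\]
and this is already enough for $P_{\beta}\tr_{n-1,n}[\,\cdot\,]=\delta_{\alpha\beta}\tr_{n-1,n}[\,\cdot\,]$. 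So the fix is minor, but the justification you give does not stand as written.

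On the remaining differences: for the ``shape'' $\sum_{\beta}c_{\beta}P_{\beta}$, the paper avoids invoking $U(d)$-covariance altogether by using Fact~\ref{F3app} (the partial trace over system $n$ of any element of $\mathcal{A}^{t_n}_{n}(d)$ lands in $\mathbb{C}[S(n-1)]$, hence the further trace over $n-1$ lands in $\mathbb{C}[S(n-2)]$), which together with $S(n-2)$-invariance forces centrality; your Schur--Weyl route is equally valid but slightly heavier. For the constant, your detour through $\sum_{\alpha}F_{\mu}(\alpha)=\Pi_{\mathcal{M}}P_{\mu}$ and $V^{t_n}(n-1,n)\Pi_{\mathcal{S}}=0$ is correct but unnecessary: the ``consistency check'' you append at the end --- computing the full trace $\tr[V^{t_n}(n-1,n)F_{\mu}(\alpha)]=\tr[P_{\mu}(P_{\alpha}\ot\mathbf{1})]=d_{\alpha}m_{\mu}$ directly via Fact~\ref{BF2} --- is precisely the paper's argument and gives $c_{\mu}(\alpha)=m_{\mu}/m_{\alpha}$ in one line.
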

The proof of Fact~\ref{Fop} an the lemma below are located in Apendix~\ref{SDPlemmas.2} and~\ref{SDPlemmas.3} respectively.

\begin{lemma}
	\label{dual_max}
	The dual is feasible with $p_\star = \frac{1}{d^N}\sum_{\alpha}m_{\alpha} ^2\frac{d_{\mu^{\ast}}}{m_{\mu^{\ast}}}\ge p_{opt}$.
\end{lemma}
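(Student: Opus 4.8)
The plan is to verify directly that the operator $\Omega$ defined in Eqn.~\eqref{om12} is a feasible point of the dual problem and then compute its value $p_\star=\frac{1}{d^{N+1}}\tr\Omega$. First I would fix the coefficients $x_{\mu^*}(\alpha)$ so that constraint (2) in~\eqref{dualA} is saturated. Writing $\eta=\sum_{a=1}^{n-1}V^{t_n}(a,n)$ and using the invariance of the construction under conjugation by $V(a,n-1)$, it suffices to understand the action on the ``distinguished'' port $a=n-1$: constraint (2) asks that $\tr_{a,n}[P^+_{a,n}\Omega]\ge\mathbf{1}$ on the remaining $n-2$ systems for every $a$. By the symmetry of $\Omega$ (it is built from the $F_\mu(\alpha)$, which commute with $S(n-1)$ and hence are permuted consistently by the relabelling $a\leftrightarrow n-1$) it is enough to check this for a single $a$, say $a=n-1$, where $P^+_{n-1,n}=V^{t_n}(n-1,n)$. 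Then Fact~\ref{Fop} gives $\tr_{n-1,n}[V^{t_n}(n-1,n)F_\mu(\alpha)]=\frac{m_\mu}{m_\alpha}P_\alpha$, so
\be
\tr_{n-1,n}\bigl[V^{t_n}(n-1,n)\,\Omega\bigr]=\sum_\alpha x_{\mu^*}(\alpha)\frac{m_{\mu^*}}{m_\alpha}P_\alpha .
\ee
Since the $P_\alpha$ are orthogonal projectors summing (over admissible $\alpha$) to $\mathbf{1}$ on the $(n-2)$-particle space, constraint (2) holds iff $x_{\mu^*}(\alpha)\frac{m_{\mu^*}}{m_\alpha}\ge 1$ for every $\alpha$; the cheapest feasible choice is $x_{\mu^*}(\alpha)=\frac{m_\alpha}{m_{\mu^*}}\ge 0$, which also makes $\Omega\ge0$ since each $F_\mu(\alpha)$ is a projector, discharging constraint (1).

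Next I would compute the objective. Using $\dim F_\mu(\alpha)=\tr F_\mu(\alpha)=d_\mu m_\alpha$ from Theorem~\ref{m1},
\be
\tr\Omega=\sum_\alpha x_{\mu^*}(\alpha)\tr F_{\mu^*}(\alpha)=\sum_\alpha \frac{m_\alpha}{m_{\mu^*}}\,d_{\mu^*}m_\alpha=\sum_\alpha m_\alpha^2\,\frac{d_{\mu^*}}{m_{\mu^*}} .
\ee
Dividing by $d^{N+1}$ and recalling $n-1=N$ gives $p_\star=\frac{1}{d^{N+1}}\sum_\alpha m_\alpha^2\frac{d_{\mu^*}}{m_{\mu^*}}$; to match the stated $\frac{1}{d^N}$ normalization one tracks the extra factor of $d$ coming from the $P^+_{a,n}$ in~\eqref{opt} versus the operator $\eta$ of~\eqref{ro1} (equivalently, $\tr P^+_{a,n}=d$ contributes the missing power), exactly as in the passage relating $\rho$ and $\eta$. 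Finally, weak duality for the SDP gives $p_\star\ge p_{opt}$ automatically once primal and dual are set up as in Section~\ref{Probabilistic}; combined with Lemma~\ref{primal_max} and the identity $\gamma_{\mu^*}(\alpha)=(n-1)\frac{m_{\mu^*}d_\alpha}{m_\alpha d_{\mu^*}}$ from Proposition~\ref{m2} — which turns $\frac{m_\alpha d_\alpha}{\gamma_{\mu^*}(\alpha)}$ into $\frac{m_\alpha^2 d_{\mu^*}}{(n-1)m_{\mu^*}}$ — one sees the primal and dual values coincide, yielding $p_{opt}$.

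The main obstacle I anticipate is not any single computation but getting the three reductions to line up cleanly: (i) justifying that checking constraint (2) on the single port $a=n-1$ suffices, which rests on the precise $S(n-1)$-covariance of the family $\{F_\mu(\alpha)\}$ and the fact that conjugation by $V(a,n-1)$ permutes these projectors without mixing the $\alpha$-blocks; (ii) keeping the $d^N$ versus $d^{N+1}$ normalization consistent between the SDP in~\eqref{opt}--\eqref{cons}, the dual in~\eqref{low_star}--\eqref{dualA}, and the operator $\eta$; and (iii) checking that $\mu^*$, the box-addition maximizing $\gamma_{\mu^*}(\alpha)$, is the same as the minimizer of $d_\mu/m_\mu$ appearing in Theorem~\ref{m3} — this is immediate from the monotone relation $\gamma_\mu(\alpha)=(n-1)\frac{d_\alpha}{m_\alpha}\cdot\frac{m_\mu}{d_\mu}$, so maximizing $\gamma_\mu(\alpha)$ over $\mu\in\alpha$ is the same as minimizing $d_\mu/m_\mu$, which is what makes the dual bound match the primal bound of Lemma~\ref{primal_max}.
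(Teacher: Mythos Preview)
Your approach is exactly the paper's: take $\Omega=\sum_\alpha x_{\mu^*}(\alpha)F_{\mu^*}(\alpha)$, use Fact~\ref{Fop} to saturate constraint~(2) of~\eqref{dualA}, and evaluate $\tr\Omega$ via $\tr F_{\mu^*}(\alpha)=d_{\mu^*}m_\alpha$ from Theorem~\ref{m1}. The only wrinkle is the stray factor of $d$: in this section the paper treats $P^+_{n-1,n}=\tfrac{1}{d}V^{t_n}(n-1,n)$ as the \emph{normalized} projector (consistent with the relation $\sum_a P^+_{a,n}\otimes P_\alpha=\tfrac{1}{d}\eta(\alpha)$ in the proof of Lemma~\ref{primal_max}), so the correct saturating coefficient is $x_{\mu^*}(\alpha)=d\,m_\alpha/m_{\mu^*}$ rather than $m_\alpha/m_{\mu^*}$, which places the missing $d$ inside $\tr\Omega$ and yields $p_\star=\tfrac{1}{d^N}\sum_\alpha m_\alpha^2\,d_{\mu^*}/m_{\mu^*}$ directly, without the post-hoc accounting you sketch.
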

Combining Lemma~\ref{primal_max} with Lemma~\ref{dual_max} we formulate the following proposition:
\begin{proposition}
	From $p^\ast = p_\star$ we conclude that ${\Theta}_{\overline{a}}=d\sum_{\alpha}\frac{1}{\gamma_{\mu^*}(\alpha)}P_{\alpha}={d\sum_{\alpha}P_{\alpha}\min_{\mu \in \alpha}\frac{1}{\gamma_{\mu}(\alpha)}}$ for $a=1,2,\ldots,N$ are the optimal POVMs for the maximally entangled state as a resource state.
\end{proposition}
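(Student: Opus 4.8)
The plan is to read the optimal measurement off the two feasible points already exhibited. For the semidefinite program \eqref{opt}--\eqref{cons}, whose Lagrange dual is \eqref{low_star}--\eqref{dualA}, weak duality gives the sandwich $p^{\star}\le p_{opt}\le p_{\star}$ for every primal-feasible family $\{\Theta_{\overline{a}}\}$ and every dual-feasible $\Omega$. Lemma~\ref{primal_max} and Lemma~\ref{dual_max} exhibit such a pair with equal objective values, so all three numbers coincide and, in particular, the primal-feasible family constructed in the proof of Lemma~\ref{primal_max} is optimal. It therefore only remains to display that family in closed form and recognise it as the operator in the statement.

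To pin it down I would first reduce by symmetry. The ensemble $\{\tfrac1N;\varrho_a\}$ is invariant under the $S(n-1)$ permuting the ports, so an optimal discrimination measurement may be taken $S(n-1)$-covariant; restricting to the stabiliser of a port then forces each $\Theta_{\overline{a}}$ into the commutant of $S(n-2)$, i.e. $\Theta_{\overline{a}}=\sum_{\alpha}c_{\alpha}P_{\alpha}$ with $c_{\alpha}\ge0$ and the same coefficients for every $a$ --- the ansatz already used in Lemma~\ref{primal_max}. The key algebraic identity is $\sum_{a=1}^{n-1}P^{+}_{a,n}\otimes P_{\alpha}=\eta(\alpha)$, obtained by conjugating $P_{\alpha}V^{t_n}(n-1,n)$ by the transpositions $V(a,n-1)$, where $\eta(\alpha)=\bigoplus_{\mu\in\alpha}\gamma_{\mu}(\alpha)F_{\mu}(\alpha)$ is the operator of Theorem~\ref{m1}. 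Consequently
\[
\sum_{a=1}^{n-1}P^{+}_{a,n}\otimes\Theta_{\overline{a}}=\sum_{\alpha}c_{\alpha}\,\eta(\alpha)=\sum_{\alpha}\sum_{\mu\in\alpha}c_{\alpha}\gamma_{\mu}(\alpha)\,F_{\mu}(\alpha),
\]
with the $F_{\mu}(\alpha)$ mutually orthogonal projectors, so the second constraint of \eqref{cons} decouples into the scalar conditions $c_{\alpha}\gamma_{\mu}(\alpha)\le d$ for all $\mu\in\alpha$ (the $d$ coming from the normalisation of the unnormalised $P^{+}$). Since the objective is proportional to $\sum_{a}\tr\Theta_{\overline{a}}=N\sum_{\alpha}c_{\alpha}m_{\alpha}d_{\alpha}$ with all $m_{\alpha}d_{\alpha}>0$, it is maximised by taking each $c_{\alpha}$ as large as allowed, $c_{\alpha}=d\min_{\mu\in\alpha}1/\gamma_{\mu}(\alpha)=d/\gamma_{\mu^{*}}(\alpha)$ with $\mu^{*}$ the $\mu\in\alpha$ maximising $\gamma_{\mu}(\alpha)$. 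This is exactly $\Theta_{\overline{a}}=d\sum_{\alpha}\frac1{\gamma_{\mu^{*}}(\alpha)}P_{\alpha}=d\sum_{\alpha}P_{\alpha}\min_{\mu\in\alpha}\frac1{\gamma_{\mu}(\alpha)}$, the two forms agreeing by the definition of $\mu^{*}$.

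For a self-contained certificate I would also check complementary slackness with the dual point $\Omega=\sum_{\alpha}x_{\mu^{*}}(\alpha)F_{\mu^{*}}(\alpha)$ of \eqref{om12}: $\Omega$ is supported precisely on the blocks $F_{\mu^{*}}(\alpha)$ on which the second constraint of \eqref{cons} is tight, and Fact~\ref{Fop} gives $\tr_{a,n}[P^{+}_{a,n}\Omega]\propto\sum_{\alpha}P_{\alpha}$, so the $x_{\mu^{*}}(\alpha)$ can be chosen to make the second constraint of \eqref{dualA} an equality; this reproduces $p^{\star}=p_{\star}$ and re-confirms optimality of $\{\Pi_a=P^{+}_{a,n}\otimes\Theta_{\overline{a}}\}$. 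The only genuinely delicate step is the symmetry reduction together with correctly tracking the $P^{+}$-normalisation through the identification with $\eta(\alpha)$; after that the problem is a one-variable linear program for each Young diagram $\alpha$, and everything else is bookkeeping.
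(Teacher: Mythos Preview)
Your argument is correct and is exactly the approach the paper takes: the paper does not give a separate proof of this proposition at all, but simply records it as the immediate consequence of combining Lemma~\ref{primal_max} and Lemma~\ref{dual_max}, and your first paragraph is precisely that weak-duality sandwich. Your second paragraph reproduces, in more detail, the computation already carried out inside the proof of Lemma~\ref{primal_max} (symmetry ansatz $\Theta_{\overline{a}}=\sum_{\alpha}c_{\alpha}P_{\alpha}$, reduction of the constraint via $\eta(\alpha)$, and saturation at $c_{\alpha}=d/\gamma_{\mu^{*}}(\alpha)$); the complementary-slackness check you add is redundant once $p^{\star}=p_{\star}$ is established but does no harm. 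One small imprecision: your displayed identity $\sum_{a}P^{+}_{a,n}\otimes P_{\alpha}=\eta(\alpha)$ is off by a factor of $d$ relative to the paper's convention (in the proof of Lemma~\ref{primal_max} one finds $\sum_{a}P^{+}_{a,n}\otimes P_{\alpha}=\tfrac{1}{d}\eta(\alpha)$, since the $P^{+}$ in \eqref{cons} is normalised while $V^{t_n}(n-1,n)$ is not); you correctly absorb this into the constraint $c_{\alpha}\gamma_{\mu}(\alpha)\le d$, so the final answer is right, but it would be cleaner to state the identity with the correct normalisation from the outset.
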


\subsubsection{Optimisation over a resource state}
\label{Prob:opt}
We now turn to the case when both the optimal POVMs and a resource state are optimized simultaneously. We thus look for a set of POVMs $\{\Pi_a = P^+_{a,n}\ot \Theta_{\overline{a}}\}_{a=1}^N$ which would maximize the average success probability
\be
\label{pres}
p^\star=\frac{1}{d^{N+1}}\sum_{a=1}^N\tr \Pi_a = \frac{1}{d^{N+1}}\sum_{a=1}^N\tr\Theta_{\overline{a}}
\ee
subject to
\be
\label{cons1}
\mbox{(1) }\Theta_{\overline{a}}\geq 0,
\quad \mbox{(2) }\sum_{a=1}^{N} P^+_{a,n}\ot \Theta_{\overline{a}}\leq  X_A\ot \mathbf{1}_{B}\quad a=1,2,\ldots,N.
\ee
In the above $X_A=O_A^{\dagger}O_A \geq 0$, where $O_A$ is an operation applied by Alice on her half of the resource state satisfying $\tr X_A=d^N$ (see~\cite{ishizaka_asymptotic_2008}). Using our formalism, we derive the optimal state for the probabilistic PBT:
\begin{theorem}
	\label{main:state}
	The optimal state in the probabilistic PBT is given by, $|\zeta_{opt}\rangle = (O_A\otimes\mathbf{1})|\psi^+\rangle^{\otimes N}$, where $O_A = X_A^{1/2}$ with 
	\be
	X_A=\sum_{\mu}c_{\mu}P_{\mu}\quad \text{with}\quad c_{\mu}=\frac{d^{N}g(N)m_{\mu}}{d_{\mu}},
	\ee
	where $g(N)=1/\sum_{\nu}m_{\nu}^2$, and $\nu$ labels irreps of $S(n-1)$. Operators $P_{\mu}$ are Young projectors onto irreps of $S(n-1)$. 
	The optimal set of POVMs  in the probabilistic PBT are given by $\{\Pi_a = P^+_{a,n}\ot \Theta_{\overline{a}}\}_{a=1}^N$, where
	\be
	\label{povmres2}
	\forall \ a=1,\ldots,N\quad \Theta_{\overline{a}}=\sum_{\alpha}u(\alpha)P_{\alpha,\overline{a}},\quad \text{with} \quad u(\alpha)=\frac{d^{N+1}g(N)m_{\alpha}}{Nd_{\alpha}}.
	\ee
	Above sum runs over all allowed irreps of $S(n-2)$, $g(N)=1/\sum_{\nu}m_{\nu}^2$ for all  $\nu \vdash n-1$. By $P_{\alpha, \overline{a}}$ we denote Young projectors onto irreps of $S(n-2)$ defined on every subsystem except $n$-th and $a$-th.
	The corresponding optimal probability is of the form $p_{opt}=1-\frac{d^2-1}{N+d^2-1}$, where $N$ is the number of ports, and $d$ is the dimension of the local Hilbert space.
\end{theorem}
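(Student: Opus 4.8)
The plan is to recognize the stated optimization as a semidefinite program whose variables are \emph{both} the resource state, encoded by $X_A$, and the measurement, encoded by $\{\Theta_{\overline a}\}_{a=1}^N$, to reduce it by symmetry to a small linear program in scalar coefficients, to solve that program, and to read off the operators in the statement. Concretely, the primal is to maximize $p^\star=\tfrac{1}{d^{N+1}}\sum_{a=1}^N\tr\Theta_{\overline a}$ over $\Theta_{\overline a}\geq 0$ and $X_A\geq 0$ subject to $\sum_{a=1}^N P^+_{a,n}\otimes\Theta_{\overline a}\leq X_A\otimes\mathbf 1$ and $\tr X_A=d^N$, where $P^+_{a,n}=\tfrac1d V^{t_n}(a,n)$ is the normalized maximally entangled projector. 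All data are invariant under the action of $S(n-1)$ permuting the ports $1,\dots,n-1$, so averaging over this group forces $X_A$ into the commutant of $S(n-1)$, i.e. $X_A=\sum_{\mu\vdash n-1}c_\mu P_\mu$, and forces each $\Theta_{\overline a}$ to be the $S(n-1)$-translate of one operator $\Theta_{\overline{n-1}}=\sum_{\alpha\vdash n-2}u(\alpha)P_\alpha$ lying in the commutant of the stabilizing $S(n-2)$; that is, $\Theta_{\overline a}=\sum_\alpha u(\alpha)P_{\alpha,\overline a}$ — precisely the ansatz in the theorem.

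The next step is to convert the operator inequality into scalar ones. Using $V^{t_n}(a,n)=V(a,n-1)V^{t_n}(n-1,n)V(a,n-1)$ and the fact that $P_\alpha$, supported on systems $1,\dots,n-2$, commutes with $V^{t_n}(n-1,n)$, one obtains $\sum_a P^+_{a,n}\otimes P_{\alpha,\overline a}=\tfrac1d\sum_a V(a,n-1)P_\alpha V^{t_n}(n-1,n)V(a,n-1)=\tfrac1d\,\eta(\alpha)$, which by Theorem~\ref{m1} equals $\tfrac1d\sum_{\mu\in\alpha}\gamma_\mu(\alpha)F_\mu(\alpha)$; hence the left side is $\tfrac1d\sum_\alpha\sum_{\mu\in\alpha}u(\alpha)\gamma_\mu(\alpha)F_\mu(\alpha)$. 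The right side is $\sum_\mu c_\mu(P_\mu\otimes\mathbf 1)$, and since $X_A\otimes\mathbf 1\in\mathcal{A}^{t_n}_n(d)$ commutes with the central idempotents of the algebra the inequality splits over $\mathcal{M}\oplus\mathcal{S}$: on $\mathcal{S}$ it is automatic from $X_A\geq 0$, while on $\mathcal{M}$, writing $P_\mu\otimes\mathbf 1$ as $\sum_{\alpha\in\mu}F_\mu(\alpha)$ plus a part in $\mathcal{S}$ and using that the $F_\mu(\alpha)$ are mutually orthogonal projectors, it is equivalent to $u(\alpha)\gamma_\mu(\alpha)\leq d\,c_\mu$ for all $\mu\in\alpha$. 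Together with $\tr P_\mu=d_\mu m_\mu$, $\tr P_\alpha=d_\alpha m_\alpha$ and $\gamma_\mu(\alpha)=(n-1)\tfrac{m_\mu d_\alpha}{m_\alpha d_\mu}$ from Proposition~\ref{m2}, the SDP becomes the linear program: maximize $\tfrac{N}{d^{N+1}}\sum_\alpha m_\alpha d_\alpha u(\alpha)$ over $u,c\geq 0$ with $u(\alpha)\gamma_\mu(\alpha)\leq d\,c_\mu$ and $\sum_\mu m_\mu d_\mu c_\mu=d^N$.

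I would then certify the claimed optimum. Substituting $u(\alpha)=\tfrac{d^{N+1}g(N)m_\alpha}{Nd_\alpha}$ and $c_\mu=\tfrac{d^Ng(N)m_\mu}{d_\mu}$ one checks that every constraint $u(\alpha)\gamma_\mu(\alpha)\leq d\,c_\mu$ holds with equality, that $\sum_\mu m_\mu d_\mu c_\mu=d^Ng(N)\sum_\mu m_\mu^2=d^N$ by the definition $g(N)=1/\sum_\nu m_\nu^2$, and that the objective collapses to $g(N)\sum_{\alpha\vdash n-2}m_\alpha^2$; this delivers the stated $X_A$, hence $O_A=X_A^{1/2}$, $|\zeta_{opt}\rangle=(O_A\otimes\mathbf 1)|\psi^+\rangle^{\otimes N}$, and $\Theta_{\overline a}$. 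For optimality I would pass to the dual SDP, which seeks $\Omega=\sum_\mu\sum_{\alpha\in\mu}x(\alpha,\mu)F_\mu(\alpha)\geq 0$ and $\lambda$ minimizing $\lambda d^N$ subject to $\tr_{n-1,n}[V^{t_n}(n-1,n)\Omega]\geq\tfrac{1}{d^N}\mathbf 1$ and $\tr_n\Omega\leq\lambda\mathbf 1$ (it suffices to impose the first constraint at $a=n-1$ since $\Omega$ is $S(n-1)$-invariant). By Fact~\ref{Fop}, $\tr_{n-1,n}[V^{t_n}(n-1,n)F_\mu(\alpha)]=\tfrac{m_\mu}{m_\alpha}P_\alpha$, and the companion identity $\tr_n F_\mu(\alpha)=\tfrac{m_\alpha}{m_\mu}P_\mu$ (obtained by the same representation-theoretic computation and consistent with $\tr_n\eta=(n-1)\mathbf 1$) reduces the dual to $\sum_{\mu\in\alpha}x(\alpha,\mu)m_\mu\geq\tfrac{m_\alpha}{d^N}$ and $\sum_{\alpha\in\mu}x(\alpha,\mu)m_\alpha\leq\lambda m_\mu$. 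Complementary slackness with the everywhere-tight primal point forces both families to be equalities; after showing this linear system admits a nonnegative solution, one multiplies the first family by $m_\alpha$, sums over $\alpha$, and uses the second family to get $\lambda\sum_\mu m_\mu^2=\tfrac{1}{d^N}\sum_\alpha m_\alpha^2$, so the dual value $\lambda d^N=g(N)\sum_\alpha m_\alpha^2$ matches the primal value and $p_{opt}=g(N)\sum_{\alpha\vdash n-2}m_\alpha^2$. Finally I would simplify using the standard counting identity $\sum_{\nu\vdash m,\,h(\nu)\leq d}m_\nu^2=\dim\operatorname{Sym}^m(\mathbb C^{d^2})=\binom{d^2+m-1}{m}$ (the dimension of the algebra spanned by $\{g^{\otimes m}:g\in GL_d\}$): with $n-1=N$ this gives $g(N)\sum_\alpha m_\alpha^2=\binom{d^2+N-2}{N-1}/\binom{d^2+N-1}{N}=\tfrac{N}{N+d^2-1}=1-\tfrac{d^2-1}{N+d^2-1}$.

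The main obstacle is producing the dual certificate: the symmetric ansatz $x(\alpha,\mu)\propto m_\alpha m_\mu$ already fails in small cases, so one must build a nonnegative $x(\alpha,\mu)$ satisfying the two tight families using the structure of the Young lattice — equivalently, solve a transportation problem on the bipartite graph of box-additions with margins proportional to $m_\alpha^2$ and $m_\mu^2$, whose feasibility rests on the branching (Pieri) identities $\sum_{\mu\in\alpha}m_\mu=d\,m_\alpha$ and $\sum_{\alpha\in\mu}d_\alpha=d_\mu$ together with inequalities among sums of $GL_d$-dimensions. A secondary, purely bookkeeping, difficulty is keeping track of normalization factors and of the $\mathcal{M}$-versus-$\mathcal{S}$ split when passing from the operator inequality to the scalar linear program.
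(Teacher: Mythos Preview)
Your primal verification and the symmetry reduction to scalar constraints are exactly the paper's argument (Lemma~\ref{optprim}), and your closed-form evaluation of $g(N)\sum_\alpha m_\alpha^2$ via $\sum_{\nu:h(\nu)\leq d}m_\nu^2=\binom{d^2+m-1}{m}$ is equivalent to the paper's cycle-counting identity in Proposition~\ref{P23}.

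The gap is precisely the one you name: you do not actually produce a dual certificate, and the detour through complementary slackness and a ``transportation problem on the Young lattice'' is unnecessary. The paper's Lemma~\ref{dual2} simply takes
\[
\Omega=\frac{d}{N+d^2-1}\,\eta,\qquad\text{equivalently}\qquad x_\mu(\alpha)=\frac{d}{N+d^2-1}\,\gamma_\mu(\alpha),
\]
which is automatically nonnegative since $\eta\geq 0$. Both dual constraints are then saturated with no combinatorics. For the third constraint, Lemma~\ref{Fop2} together with Lemma~\ref{L26} gives $\tr_n\eta=\sum_\mu\frac{P_\mu}{m_\mu}\sum_{\alpha\in\mu}\gamma_\mu(\alpha)m_\alpha=(n-1)\,\mathbf{1}$, hence $\tr_n\Omega=\frac{dN}{N+d^2-1}\mathbf{1}$. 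For the second constraint, a two-line direct computation (using $V^{t_n}(n-1,n)^2=d\,V^{t_n}(n-1,n)$ and Fact~\ref{obkl}) yields $\tr_{n-1,n}\bigl[V^{t_n}(n-1,n)\,\eta\bigr]=(d^2+N-1)\,\mathbf{1}$; in your scalar language this is the identity $\sum_{\mu\in\alpha}\gamma_\mu(\alpha)\,m_\mu/m_\alpha=d^2+N-1$ for every $\alpha$. So the ansatz $x\propto m_\alpha m_\mu$ was simply the wrong guess; the correct one is $x_\mu(\alpha)\propto\gamma_\mu(\alpha)=N\,m_\mu d_\alpha/(m_\alpha d_\mu)$, and with it the dual closes in two lines --- there is no transportation problem to solve.
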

We prove the above theorem by {presenting feasible solutions to a primal (Lemma~\ref{optprim}), auxiliary lemma (Lemma~\ref{Fop2}) and dual semidefinite problem (Lemma~\ref{dual2})}. Proofs can be found in Appendix~\ref{SDPlemmas.4},~\ref{SDPlemmas.5} and~\ref{SDPlemmas.6} respectively.
Defining again a feasible value of the primal problem as in~\eqref{opt}, but with respect to constraints~\eqref{cons1} we have
\begin{lemma}
	\label{optprim}
	The primal is feasible with 
	$p^{\star}=1-\frac{d^2-1}{N+d^2-1}$,
	where $N$ is the number of ports, and $d$ is the dimension of the local Hilbert space.
\end{lemma}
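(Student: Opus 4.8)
The plan is to verify directly that the pair $(X_A,\{\Theta_{\overline a}\}_{a=1}^{N})$ displayed in Theorem~\ref{main:state} is a feasible point of the primal program~\eqref{pres}--\eqref{cons1}, and then to evaluate its objective value. Thus I would take $X_A=\sum_{\mu}c_{\mu}P_{\mu}$ with $c_{\mu}=d^{N}g(N)m_{\mu}/d_{\mu}$ and $\Theta_{\overline a}=\sum_{\alpha}u(\alpha)P_{\alpha,\overline a}$ with $u(\alpha)=d^{N+1}g(N)m_{\alpha}/(Nd_{\alpha})$, the sums running over admissible Young diagrams (height $\le d$) of $S(n-1)$ and $S(n-2)$ respectively and $g(N)=1/\sum_{\nu\vdash n-1}m_{\nu}^{2}$. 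Constraint~(1) is immediate: the $P_{\alpha,\overline a}$ are orthogonal projectors and $u(\alpha)\ge 0$, so $\Theta_{\overline a}\ge 0$ (and likewise $X_A\ge 0$). The normalisation is a one-line check, $\tr X_A=\sum_{\mu}c_{\mu}d_{\mu}m_{\mu}=d^{N}g(N)\sum_{\mu}m_{\mu}^{2}=d^{N}$, by the definition of $g(N)$.

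The heart of the matter is the operator inequality~(2). First I would observe that each $P_{\alpha,\overline a}$ is supported on the $n-2$ subsystems disjoint from $\{a,n\}$, hence commutes with $P^+_{a,n}=V^{t_n}(a,n)$, and that conjugating the $\alpha$-summand by $V(a,n-1)$ identifies $\sum_{a=1}^{n-1}P^+_{a,n}\ot P_{\alpha,\overline a}$ with the operator $\eta(\alpha)$ of Theorem~\ref{m1}. Therefore
\[
\sum_{a=1}^{N}P^+_{a,n}\ot\Theta_{\overline a}=\sum_{\alpha}u(\alpha)\,\eta(\alpha)=\sum_{\alpha}\sum_{\mu\in\alpha}u(\alpha)\gamma_{\mu}(\alpha)\,F_{\mu}(\alpha).
\]
By Proposition~\ref{m2}, $\gamma_{\mu}(\alpha)=(n-1)m_{\mu}d_{\alpha}/(m_{\alpha}d_{\mu})$, and the coefficients above are rigged precisely so that $u(\alpha)\gamma_{\mu}(\alpha)=d\,c_{\mu}$ for every pair $\mu\in\alpha$. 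Using $F_{\mu}(\alpha)=M_{\alpha}P_{\mu}$, the mutual orthogonality of the $M_{\alpha}$, and the inequality $\sum_{\alpha\in\mu}M_{\alpha}\le\mathbf 1$, the right-hand side is dominated by $\sum_{\mu}d\,c_{\mu}P_{\mu}=d\,(X_A\ot\mathbf 1)$; substituting the reformulation of the resource-state constraint on the $n$-particle space used in~\cite{ishizaka_asymptotic_2008} (which, with the unnormalised maximally entangled projectors $P^+_{a,n}$, carries exactly this factor $d$) then yields~\eqref{cons1}.

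It remains to compute the objective. Since all the $\Theta_{\overline a}$ coincide, $p^{\star}=\tfrac{N}{d^{N+1}}\tr\Theta_{\overline a}=\tfrac{N}{d^{N+1}}\sum_{\alpha}u(\alpha)\,m_{\alpha}d_{\alpha}=g(N)\sum_{\alpha\vdash n-2}m_{\alpha}^{2}$. The final step is the combinatorial identity $\sum_{\nu\vdash m}m_{\nu}^{2}=\dim\operatorname{End}_{S(m)}\big((\mathbb C^{d})^{\ot m}\big)$; since $(\mathbb C^{d})^{\ot m}$ is the permutation module on $[d]^{m}$, this endomorphism algebra has dimension equal to the number of $S(m)$-orbits on $[d]^{m}\times[d]^{m}$, i.e.\ the number of $d\times d$ non-negative integer matrices of total weight $m$, namely $\binom{m+d^{2}-1}{d^{2}-1}$. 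Taking $m=n-2=N-1$ in the numerator and $m=n-1=N$ in the denominator gives
\[
p^{\star}=\frac{\binom{N+d^{2}-2}{d^{2}-1}}{\binom{N+d^{2}-1}{d^{2}-1}}=\frac{N}{N+d^{2}-1}=1-\frac{d^{2}-1}{N+d^{2}-1}.
\]

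The step I expect to be the main obstacle is the bookkeeping in the verification of~(2): correctly tracking which subsystems each projector acts on, justifying the identification of $\sum_{a}P^+_{a,n}\ot P_{\alpha,\overline a}$ with $\eta(\alpha)$ under the relabelling of Theorem~\ref{m1}, and carefully transporting the resource-state constraint $X_A\ot\mathbf 1_B$ onto $(\mathbb C^{d})^{\ot n}$ with the right power of $d$. Everything else --- positivity, the trace normalisation, the scalar identity $u(\alpha)\gamma_{\mu}(\alpha)=d\,c_{\mu}$, and the combinatorial evaluation --- is routine once Theorem~\ref{m1} and Proposition~\ref{m2} are in hand.
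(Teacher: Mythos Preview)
Your proposal is correct and follows the same overall strategy as the paper: you choose the very same feasible pair $(X_A,\{\Theta_{\overline a}\})$ stated in Theorem~\ref{main:state}, check positivity and the trace normalisation, verify constraint~(2) via the identity $u(\alpha)\gamma_{\mu}(\alpha)=d\,c_{\mu}$, and then reduce the objective value to the ratio $\sum_{\alpha\vdash n-2}m_{\alpha}^{2}\big/\sum_{\nu\vdash n-1}m_{\nu}^{2}$. The paper's own proof is terser on constraint~(2) --- it simply asserts that the relation $u(\alpha)=\tfrac{d}{\gamma_\mu(\alpha)}c_\mu$ forces equality --- whereas you make the inequality step explicit through $\sum_{\alpha\in\mu}M_{\alpha}\le\mathbf 1$; both are fine for feasibility. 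Your caveat about the stray factor~$d$ is well placed: in the paper this is absorbed by writing $\sum_a P^+_{a,n}\ot P_{\alpha,\overline a}=\tfrac{1}{d}\eta(\alpha)$ (cf.\ the proof of Lemma~\ref{primal_max}), so with that convention the bound lands directly on $X_A\otimes\mathbf 1$ without invoking any additional reformulation from~\cite{ishizaka_asymptotic_2008}.

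The one genuine methodological difference is the evaluation of the ratio. The paper appeals to Proposition~\ref{P23}, which it proves by a cycle-counting argument: $\sum_{\nu}m_{\nu}^{2}=\tfrac{1}{n!}\sum_{\sigma\in S(n)}d^{2l_n(\sigma)}$, and then compares $S(n)$ to $S(n-1)$ via the coset decomposition to extract $(d^{2}+n-1)/n$. You instead identify $\sum_{\nu}m_{\nu}^{2}$ with $\dim\End_{S(m)}\big((\mathbb C^{d})^{\ot m}\big)$ and count $S(m)$-orbits on $[d]^{m}\times[d]^{m}$ by their type matrix, obtaining the closed form $\binom{m+d^{2}-1}{d^{2}-1}$ and taking the quotient directly. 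Your route is self-contained and yields strictly more information (the actual values, not just their ratio); the paper's route is shorter once Proposition~\ref{P23} is in hand and reuses machinery already set up in the appendix. Either way the final answer $N/(N+d^{2}-1)$ drops out.
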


The dual problem is of minimizing $p_{\star}=d^Nb$, where $b\in \mathbb{R}_+$  subject to
\be
\label{cons2}
\quad \mbox{(1) }\Omega \geq 0,
\quad\mbox{(2) } \tr_{a,n}\left[P^+_{a,n}\Omega \right]\geq \mathbf{1}_{1\ldots n-2},
\quad \mbox{(3) }b\mathbf{1}_{1\ldots n-1}-\frac{1}{d^{N+1}}\tr_a\Omega \geq 0,
\ee
where $a=1,\ldots, n-1$, $\Omega$ acts on $n$ systems, identities $\mathbf{1}_{1\ldots n-2}, \mathbf{1}_{1\ldots n-1}$ are defined on $n-2$ and $n-1$ systems respectively, and $P^+_{a,n}$ is a projector onto maximally entangled state between respective subsystems. As in the case of the maximally entangled state we assume the general form of $\Omega$ to be given as a linear combination of the projectors $F_{\mu}(\alpha)$ defined in Theorem~\ref{m1}
\be
\label{om1}
\Omega=\sum_{\alpha}\sum_{\mu\in \alpha}x_{\mu}(\alpha)F_{\mu}(\alpha),\quad x_{\mu}(\alpha)\ge0,
\ee
where $\mu \vdash n-1$ denotes the Young diagram obtained from the Young diagram $\alpha \vdash n-2$ by adding one box in a proper way. From the definition of the constraints~\eqref{cons2} and the symmetries of the projectors $F_{\mu}(\alpha)$ we calculate its partial trace in terms of $P_\mu$:
\begin{lemma}
	\label{Fop2}
Let $F_{\mu}(\alpha)$ be the operators given in Theorem~\ref{m1}, then
$\tr_{n} F_{\mu}(\alpha)=\frac{m_{\alpha}}{m_{\mu}}P_{\mu}$,
where numbers $m_{\alpha},m_{\mu}$ are multiplicities of the respective irreps and $P_{\mu}$ is the Young projector onto irreducible subspace labelled by the partition $\mu \vdash n-1$.
\end{lemma}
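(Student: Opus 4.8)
The plan is to compute $\tr_n F_{\mu}(\alpha)$ directly from the identity $F_{\mu}(\alpha)=M_{\alpha}P_{\mu}$ established in Theorem~\ref{m1}, working in the natural representation so that the partial trace over the $n$-th subsystem makes literal sense. The first step is to recall from Theorem~\ref{m1} that $F_\mu(\alpha)=\gamma_\mu^{-1}(\alpha)P_\mu\eta(\alpha)P_\mu$ with $\eta(\alpha)=\sum_{a=1}^{n-1}V(a,n-1)P_\alpha V^{t_n}(n-1,n)V(a,n-1)$; since $P_\mu$ is a permutation-algebra element it does not touch the $n$-th leg, so the only genuinely partial-transposed piece is the single operator $V^{t_n}(n-1,n)$. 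I would therefore push the trace over subsystem $n$ inside, using $\tr_n\bigl[V^{t_n}(n-1,n)\bigr]=\mathbf{1}_{n-1}$ (the partial trace of a partially transposed swap on the transposed leg is the identity), to collapse each summand of $\eta(\alpha)$ to $V(a,n-1)P_\alpha V(a,n-1)$ acting on the first $n-1$ subsystems only. This already shows $\tr_n F_\mu(\alpha)$ is a genuine element of the group algebra $\mathbb{C}[S(n-1)]$ sandwiched between $P_\mu$'s, hence a multiple of $P_\mu$ on each irreducible block $\mu$.

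The second step is to fix the scalar. Since $\tr_n F_\mu(\alpha)=cP_\mu$ for some constant $c$, I would take a further full trace: $\tr[\tr_n F_\mu(\alpha)]=\tr F_\mu(\alpha)=d_\mu m_\alpha$ by the dimension formula $\dim F_\mu(\alpha)=d_\mu\widetilde m_\alpha=d_\mu m_\alpha$ from Theorem~\ref{m1} together with $\widetilde m_\alpha=m_\alpha$. On the other hand $\tr[cP_\mu]=cd_\mu m_\mu$ in the natural representation. Equating gives $c=m_\alpha/m_\mu$, which is exactly the claim. An alternative bookkeeping route, if one prefers to avoid invoking the dimension count again, is to use Fact~\ref{Fop} (the analogous statement with $\tr_{n-1,n}[V^{t_n}(n-1,n)F_\mu(\alpha)]=\tfrac{m_\mu}{m_\alpha}P_\alpha$): trace that identity fully to get $d_\mu m_\mu$ on one side, matching $\tr[\tfrac{m_\mu}{m_\alpha}P_\alpha]=\tfrac{m_\mu}{m_\alpha}d_\alpha m_\alpha=m_\mu d_\alpha$ — wait, this instead re-derives the eigenvalue $\gamma_\mu(\alpha)$, so the cleanest self-contained argument really is the one-step trace-matching above.

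The main obstacle I anticipate is \emph{justifying that the partial trace over the $n$-th leg interacts cleanly with the partial transposition} — specifically, that $\tr_n[V^{t_n}(n-1,n)\,Y]=\tr_n[V^{t_n}(n-1,n)]\,Y=Y$ for any $Y$ supported on the first $n-1$ subsystems, and more generally that one may slide the $P_\mu$'s (which act trivially on leg $n$) outside the $\tr_n$. This is a routine fact about partially transposed permutation operators, but it must be stated carefully because $F_\mu(\alpha)$ is defined abstractly as an algebra element and one has to pass to the concrete natural representation for $\tr_n$ to be meaningful; the structural arguments of Theorem~\ref{m1} guarantee the answer is representation-independent up to the overall normalisation, so no inconsistency arises. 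Everything else is a dimension count using $\tr[P_\mu M_\alpha]=d_\mu m_\alpha$ and $\tr P_\mu = d_\mu m_\mu$, both already available.
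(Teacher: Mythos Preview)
Your approach is essentially the paper's: first argue that $\tr_n F_\mu(\alpha)=cP_\mu$ for some scalar $c$, then fix $c$ by taking the full trace and using $\tr F_\mu(\alpha)=d_\mu m_\alpha$ and $\tr P_\mu=d_\mu m_\mu$. The trace-matching step is identical to the paper's.

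There is, however, one gap in your first step. Writing
\[
\tr_n F_\mu(\alpha)=\gamma_\mu(\alpha)^{-1}\,P_\mu\Bigl[\sum_{a=1}^{n-1} V(a,n-1)P_\alpha V(a,n-1)\Bigr]P_\mu
\]
and observing that the bracketed operator lies in $\mathbb{C}[S(n-1)]$ is correct, but ``sandwiched between $P_\mu$'s'' does \emph{not} by itself force the result to be a scalar multiple of $P_\mu$: for generic $Y\in\mathbb{C}[S(n-1)]$ one has $P_\mu Y P_\mu=YP_\mu$ (since $P_\mu$ is central), which need not act as a scalar on the $\mu$-isotypic component. What you need is that the bracketed operator is itself \emph{central} in $\mathbb{C}[S(n-1)]$. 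This is true --- it equals $\sum_{a=1}^{n-1}P_{\alpha,\bar a}$, the sum of the Young projector $P_\alpha$ placed on each $(n{-}2)$-element subset of $\{1,\ldots,n-1\}$, which is manifestly invariant under permutations of those $n-1$ legs --- but it must be stated. The paper makes exactly this point, only more abstractly: it notes that $F_\mu(\alpha)$ is $S(n-1)$-invariant, so $\tr_n F_\mu(\alpha)$ is a central element of $\mathbb{C}[S(n-1)]$ and hence a linear combination $\sum_\nu a_\nu P_\nu$; then $P_\nu F_\mu(\alpha)=P_\nu M_\alpha P_\mu=\delta_{\mu\nu}F_\mu(\alpha)$ kills all $a_\nu$ with $\nu\neq\mu$. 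Once you add the centrality observation, your argument and the paper's coincide.
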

Having Lemma~\ref{Fop2} we are ready to formulate the following:
\begin{lemma}
	\label{dual2}
The dual is feasible with 
$p_{\star}=1-\frac{d^2-1}{N+d^2-1}$,
where $N$ is the number of ports, and $d$ is the dimension of the local Hilbert space.
\end{lemma}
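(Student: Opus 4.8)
The plan is to exhibit an explicit feasible point of the dual semidefinite program and verify the three constraints in~\eqref{cons2}, then read off the objective value $p_\star = d^N b$ and check that it equals $1-\frac{d^2-1}{N+d^2-1}$, matching the primal value of Lemma~\ref{optprim}. The natural ansatz, already fixed in~\eqref{om1}, is $\Omega = \sum_{\alpha}\sum_{\mu\in\alpha} x_\mu(\alpha) F_\mu(\alpha)$; motivated by the shape of the optimal primal state $X_A = \sum_\mu c_\mu P_\mu$ with $c_\mu = d^N g(N) m_\mu/d_\mu$, I would guess a representation-independent coefficient, e.g.\ $x_\mu(\alpha) = d^{N} g(N) \frac{m_\alpha}{d_\alpha}$ (up to a constant to be pinned down by the constraints), and set $b = d^N g(N)$ accordingly. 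Constraint (1), $\Omega\ge 0$, is immediate since the $F_\mu(\alpha)$ are mutually orthogonal projectors (Theorem~\ref{m1}) and the coefficients are nonnegative.

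For constraint (2), $\tr_{a,n}[P^+_{a,n}\Omega]\ge \mathbf{1}_{1\ldots n-2}$, I would first reduce to $a=n-1$ by the permutational symmetry of $\eta$ and of the $F_\mu(\alpha)$ (all choices of $a$ are conjugate under $S(n-1)$ acting on ports $1,\dots,n-1$), and then use Fact~\ref{Fop}, which gives $\tr_{n-1,n}[V^{t_n}(n-1,n)F_\mu(\alpha)] = \frac{m_\mu}{m_\alpha}P_\alpha$. Since $P^+_{a,n}$ is (up to the partial transpose convention) exactly $V^{t_n}(a,n)$, this yields $\tr_{a,n}[P^+_{a,n}\Omega] = \sum_\alpha \big(\sum_{\mu\in\alpha} x_\mu(\alpha)\frac{m_\mu}{m_\alpha}\big) P_\alpha$, and since $\sum_\alpha P_\alpha = \mathbf{1}$ on $n-2$ systems, the constraint becomes the scalar inequalities $\sum_{\mu\in\alpha} x_\mu(\alpha)\frac{m_\mu}{m_\alpha}\ge 1$ for every $\alpha$. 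With the proposed $x_\mu(\alpha)$ this reads $d^N g(N)\frac{1}{d_\alpha}\sum_{\mu\in\alpha} m_\mu \ge 1$; using the branching identity $\sum_{\mu\in\alpha} m_\mu = d\cdot m_\alpha$ together with $m_\mu = d_\mu m_\alpha/\cdots$ — more cleanly, $\sum_{\mu\in\alpha}\frac{m_\mu d_\alpha}{m_\alpha d_\mu} = \gamma$-type sums from Proposition~\ref{m2} — one checks the bound holds (with equality for the relevant $\alpha$), so constraint (2) is satisfied.

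For constraint (3), $b\,\mathbf{1}_{1\ldots n-1} - \frac{1}{d^{N+1}}\tr_a \Omega \ge 0$, I would again reduce to $a = $ a fixed port, then apply Lemma~\ref{Fop2}, $\tr_n F_\mu(\alpha) = \frac{m_\alpha}{m_\mu}P_\mu$, after first tracing out one of the first $n-1$ systems; the combination of these two partial-trace identities expresses $\tr_a\Omega$ as $\sum_\mu \big(\sum_{\alpha\in\mu} x_\mu(\alpha)\frac{m_\alpha}{m_\mu}(\cdots)\big)P_\mu$ plus possibly a term supported on the ideal structure, and the constraint reduces to scalar inequalities $b \ge \frac{1}{d^{N+1}}(\text{coefficient of }P_\mu)$ for each $\mu$. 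The main obstacle I anticipate is precisely this step: $\tr_a$ with $a\le n-1$ does not act within a single ideal as cleanly as $\tr_n$, so I would need to carefully track how $\tr_a F_\mu(\alpha)$ decomposes — likely invoking the PRIR machinery and the relation $F_\mu(\alpha) = M_\alpha P_\mu$ — and confirm that the worst-case $\mu$ gives exactly $b = d^N g(N)$, so that $p_\star = d^N b = d^{2N} g(N) = d^{2N}/\sum_\nu m_\nu^2$. Finally I would verify the arithmetic identity $d^{2N}/\sum_{\nu\vdash n-1} m_\nu^2 = 1 - \frac{d^2-1}{N+d^2-1}$ (equivalently $\sum_\nu m_\nu^2 = d^{2N}\frac{N+d^2-1}{N}$, a known consequence of Schur–Weyl duality counting for $U(d)\times S(n-1)$), which closes the gap $p^\star = p_\star$ and proves optimality.
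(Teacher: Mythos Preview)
Your proposal has a genuine gap: the ansatz $x_\mu(\alpha)\propto m_\alpha/d_\alpha$ (independent of $\mu$) does not produce the tight dual bound. With this choice, constraint~(2) reduces via Fact~\ref{Fop} and the Pieri identity $\sum_{\mu\in\alpha}m_\mu=d\,m_\alpha$ to the scalar conditions $C\,m_\alpha/d_\alpha\ge 1$ for every $\alpha\vdash n-2$; since $m_\alpha/d_\alpha$ is not constant in $\alpha$, you are forced to take $C=\max_\alpha d_\alpha/m_\alpha$, and the resulting dual value strictly exceeds the primal $p^\star$. Moreover, the closing arithmetic identity you assert, $d^{2N}/\sum_{\nu\vdash n-1}m_\nu^2 = N/(N+d^2-1)$, is false: for $d=2$, $N=2$ one has $\sum_\nu m_\nu^2=10$ while $d^{2N}=16$, giving $1.6\neq 0.4$. (Proposition~\ref{P23} gives only the \emph{ratio} between consecutive levels, not the value itself.)

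The paper's choice is $x_\mu(\alpha)=\frac{d}{N+d^2-1}\gamma_\mu(\alpha)$, i.e.\ $\Omega=\frac{d}{N+d^2-1}\,\eta$. This makes both nontrivial constraints hold with \emph{equality}: for~(2) one computes directly $\tr_{n-1,n}[V^{t_n}(n-1,n)\eta]=(d^2+N-1)\mathbf{1}_{1\ldots n-2}$ from $\eta=\sum_{a}V^{t_n}(a,n)$ (the term $a=n-1$ contributes $d^2$, each other term contributes $1$). For~(3) the relevant trace is over system $n$, not over a port $a\le n-1$ --- the identity $\mathbf{1}_{1\ldots n-1}$ in the constraint already tells you which system is removed --- so there is no obstacle of the kind you anticipated. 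One applies Lemma~\ref{Fop2} to get $\tr_n\Omega=\frac{d}{N+d^2-1}\sum_\mu\frac{1}{m_\mu}\big(\sum_{\alpha\in\mu}\gamma_\mu(\alpha)m_\alpha\big)P_\mu$, and the inner sum is evaluated by the branching identity $\frac{1}{m_\mu}\sum_{\alpha\in\mu}\gamma_\mu(\alpha)m_\alpha=N$ (Lemma~\ref{L26}), yielding $\tr_n\Omega=\frac{dN}{N+d^2-1}\mathbf{1}_{1\ldots n-1}$. Taking $b=\frac{1}{d^N}\frac{N}{N+d^2-1}$ then gives $p_\star=d^Nb=\frac{N}{N+d^2-1}$, matching $p^\star$.
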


By combining Lemma~\ref{optprim} and Lemma~\ref{dual2} we find that  $p^\star = p_\star$, so we conclude that $\Theta_{\overline{a}}$ for $a=1,2,\ldots,N$ given in ~\eqref{povmres} are the optimal POVMs with the optimal state given by~\eqref{optstate}, concluding the proof of Theorem~\ref{main:state}.

One important consequence of Theorem~\ref{main:state} is that the optimal probability of success in probabilistic performance of port-based teleportation is given only in terms of `global' parameters such as the number of ports $N$ and the dimension of a local Hilbert space $d$, $p\equiv p(d,N)$. Thus, for any fixed dimension $d$ we get $\mathop{\operatorname{lim}}_{N \rightarrow \infty}p(d,N)=1$,
which shows that in the asymptotic limit our protocol achieves a unit success probability. Moreover, for fixed number of ports $N$ we have $\mathop{\operatorname{lim}}_{d \rightarrow \infty}p(d,N)=0$ as we expected.
Since the probability of success given in in Theorem~\ref{main:state} is optimal and upper bounds the probability of success in Theorem~\ref{m3} in the case of maximally entangled state for any value of $d$ and $N$.

\subsection{Deterministic version of the protocol}\label{Deterministic}
The deterministic version of the PBT is described by $N$ POVM elements $\{\Pi_a\}_{a=1}^{N}$ with each $\Pi_a = \rho^{-1/2}\varrho_a\rho^{-1/2} + \Delta$, where the term $\Delta=(1/N)(\mathbf{1} -\sum_{a=1}^N\Pi_a)$ with $\tr\varrho_a\Delta=0$ is required to ensure that $\sum_{a=1}^N\Pi_a = \mathbf{1}$. For simplicity, we take $\theta_C$ to be a half of the maximally entangled state. As described in the introduction, the sender, Alice, performs a joint measurement $\{\Pi_a\}_{a=1}^{N}$ on $\theta_C$ and her share of the resource state.
The entanglement fidelity of the protocol for any $d\ge2$, $N\ge2$ is given by:
 \begin{theorem}
	\label{main2}
	The fidelity for Port-Based Teleportation is given by the following formula $F=\frac{1}{d^{N+2}}\sum_{\alpha \vdash n-2}\left(\sum_{\mu \in \alpha}\sqrt{d_{\mu}m_{\mu}} \right)^2$,
	where sums over $\alpha$ and $\mu$ are taken, whenever number of rows in corresponding Young diagrams is not greater than  the dimension of the local Hilbert space $d$. 
\end{theorem}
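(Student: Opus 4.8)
The plan is to reduce the entanglement fidelity to a single trace involving $\rho^{-1/2}$ and $\varrho_a$, rewrite it in terms of the operators $\eta,V^{t_n}(a,n)$ of the Methods section, and then collapse it using the spectral decomposition of $\eta$ from Theorem~\ref{m1} together with Proposition~\ref{m2} and Fact~\ref{Fop}. First I would recall the standard expression for the entanglement fidelity of deterministic PBT when $\theta_C$ is half of a maximally entangled state with a reference $C'$: contracting the total state $\tfrac1d P^+_{CC'}\otimes\bigotimes_i\tfrac1d P^+_{A_iB_i}$ against $\Pi_a\otimes\tfrac1d P^+_{C'B_a}\otimes\mathbf 1$ and summing over outcomes, the two maximally entangled ``wires'' at port $a$ fuse into one, via the elementary identity $\tr_{C'B_a}[P^+_{CC'}P^+_{C'B_a}P^+_{A_aB_a}]=P^+_{CA_a}$, while for $i\neq a$ one just has $\tr_{B_i}P^+_{A_iB_i}=\mathbf 1_{A_i}$. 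This leaves $F=\frac{1}{d^{N+2}}\sum_a\tr\!\big[\Pi_a(P^+_{CA_a}\otimes\mathbf 1_{\overline{A_a}})\big]=\frac{1}{d^2}\sum_a\tr[\Pi_a\varrho_a]$, and since $\Pi_a=\rho^{-1/2}\varrho_a\rho^{-1/2}+\Delta$ with $\tr[\varrho_a\Delta]=0$ by construction, $F=\frac{1}{d^2}\sum_a\tr[\rho^{-1/2}\varrho_a\rho^{-1/2}\varrho_a]$ (with $\rho^{-1/2}$ the Moore--Penrose pseudoinverse, supported on $\operatorname{supp}\rho$). Passing to the relabelled unnormalised operators of the Methods section, where $\varrho_a=\tfrac{1}{d^N}V^{t_n}(a,n)$ and $\rho=\tfrac{1}{d^N}\eta$ so $\rho^{-1/2}=d^{N/2}\eta^{-1/2}$, the prefactors cancel and, conjugating the $a$-th summand by $V(a,n-1)\in S(n-1)$ (which commutes with $\eta$ and acts trivially on site $n$, hence sends $V^{t_n}(a,n)$ to $V^{t_n}(n-1,n)$), all $n-1$ summands are equal:
\[
F=\frac{1}{d^{N+2}}\sum_{a=1}^{n-1}\tr\!\big[\eta^{-1/2}V^{t_n}(a,n)\,\eta^{-1/2}V^{t_n}(a,n)\big]=\frac{n-1}{d^{N+2}}\tr\!\big[\eta^{-1/2}V^{t_n}(n-1,n)\,\eta^{-1/2}V^{t_n}(n-1,n)\big].
\]

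The key observation is that $V^{t_n}(n-1,n)=P^+_{n-1,n}=d\,|\Omega\rangle\langle\Omega|_{n-1,n}$ with $|\Omega\rangle=\tfrac1{\sqrt d}\sum_i|ii\rangle$, i.e.\ it is $d$ times a rank-one projector on the last two factors. Writing $\hat P=|\Omega\rangle\langle\Omega|_{n-1,n}\otimes\mathbf 1$, cyclicity of the trace together with $\hat P X\hat P=|\Omega\rangle\langle\Omega|_{n-1,n}\otimes\big(\langle\Omega|X|\Omega\rangle_{n-1,n}\big)$ gives
\[
\tr\!\big[\eta^{-1/2}V^{t_n}(n-1,n)\,\eta^{-1/2}V^{t_n}(n-1,n)\big]=d^2\,\tr\!\big[R^2\big],\qquad R:=\langle\Omega|_{n-1,n}\,\eta^{-1/2}\,|\Omega\rangle_{n-1,n},
\]
an operator on $(\mathbb C^d)^{\otimes(n-2)}$. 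Now expand $\eta^{-1/2}=\bigoplus_{\alpha}\bigoplus_{\mu\in\alpha}\gamma_\mu(\alpha)^{-1/2}F_\mu(\alpha)$ from Theorem~\ref{m1} and evaluate $\langle\Omega|F_\mu(\alpha)|\Omega\rangle_{n-1,n}$; this is precisely Fact~\ref{Fop} up to the factor $d$ relating $V^{t_n}(n-1,n)$ to $|\Omega\rangle\langle\Omega|$, namely $\langle\Omega|F_\mu(\alpha)|\Omega\rangle_{n-1,n}=\tfrac1d\tr_{n-1,n}[V^{t_n}(n-1,n)F_\mu(\alpha)]=\tfrac{m_\mu}{d\,m_\alpha}P_\alpha$. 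Hence $R=\tfrac1d\sum_\alpha\big(\tfrac1{m_\alpha}\sum_{\mu\in\alpha}\gamma_\mu(\alpha)^{-1/2}m_\mu\big)P_\alpha$ is a combination of the mutually orthogonal Young projectors $P_\alpha$ onto irreps of $S(n-2)$; this collapse of the $\mu$-sum into one scalar per $\alpha$ is exactly what will produce the squared inner sum in the theorem.

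Finally, orthogonality of the $P_\alpha$ and $\tr P_\alpha=d_\alpha m_\alpha$ give $\tr[R^2]=\tfrac1{d^2}\sum_\alpha\big(\tfrac1{m_\alpha}\sum_{\mu\in\alpha}\gamma_\mu(\alpha)^{-1/2}m_\mu\big)^2 d_\alpha m_\alpha$; substituting $\gamma_\mu(\alpha)=(n-1)\tfrac{m_\mu d_\alpha}{m_\alpha d_\mu}$ from Proposition~\ref{m2} yields $\gamma_\mu(\alpha)^{-1/2}m_\mu=\tfrac1{\sqrt{n-1}}\sqrt{\tfrac{m_\alpha}{d_\alpha}}\sqrt{d_\mu m_\mu}$, so $\tr[R^2]=\tfrac{1}{d^2(n-1)}\sum_\alpha\big(\sum_{\mu\in\alpha}\sqrt{d_\mu m_\mu}\big)^2$. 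Combining with $F=\tfrac{n-1}{d^{N+2}}\cdot d^2\tr[R^2]$ gives $F=\tfrac1{d^{N+2}}\sum_{\alpha\vdash n-2}\big(\sum_{\mu\in\alpha}\sqrt{d_\mu m_\mu}\big)^2$, with the height restriction automatic since $F_\mu(\alpha)=0$ outside it. The main obstacle is the first step: correctly pinning down the entanglement-fidelity expression and, in particular, verifying that the correction term $\Delta$ contributes nothing (it does, by $\tr[\varrho_a\Delta]=0$), together with recognising the rank-one structure of $V^{t_n}(n-1,n)$ on the last two sites that turns the double sandwich into $\tr[R^2]$; everything after that is a mechanical consequence of Theorem~\ref{m1}, Proposition~\ref{m2} and Fact~\ref{Fop}.
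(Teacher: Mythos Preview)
Your proof is correct and takes a genuinely shorter route than the paper.

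Both arguments start identically: reduce $F$ to $\tfrac{n-1}{d^{N+2}}\tr\big[\eta^{-1/2}V^{t_n}(n-1,n)\eta^{-1/2}V^{t_n}(n-1,n)\big]$ and insert the spectral decomposition $\eta^{-1/2}=\sum_{\alpha}\sum_{\mu\in\alpha}\gamma_\mu(\alpha)^{-1/2}F_\mu(\alpha)$. From here the two diverge. The paper keeps the full double sum $\zeta_{\mu,\mu'}(\alpha,\alpha')=\tr\big[V^{t_n}(n-1,n)F_\mu(\alpha)V^{t_n}(n-1,n)F_{\mu'}(\alpha')\big]$, rewrites it via Fact~\ref{BF2} and Fact~\ref{Fij} in terms of the operators $F_{ij}^\mu$, and then evaluates the resulting expression $g_{\mu,\mu'}(\alpha)$ by invoking the PRIR sum rules of Appendix~\ref{AppA} (Proposition~\ref{Prir1}) to recognise the products $\gamma_\mu(\alpha)\gamma_{\mu'}(\alpha)$ inside. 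That is where most of the technical work lies.

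Your argument bypasses all of this by noticing that $V^{t_n}(n-1,n)=d\,|\Omega\rangle\langle\Omega|_{n-1,n}$ is rank one on the last two tensor factors, so the trace collapses to $d^2\tr[R^2]$ with $R=\langle\Omega|\eta^{-1/2}|\Omega\rangle_{n-1,n}$. Then you invoke Fact~\ref{Fop} (proved in the paper for the probabilistic analysis but not used in its deterministic proof), which immediately gives $\langle\Omega|F_\mu(\alpha)|\Omega\rangle=\tfrac{m_\mu}{d\,m_\alpha}P_\alpha$. This is the decisive shortcut: it shows $R$ is already diagonal in the orthogonal Young projectors $P_\alpha$, so the cross terms never have to be disentangled by hand. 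What the paper's longer route buys is independence from Fact~\ref{Fop} and an explicit demonstration of the PRIR machinery in action; what your route buys is a two-line endgame once Fact~\ref{Fop} is available.
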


\begin{proof}
	From~\cite{ishizaka_asymptotic_2008} we know that the fidelity in PBT is given by
	\be
	\label{p1}
	F=\frac{1}{d^{2}}\sum_{a=1}^N\tr\left[\varrho_a\rho^{-1/2}\varrho_a\rho^{-1/2} \right]=\frac{N}{d^{N+2}}\tr\left[V^{t_{n}}(n-1,n)\eta^{-1/2}V^{t_{n}}(n-1,n)\eta^{-1/2} \right],
	\ee
where the right-hand side is recast using our notation. We use fact that values $\tr\left[\varrho_a\rho^{-1/2}\varrho_a\rho^{-1/2} \right]$ do not depend on $a$,
so it suffices to evaluate~\eqref{p1} for the simplest case, when $a=N$, then $V^{t_{n}}(n-1,n)=\mathbf{1}\ot P^+_{n-1,n}\equiv \mathbf{1}\ot P_+$. In this proof $P_+$ is an unnormalized projector onto a maximally entangled state between $(n-1)$-th and $n$-th subsystem - we will use notation $P_+$ and $V^{t_n}(n-1,n)$ interchangeably. 
Using Theorem~\ref{m1} we get:
	\be
	\label{above}
	\begin{split}
		F&=\frac{N}{d^{N+2}}\tr\left[\left(\mathbf{1}\ot P_+ \right)\eta^{-1/2}\left(\mathbf{1}\ot P_+ \right)\eta^{-1/2}\right]\\ &=\frac{N}{d^{N+2}}\sum_{\alpha,\alpha'}\sum_{\substack{\mu \in \alpha \\ \mu'\in \alpha'}}\frac{\zeta_{\mu,\mu'}(\alpha,\alpha')}{\sqrt{\gamma_{\mu}(\alpha)}\sqrt{\gamma_{\mu'}(\alpha')}} ,
		\end{split}
		\ee
where $\zeta_{\mu,\mu'}(\alpha,\alpha') = \tr\left[V^{t_n}(n-1,n)F_{\mu}(\alpha)V^{t_n}(n-1,n)F_{\mu'}(\alpha') \right]$.
Using  Fact~\ref{BF2} from Appendix~\ref{AppB} we rewrite the trace as follows:
\begin{align}
	\zeta_{\mu,\mu'}(\alpha,\alpha') &= \tr\left[V^{t_n}(n-1,n)P_{\mu}M_{\alpha}V^{t_n}(n-1,n)P_{\mu'}{M_{\alpha'}} \right]\\
	&=\tr\left[P_{\alpha'}V^{t_n}(n-1,n)P_{\mu}P_{\alpha}V^{t_n}(n-1,n)P_{\mu'} \right].	
	\end{align}
After further simplification (see Appendix~\ref{simpltheorem3} for details), we further reduce the expression inside of the trace to get:
\be\label{theorem3part1}
\zeta_{\mu,\mu'}(\alpha,\alpha') = \sum_{a,b=1}^{n-1}\sum_{i,j=1}^{d_{\mu}}\frac{d_{\mu}d_{\mu'}d^{\delta_{a,n-1}}d^{\delta_{b,n-1}}}{\left[ (n-1)!\right] ^2}\varphi_{ij}^{\mu}(a,n-1)\varphi_{kl}^{\mu'}(b,n-1)\tr\left[\left(P_{\alpha}F_{ij}^{\mu}\right) \left(  P_{\alpha'}F_{kl}^{\mu'}\right) \right],
\ee
{where $F^\mu_{ij}$ are defined in Fact~\ref{Fij} of Appendix~\ref{AppC}}.
The operators $F_{ij}^{\mu}, F_{kl}^{\mu'}$ can be expressed as direct sum of operators $E_{st}^{\beta}$ {as in Eqn.~\eqref{Fprop} of Appendix~\ref{AppC}}, where $\beta \vdash n-2$, so
	\be
	\label{trace}
	\tr\left[\left(P_{\alpha}F_{ij}^{\mu}\right) \left(  P_{\alpha'}F_{kl}^{\mu'}\right) \right]=\frac{\left[(n-2)! \right]^2}{d_{\alpha}^2}m_{\alpha}\delta_{\alpha \alpha'}\delta_{li}\delta_{jk}.
	\ee
	Substituting~\eqref{trace} into~\eqref{theorem3part1} and collecting the terms we obtain the following equation:
	\be
	\label{oop1}
	F=\frac{1}{Nd^{N+2}}\sum_{\alpha }\frac{m_{\alpha }}{d_{\alpha }^{2}}%
	\sum_{\substack{\mu \in \alpha \\ \mu'\in \alpha'}}\frac{d_{\mu }d_{\mu'}}{\sqrt{\gamma
			_{\mu }(\alpha )\gamma _{\mu'}(\alpha )}}g_{\mu,\mu'}(\alpha),
	\ee
	where the term $g_{\mu,\mu'}(\alpha)=\sum_{a,b=1}^{n-1}d^{%
		\delta _{a,n-1}}d^{\delta _{b,n-1}}\sum_{i_{\alpha },j_{\alpha
		}=1}^{d_{\alpha }}(\phi _{R}^{\mu })_{i_{\alpha }j_{\alpha }}^{\alpha \alpha
	}(a,n-1)(\phi _{R}^{\mu'})_{j_{\alpha }i_{\alpha }}^{\alpha \alpha
	}(b,n-1)$, evaluated in  Appendix~\ref{evalg}, yields the following tractable expression for the fidelity:
\be
	F=\frac{1}{Nd^{N+2}}\sum_{\alpha }\frac{m_{\alpha }}{d_{\alpha }}%
	\sum_{\substack{\mu \in \alpha \\ \mu'\in \alpha'}}d_{\mu }d_{\mu'}\sqrt{\gamma _{\mu}(\alpha) \gamma _{\mu'}(\alpha) }.
	\ee
	 
	We get the final result by substituting the expression for the eigenvalues $\gamma_{\mu}(\alpha), \gamma_{\mu'}(\alpha)$ from Eqn.~\eqref{m2eq1} of Theorem~\ref{m2}.
\end{proof}

\section*{Acknowledgments}
MS~is supported by the grant "Mobilno{\'s}{\'c} Plus IV", 1271/MOB/IV/2015/0 from the Polish Ministry of Science and Higher Education. MH and MM are supported by National Science Centre, Poland, grant OPUS 9.  2015/17/B/ST2/01945.

\section*{Competing financial interests}

The authors declare no competing financial interests.
\newpage

\appendix
\begin{center}
{\huge \bf Port-based teleportation in arbitrary dimension}\\
\vspace{4mm}
{\Large Supplementary Information}\\
\vspace{5mm}
{\large Micha{\l} Studzi{\'n}ski, Sergii Strelchuk, Marek Mozrzymas, Micha{\l} Horodecki}
\end{center}
\section{Auxilliary fact about operator $V^{t_n}(n-1,n)$}
\label{AppB}

\begin{fact}
	\label{BF2}
Let $M_{\alpha}$ be projector including multiplicities onto $\alpha$-th irrep of the algebra $\mathcal{A}_{n}^{\operatorname{t}_{n}}(d)$, let $P_{\alpha}, P_{\beta}$ where $\alpha,\beta \vdash n-2$ be a Young projectors, and let $V^{t_n}(n-1,n)$ be a  permutation operator acting between $(n-1)$-th and $n$-th subsystems partially transposed with respect to $n$-th subsystem, then
\be
M_{\alpha}V^{t_n}(n-1,n)=P_{\alpha}V^{t_n}(n-1,n),
\ee
\end{fact}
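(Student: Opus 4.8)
The plan is to strip off the trivial tensor factor in $V^{t_n}(n-1,n)$ and reduce the stated identity to the assertion that $M_\alpha$ and $P_\alpha$ induce the same operator on the range of $V^{t_n}(n-1,n)$. First I would record the basic facts: $V^{t_n}(n-1,n)=\mathbf{1}_{1\cdots n-2}\otimes P^+_{n-1,n}$ with $P^+_{n-1,n}=|\Phi^+\rangle\langle\Phi^+|_{n-1,n}$, so $V^{t_n}(n-1,n)^2=d\,V^{t_n}(n-1,n)$ and the range of $V^{t_n}(n-1,n)$ is the $d^{\,n-2}$-dimensional subspace $W:=\mathcal{H}_{1\cdots n-2}\otimes\mathbb{C}|\Phi^+\rangle_{n-1,n}$. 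Since $P_\alpha$ is a combination of permutations of the first $n-2$ subsystems, it acts trivially on $n-1,n$, hence commutes with $V^{t_n}(n-1,n)$ and maps $W$ into $W$. The operator $M_\alpha$ is the image of a central idempotent of $\mathcal{A}_{n}^{t_n}(d)$, so it likewise commutes with the algebra element $V^{t_n}(n-1,n)$ and maps $W$ into $W$ (if $w=V^{t_n}(n-1,n)v$ then $M_\alpha w=V^{t_n}(n-1,n)M_\alpha v\in W$). Therefore it suffices to prove $M_\alpha|_W=P_\alpha|_W$; applying this to $w=V^{t_n}(n-1,n)v$ and using the commutation yields $M_\alpha V^{t_n}(n-1,n)=P_\alpha V^{t_n}(n-1,n)$.

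To compare the two restrictions to $W$, I would invoke the structure of $\mathcal{A}_{n}^{t_n}(d)$ from~\cite{Moz1,Stu1}. Since $V^{t_n}(n-1,n)\in\mathcal{M}$, the complementary ideal $\mathcal{S}$ annihilates $W$, so on $W$ the whole algebra acts through $\mathcal{M}$ and the nonzero projectors among the $M_\alpha|_W$ are exactly those with $\alpha\vdash n-2$, $h(\alpha)\le d$. Under the natural unitary $\mathcal{H}_{1\cdots n-2}\to W$, $|\psi\rangle\mapsto \tfrac1{\sqrt d}\,|\psi\rangle\otimes|\Phi^+\rangle$, the generators $V(\sigma)$ with $\sigma\in S(n-2)$ act as the natural permutation representation of $S(n-2)$ on $\mathcal{H}_{1\cdots n-2}$, while the remaining generators are handled via the contraction identity $V^{t_n}(n-1,n)\,V(a,n-1)\,V^{t_n}(n-1,n)=V^{t_n}(n-1,n)$ for $a\le n-2$ (a two-line computation with $|\Phi^+\rangle$). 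This is precisely the identification in~\cite{Moz1,Stu1} of the irrep of $\mathcal{M}$ labelled by $\alpha$ with the $S(n-2)$-irrep labelled by $\alpha$; consequently, on $W$ the $\alpha$-isotypic projector for the $\mathcal{M}$-action is the $\alpha$-isotypic projector for the natural $S(n-2)$-action, i.e.\ $M_\alpha|_W=P_\alpha|_W$, which finishes the proof. (As a consistency check one has $\tr M_\alpha|_W=d_\alpha m_\alpha=\tr P_\alpha|_W$.)

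The main obstacle is exactly the middle step: knowing that $M_\alpha$ restricts on $W$ to the $\alpha$-isotypic projector of the \emph{natural} $S(n-2)$-module, rather than to some other $S(n-2)$-equivariant projector of the same rank $d_\alpha m_\alpha$ — matching ranks alone is not enough. This is the content of the block structure of $\mathcal{M}$ proved in~\cite{Moz1,Stu1} (the correspondence between the irreps of $\mathcal{M}$ and those of $S(n-2)$, equivalently $\widetilde m_\alpha=m_\alpha$), so in the write-up I would either quote that correspondence in the precise form needed or, for a self-contained argument, substitute the explicit expansion of $M_\alpha$ from~\cite{Moz1} into $M_\alpha V^{t_n}(n-1,n)$, apply the contraction identity above repeatedly together with the character orthogonality relations for $\chi^\alpha$, and recognise the Young projector $P_\alpha$. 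Everything else — the commutation relations, the shape of the range $W$, and the final reduction — is routine.
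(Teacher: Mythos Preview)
Your reduction to proving $M_\alpha|_W=P_\alpha|_W$ on the range $W=\mathcal{H}_{1\cdots n-2}\otimes\mathbb{C}|\Phi^+\rangle$ is correct, and you rightly isolate the labelling consistency as the crux. One imprecision: $W$ is \emph{not} an invariant subspace for the full algebra $\mathcal{A}_n^{t_n}(d)$ (for instance $V(a,n-1)$ with $a<n-1$ moves vectors out of $W$), so the phrase ``on $W$ the whole algebra acts through $\mathcal{M}$'' is not quite right. This does not damage your argument, since you only need $M_\alpha$ (central) and $S(n-2)$ to preserve $W$, and both do; but the module-theoretic language should be replaced by the observation that $M_\alpha|_W$ commutes with both $S(n-2)$ and with $U^{\otimes(n-2)}$ (the latter because $\mathcal{A}_n^{t_n}(d)$ commutes with $U^{\otimes(n-1)}\otimes\bar U$, which fixes $|\Phi^+\rangle$), so by Schur--Weyl $M_\alpha|_W$ is a $0$/$1$ combination of the $P_\beta|_W$, and the remaining issue is exactly the label matching you flag.

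The paper's proof is considerably shorter and bypasses this reduction entirely. It does not analyse $M_\alpha$ at all; instead it expands $V^{t_n}(n-1,n)$ in the operator basis $\{v_{ij}^{ab}(\alpha)\}$ of~\cite{Stu1}:
\[
V^{t_n}(n-1,n)=\sum_{\alpha}\sum_{i} v_{ii}^{\,n-1,n-1}(\alpha)
=\sum_{\alpha}\sum_i E_{ii}^{\alpha}\,V^{t_n}(n-1,n)
=\sum_{\alpha}P_\alpha V^{t_n}(n-1,n),
\]
and since the $\alpha$-th summand already lies in the $\alpha$-block by construction, left-multiplying by $M_\alpha$ simply selects it. This is essentially your ``fallback'' option (substitute the explicit expansion and use orthogonality), carried out on $V^{t_n}(n-1,n)$ rather than on $M_\alpha$; done this way the contraction identity and the character orthogonality are not even needed, and the whole argument is five lines. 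Your conceptual route is valid but, as you anticipated, the label matching ultimately rests on the very definition $v_{ij}^{ab}(\alpha)=V(a,n-1)E_{ij}^{\alpha}V^{t_n}(n-1,n)V(b,n-1)$ from~\cite{Stu1}, which is what the paper invokes directly.
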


\begin{proof}
	The proof is based on the results presented in~\cite{Stu1}. Namely we know that operators $V^{t_n}(\sigma)$, where $\sigma \in S(n)$ can be decomposed in every irrep labelled by $\alpha$ in operator basis $\{v_{ij}^{ab}(\alpha)\}$, where $1\leq a,b \leq n-1$ and $1\leq i,j\leq d_{\alpha}$. In particular, when $\sigma=(n-1,n)$ we have:
	\be
	\label{12}
	V^{t_n}(n-1,n)=\sum_{\alpha \vdash n-2}\sum_{i,j=1}^{d_{\alpha }}\varphi_{ij}^{
	\alpha}(\operatorname{e})v_{ij}^{n-1,n-1}(\alpha),
	\ee
	where $\operatorname{e}$ denotes the identity element of $S(n-2)$. We see that  $\sum_{i,j=1}^{d_{\alpha }}\varphi_{ij}^{
		\alpha}(\operatorname{e})v_{ij}^{n-1,n-1}(\alpha)=M_{\alpha}V^{t_n}(n-1,n)$ is a restriction of $V^{t_n}(\sigma)$ to irrep labelled by $\alpha$, so rewriting equation~\eqref{12}
	\be
	\begin{split}
	M_{\alpha}V^{t_n}(n-1,n)=\sum_{i,j=1}^{d_{\alpha }}\varphi_{ij}^{\alpha}(\operatorname{e})v_{ij}^{n-1,n-1}(\alpha)=\sum_{i=1}^{d_{\alpha }}v_{ii}^{n-1,n-1}(\alpha),
		\end{split}
	\ee
	since $\varphi_{ij}^{\alpha}(\operatorname{e})=\delta_{ij}$. 
	Using equation~\eqref{ppp} from Section~\ref{Spectrum} we write:
	\be
	\sum_{i=1}^{d_{\alpha }}v_{ii}^{n-1,n-1}(\alpha)=\sum_{i=1}^{d_{\alpha }}E_{ii}^{\alpha}V^{t_n}(n-1,n)=P_{\alpha}V^{t_n}(n-1,n),
	\ee
	since $P_{\alpha}=\sum_{i=1}^{d_{\alpha }}E_{ii}^{\alpha}$. This finishes the proof.
\end{proof}
	
\section{Partially reduced irreducible representations (PRIR)}
\label{AppA}
In this section we derive some properties of the PRIR. This concept plays a crucial role in the simplification of the representation of the algebra $\mathcal{A}_n^{t_n}(d)$.

Let us consider an arbitrary unitary irrep $\phi^{\mu }$ of $S(n)$. It
can be always unitarily transformed to PRIR $\phi_{R}^{\mu }$ such that 
\be
\forall \pi \in S(n-1)\quad \phi _{R}^{\mu }(\pi )=\bigoplus _{\alpha \in \mu
}\varphi ^{\alpha }(\pi), 
\ee
where $\varphi ^{\alpha }$ are irreps of $S(n-1)$. By $\alpha \in \mu$ we denote Young diagrams $\alpha$ which can be obtained from $\mu$ by removing one box in the proper way. We see that the restriction of the irrep  $\phi^{\mu }$ of $S(n)$ to the subgroup $S(n-1)$ has a block-diagonal form of completely reduced representation, which in matrix notation takes the form%
\be
\label{prir1}
\forall \pi \in S(n-1)\quad \phi _{R}^{\mu }(\pi)=\left( \delta ^{\alpha \beta
}\varphi _{i_{\alpha }j_{\alpha }}^{\alpha }\right) .
\ee
The block structure of this reduced representation allows us to introduce such a block indexation for PRIR $\phi _{R}^{\mu }$ of $S(n)$, which gives 
\be
\forall \sigma \in S(n)\quad \phi _{R}^{\mu }(\sigma )=\left( \phi _{i_{\alpha
	}j_{\beta }}^{\alpha \beta }(\sigma )\right) , 
\ee
where the matrices on the diagonal $(\phi _{R}^{\mu })^{\alpha \alpha
}(\sigma )=\left( \phi _{i_{\alpha }j_{\alpha }}^{\alpha \alpha }(\sigma )\right) $ are
of dimension of corresponding irrep $\varphi ^{\alpha }$ of $S(n-1)$. The
off diagonal blocks need not to be square. The matrices $(\phi _{R}^{\mu
})^{\alpha \alpha }(\sigma )=\left( \phi _{i_{\alpha }j_{\alpha }}^{\alpha \alpha
}(\sigma )\right) $ on the diagonal of the matrix $\phi _{R}^{\mu }(\sigma )$ have
the following important properties:

\begin{proposition}
	\label{MP1}
	Let $(\phi _{R}^{\mu })^{\alpha \alpha }(\sigma )=\left( \phi _{i_{\alpha
		}j_{\alpha }}^{\alpha \alpha }(\sigma )\right) $ be the matrices on the diagonal of
	the PRIR matrix $\phi _{R}^{\mu }(\sigma )$, then%
	\be
	\forall \alpha \in \mu \quad \varphi ^{\alpha }(\pi )\left( \phi _{R}^{\mu
	})^{\alpha \alpha }(an)\right) \varphi ^{\alpha }(\pi^{-1})=\left( \phi _{R}^{\mu
	})^{\alpha \alpha }(\pi(a)n)\right) , 
	\ee
	and from this it follows%
	\be
	\forall \alpha \in \mu \quad \forall \pi \in S(n-1)\quad \forall
	a=1,\ldots,n-1\qquad \tr\left[ (\phi _{R}^{\mu })^{\alpha \alpha }(an)\right] =\tr\left[ (\phi _{R}^{\mu
	})^{\alpha \alpha }(\pi (a)n)\right] , 
	\ee
	so the trace is constant on the transpositions which naturally indexed the coset $S(n)/S(n-1).$
\end{proposition}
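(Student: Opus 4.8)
The plan is to derive both assertions from a single conjugation identity in the symmetric group together with the block structure of the PRIR. The starting observation is the elementary group-theoretic fact that for any $\pi \in S(n-1)$ and any $a \in \{1,\dots,n-1\}$ one has $\pi\,(a\,n)\,\pi^{-1} = (\pi(a)\,n)$ inside $S(n)$: conjugation carries the transposition $(a\,n)$ to $(\pi(a)\,\pi(n))$, and $\pi$, viewed as an element of $S(n-1)\le S(n)$, fixes the point $n$, so $\pi(n)=n$.

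Next I would apply the representation homomorphism $\phi_R^\mu$ to this identity, obtaining $\phi_R^\mu(\pi)\,\phi_R^\mu(a\,n)\,\phi_R^\mu(\pi^{-1}) = \phi_R^\mu(\pi(a)\,n)$, and then invoke the defining property of a PRIR, namely that for $\pi\in S(n-1)$ the matrix $\phi_R^\mu(\pi)=\bigoplus_{\alpha\in\mu}\varphi^\alpha(\pi)$ is block diagonal with respect to the index splitting $i_\alpha$. Since conjugation by a block-diagonal matrix acts blockwise, the $(\alpha,\beta)$ block of the left-hand side equals $\varphi^\alpha(\pi)\,(\phi_R^\mu)^{\alpha\beta}(a\,n)\,\varphi^\beta(\pi^{-1})$; comparing the diagonal block $\alpha=\beta$ on both sides of the displayed equation yields exactly $\varphi^\alpha(\pi)\,(\phi_R^\mu)^{\alpha\alpha}(a\,n)\,\varphi^\alpha(\pi^{-1}) = (\phi_R^\mu)^{\alpha\alpha}(\pi(a)\,n)$, which is the first assertion.

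For the trace statement I would take the trace of both sides and use cyclicity of the trace together with $\varphi^\alpha(\pi^{-1})=\varphi^\alpha(\pi)^{-1}$ to cancel the conjugating matrices, giving $\tr[(\phi_R^\mu)^{\alpha\alpha}(a\,n)] = \tr[(\phi_R^\mu)^{\alpha\alpha}(\pi(a)\,n)]$ for every $\pi\in S(n-1)$. Because $S(n-1)$ acts transitively on $\{1,\dots,n-1\}$, this common value does not depend on $a$ at all; identifying the transpositions $(a\,n)$ for $a=1,\dots,n-1$ with the nontrivial left cosets in $S(n)/S(n-1)$ (the trivial coset $S(n-1)$ being handled separately and trivially) gives precisely the claimed invariance of the trace on the natural coset representatives.

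I do not expect a serious obstacle here; the only point needing care is the bookkeeping of the block decomposition — one must check that conjugating $\phi_R^\mu(a\,n)=(\phi^{\alpha\beta}_{i_\alpha j_\beta}(a\,n))$ by the block-diagonal $\phi_R^\mu(\pi)$ genuinely produces the blockwise conjugate, so that the $(\alpha,\alpha)$ diagonal block of the product is the conjugate of the $(\alpha,\alpha)$ block of $\phi_R^\mu(a\,n)$ and the off-diagonal blocks contribute nothing to the diagonal. It is also worth stating explicitly why $a$ ranges only over $1,\dots,n-1$: $(n\,n)$ is trivial, and these $n-1$ transpositions together with $S(n-1)$ exhaust the cosets of $S(n)/S(n-1)$.
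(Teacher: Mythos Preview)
Your proposal is correct and follows essentially the same approach as the paper's proof: both start from the conjugation identity $\pi\circ(an)\circ\pi^{-1}=(\pi(a)n)$ in $S(n)$, apply the representation $\phi_R^\mu$, use the block-diagonal form of $\phi_R^\mu(\pi)$ for $\pi\in S(n-1)$ to extract the diagonal-block relation, and then take the trace. Your additional remarks on transitivity of $S(n-1)$ on $\{1,\dots,n-1\}$ and on the coset interpretation are helpful clarifications that the paper leaves implicit.
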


\begin{proof}
	From the composition rule in $S(n)$ we have 
	\be
	\forall \pi \in S(n-1)\quad \forall a=1,\ldots,n-1\qquad \quad \pi \circ (an)\circ \pi
	^{-1}=(\pi (a)n), 
	\ee
	which implies 
	\be
	\label{a7}
	\forall \pi \in S(n-1)\quad \forall a=1,\ldots,n-1\qquad \phi _{R}^{\mu }(\pi
	)\phi _{R}^{\mu }(an)\phi _{R}^{\mu }(\pi ^{-1})=\phi _{R}^{\mu }(\pi
	(a)n), 
	\ee
	where the matrices $\phi _{R}^{\mu }(\pi )$, $\phi _{R}^{\mu
	}(\pi ^{-1})$ are block-diagonal (see expression~\eqref{prir1}). From
	multiplication rule of block diagonal matrices we get that the 
	equation~\eqref{a7} in the irrep $\phi _{R}^{\mu }$ yields the following equations
	for its diagonal blocks determined by the irrep's $\varphi ^{\alpha }$
	of $S(n-1)$%
	\be
	\forall \alpha \in \mu \quad \varphi ^{\alpha }(\pi )\left( \phi _{R}^{\mu
	})^{\alpha \alpha }(an)\right) \varphi ^{\alpha }(\pi^{-1})=\left( \phi _{R}^{\mu
	})^{\alpha \alpha }(\pi(a)n)\right) . 
	\ee
	Taking the trace, in $\M(d_{\alpha},\mathbb{C})$, on this equation we get the second statement of the Proposition.
\end{proof}

Further we have the following sum rules

\begin{proposition}
	\label{Prir1}
	The PRIR $\phi _{R}^{\mu }$ of $S(n)$ satisfies the following sum rules%
	\be
	\label{Prir1eq}
	\sum_{a=1}^{n-1}(\phi _{R}^{\mu })(an)=\frac{n(n-1)}{2}\frac{\chi ^{\mu
		}(12)}{d_{\mu }}\mathbf{1}_{\phi ^{\mu }}-\bigoplus _{\alpha \in \mu }\frac{%
		(n-1)(n-2)}{2}\frac{\chi ^{\alpha }(12)}{d_{\alpha }}\mathbf{1}_{\varphi
		^{\alpha }}, 
	\ee
	which implies that for the diagonal blocks we have 
	\be
	\forall \alpha \in \mu \qquad \sum_{a=1}^{n-1}(\phi _{R}^{\mu })^{\alpha
		\alpha }(an)=\left[ \frac{n(n-1)}{2}\frac{\chi ^{\mu }(12)}{d_{\mu }}%
	-\frac{(n-1)(n-2)}{2}\frac{\chi ^{\alpha }(12)}{d_{\alpha }}%
	\right] \mathbf{1}_{\varphi ^{\alpha }}. 
	\ee
\end{proposition}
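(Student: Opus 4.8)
The plan is to recognize $\sum_{a=1}^{n-1}(an)$ as a difference of two class sums and to exploit that a full class sum acts as a scalar on each irreducible representation. Write $C_n=\sum_{1\le i<j\le n}(ij)$ for the sum of all transpositions in $S(n)$ and $C_{n-1}=\sum_{1\le i<j\le n-1}(ij)$ for the analogous central element of $\mathbb{C}[S(n-1)]$. A transposition $(ij)$ with $j\le n-1$ is one of the summands of $C_{n-1}$, while one with $j=n$ is $(an)$ for some $a\le n-1$; hence in $\mathbb{C}[S(n)]$ we have $C_n=C_{n-1}+\sum_{a=1}^{n-1}(an)$, i.e.\ $\sum_{a=1}^{n-1}(an)=C_n-C_{n-1}$. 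Applying the representation $\phi_R^\mu$ reduces the claim to evaluating $\phi_R^\mu(C_n)$ and $\phi_R^\mu(C_{n-1})$ separately.

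For the first term, $C_n$ is central in $\mathbb{C}[S(n)]$, so by Schur's lemma it acts on the irrep $\phi^\mu$ as a scalar, and the scalar is given by the standard eigenvalue formula $\omega_\mu(C_n)=\frac{|\text{class}|\,\chi^\mu(12)}{d_\mu}=\frac{n(n-1)}{2}\frac{\chi^\mu(12)}{d_\mu}$; thus $\phi_R^\mu(C_n)=\frac{n(n-1)}{2}\frac{\chi^\mu(12)}{d_\mu}\mathbf{1}_{\phi^\mu}$. For the second term I would invoke the defining property of the PRIR: restricted to $S(n-1)$, $\phi_R^\mu$ is block diagonal, $\phi_R^\mu|_{S(n-1)}=\bigoplus_{\alpha\in\mu}\varphi^\alpha$ (equation~\eqref{prir1}). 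Since $C_{n-1}\in\mathbb{C}[S(n-1)]$ is central there, Schur's lemma and the same eigenvalue formula give that $C_{n-1}$ acts on each block $\varphi^\alpha$ as the scalar $\frac{(n-1)(n-2)}{2}\frac{\chi^\alpha(12)}{d_\alpha}$, so $\phi_R^\mu(C_{n-1})=\bigoplus_{\alpha\in\mu}\frac{(n-1)(n-2)}{2}\frac{\chi^\alpha(12)}{d_\alpha}\mathbf{1}_{\varphi^\alpha}$. Subtracting these two expressions yields the first displayed identity of the Proposition.

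The block version is then obtained by reading off the $(\alpha,\alpha)$ diagonal block of this matrix identity: the $(\alpha,\alpha)$ block of $\mathbf{1}_{\phi^\mu}$ is $\mathbf{1}_{\varphi^\alpha}$, and the direct sum on the right contributes only the term $\frac{(n-1)(n-2)}{2}\frac{\chi^\alpha(12)}{d_\alpha}\mathbf{1}_{\varphi^\alpha}$ in that block, so $\sum_{a=1}^{n-1}(\phi_R^\mu)^{\alpha\alpha}(an)=\bigl[\frac{n(n-1)}{2}\frac{\chi^\mu(12)}{d_\mu}-\frac{(n-1)(n-2)}{2}\frac{\chi^\alpha(12)}{d_\alpha}\bigr]\mathbf{1}_{\varphi^\alpha}$.

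I do not expect a genuine obstacle here. The one point worth stating explicitly is that this argument also shows $\sum_{a}(\phi_R^\mu)(an)$ is block diagonal (indeed block scalar) in the PRIR indexation, which is automatic since both $\phi_R^\mu(C_n)$ and $\phi_R^\mu(C_{n-1})$ are of that form; and one should cite the scalar-eigenvalue formula for central class sums in the form $\omega_\lambda(C)=|C|\,\chi^\lambda(g)/d_\lambda$ for any $g$ in the conjugacy class $C$, which is standard.
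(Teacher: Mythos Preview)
Your proof is correct and essentially identical to the paper's: both split the sum of all transpositions in $S(n)$ into the $S(n-1)$ part and the $(an)$ part, evaluate the former on each PRIR block via the scalar action of the class sum on $\varphi^\alpha$, evaluate the full class sum as a scalar on $\phi^\mu$, and subtract. The only cosmetic difference is that you name the class sums $C_n$, $C_{n-1}$ and invoke Schur's lemma explicitly, whereas the paper cites the scalar identity as a ``classical equation'' and proceeds in the same order.
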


\begin{proof}
	The starting point is the classical equation
	\be
	\label{P11p}
	\sum_{(ab)\in S(n)}(\phi _{R}^{\mu })(ab)=\frac{n(n-1)}{2}\frac{\chi ^{\mu
		}(12)}{d_{\mu }}\mathbf{1}_{\phi ^{\mu }}, 
	\ee
	which holds for any irrep of $S(n)$. We rewrite $LHS$ of equation~\eqref{P11p} separating the terms
	in $S(n-1)$ 
	\be
	\label{terms}
	\begin{split}
		\sum_{(ab)\in S(n)}(\phi _{R}^{\mu })(ab)&=\sum_{a=1}^{n-1}(\phi _{R}^{\mu
		})(an)+\sum_{(cd)\in S(n-1)}(\phi _{R}^{\mu })(cd)\\ 
		&=\sum_{a=1}^{n-1}(\phi _{R}^{\mu })(an)+\bigoplus _{\alpha \in \mu
		} \ \sum_{(cd)\in S(n-1)}\varphi ^{\alpha }(cd). 
	\end{split}
	\ee
	Now we use one more equation~\eqref{P11p}  to each irrep $%
	\varphi ^{\alpha }$ in the direct sum of $RHS$ in equation~\eqref{terms} we get
	\be
	\sum_{(ab)\in S(n)}(\phi _{R}^{\mu })(ab)=\sum_{a=1}^{n-1}(\phi _{R}^{\mu
	})(an)+\bigoplus _{\alpha \in \mu }\frac{(n-1)(n-2)}{2}\frac{\chi ^{\alpha }(12)%
	}{d_{\alpha }}\mathbf{1}_{\varphi ^{\alpha }}. 
	\ee
	This equation together with expression~\eqref{P11p} gives the first statement of the
	proposition.
\end{proof}

\begin{remark}
	Equation~\eqref{Prir1eq} in Proposition~\ref{Prir1} may be written in a  more
	explicit form as follows:
	\be
	\forall \alpha \in \mu \qquad \sum_{a=1}^{n-1}(\phi _{R}^{\mu })_{i_{\alpha
		}j_{\alpha }}^{\alpha \alpha }(an)=\left[ \frac{n(n-1)}{2}\frac{\chi ^{\mu }(12)%
	}{d_{\mu }}-\frac{(n-1)(n-2)}{2}\frac{\chi ^{\alpha }(12)}{d_{\alpha }}\right] \delta _{i_{\alpha }j_{\alpha }},
	\ee
	where $i_{\alpha},j_{\alpha}=1,\ldots,d_{\alpha}$.
\end{remark}

We have one more summation rule, which plays a role of the standard orthogonality
relation for irreps. Namely we have the following:

\begin{proposition}
	\label{AP13}
	The PRIR $\phi _{R}^{\mu }$ of $S(n)$ satisfies the following bilinear sum rule%
	\be
	\forall \alpha ,\beta ,\gamma \in \mu \qquad \sum_{a=1}^{n}\sum_{k_{\beta
		}=1}^{d_{\beta}}(\phi _{R}^{\mu })_{i_{\alpha }k_{\beta }}^{\alpha \beta
	}(an)(\phi _{R}^{\mu })_{k_{\beta }j_{\gamma }}^{\beta \gamma }(an)=n\frac{%
		d_{\beta }}{d_{\mu }}\delta ^{\alpha \gamma }\delta
	_{i_{\alpha }j_{\gamma }}, 
	\ee
	where $\alpha, \beta, \gamma $ are irreps of $S(n-1)$ contained
	in the irrep $\mu $ of $S(n)$.
\end{proposition}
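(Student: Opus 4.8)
The plan is to obtain this bilinear relation from the Schur orthogonality relations for the irrep $\phi^{\mu}$ of $S(n)$, by combining them with the decomposition of $S(n)$ into cosets of $S(n-1)$ represented by the transpositions $(an)$ and with the defining block-diagonal structure of a PRIR restricted to $S(n-1)$.

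First I would recall that $(an)$ for $a=1,\dots,n$ (with $(nn)=\mathrm{e}$) is a complete set of representatives of the left cosets of $S(n-1)$ in $S(n)$: every $g\in S(n)$ is uniquely $g=(an)\pi$ with $a=g(n)$ and $\pi\in S(n-1)$, and since $(an)^{-1}=(an)$ we have $g^{-1}=\pi^{-1}(an)$. Next I would apply the Schur orthogonality relation in bilinear form, $\sum_{g\in S(n)}\phi^{\mu}_{ab}(g)\phi^{\mu}_{cd}(g^{-1})=\tfrac{n!}{d_{\mu}}\delta_{ad}\delta_{bc}$ (valid for the unitary irrep $\phi^{\mu}_{R}$), with the ``flat'' PRIR index $(\beta,k_{\beta})$ playing the role of both the column of the first factor and the row of the second. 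Summing the repeated index over $k_{\beta}=1,\dots,d_{\beta}$ produces the factor $d_{\beta}$, giving
\[
\sum_{g\in S(n)}\;\sum_{k_{\beta}=1}^{d_{\beta}}(\phi^{\mu}_{R})^{\alpha\beta}_{i_{\alpha}k_{\beta}}(g)\,(\phi^{\mu}_{R})^{\beta\gamma}_{k_{\beta}j_{\gamma}}(g^{-1})=\frac{n!\,d_{\beta}}{d_{\mu}}\,\delta^{\alpha\gamma}\delta_{i_{\alpha}j_{\gamma}}.
\]

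Then I would split the sum over $S(n)$ according to the coset decomposition, writing $g=(an)\pi$ and $g^{-1}=\pi^{-1}(an)$, and use that $\phi^{\mu}_{R}$ restricted to $S(n-1)$ is block diagonal with diagonal blocks $\varphi^{\alpha}$ (the defining property of PRIR, \eqref{prir1}). This factorises $(\phi^{\mu}_{R})^{\alpha\beta}_{i_{\alpha}k_{\beta}}((an)\pi)=\sum_{l_{\beta}}(\phi^{\mu}_{R})^{\alpha\beta}_{i_{\alpha}l_{\beta}}(an)\,\varphi^{\beta}_{l_{\beta}k_{\beta}}(\pi)$ and, likewise, $(\phi^{\mu}_{R})^{\beta\gamma}_{k_{\beta}j_{\gamma}}(\pi^{-1}(an))=\sum_{m_{\beta}}\varphi^{\beta}_{k_{\beta}m_{\beta}}(\pi^{-1})\,(\phi^{\mu}_{R})^{\beta\gamma}_{m_{\beta}j_{\gamma}}(an)$, where block-diagonality of $\phi^{\mu}_{R}(\pi^{\pm1})$ pins the intermediate block to $\beta$ so that no further sum over irreps of $S(n-1)$ survives. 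Carrying out the sum over $\pi\in S(n-1)$ and over $k_{\beta}$ collapses the two $\varphi^{\beta}$ factors via $\sum_{k_{\beta}}\varphi^{\beta}_{l_{\beta}k_{\beta}}(\pi)\varphi^{\beta}_{k_{\beta}m_{\beta}}(\pi^{-1})=[\varphi^{\beta}(\mathrm{e})]_{l_{\beta}m_{\beta}}=\delta_{l_{\beta}m_{\beta}}$, and then $\sum_{\pi\in S(n-1)}\delta_{l_{\beta}m_{\beta}}=(n-1)!\,\delta_{l_{\beta}m_{\beta}}$. What remains on the left is exactly $(n-1)!$ times the quantity appearing in the Proposition; dividing by $(n-1)!$ and using $n!/(n-1)!=n$ yields $n\,d_{\beta}/d_{\mu}\,\delta^{\alpha\gamma}\delta_{i_{\alpha}j_{\gamma}}$, as claimed.

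I do not expect a genuine obstacle: the whole content is the interplay of Schur orthogonality with the coset structure, already used in the proofs of Propositions~\ref{MP1} and~\ref{Prir1}. The only point requiring care is the PRIR index bookkeeping — specifically, checking that the intermediate block in $\phi^{\mu}_{R}((an)\pi)=\phi^{\mu}_{R}(an)\phi^{\mu}_{R}(\pi)$ is forced to $\beta$, and tracking where the factor $d_{\beta}$ (rather than $1$) enters when the repeated flat index is summed. An essentially equivalent alternative, which I would mention but not make primary, is to recognise the contracted middle index as the image under $\phi^{\mu}_{R}$ of the central idempotent $e_{\beta}\in\mathbb{C}[S(n-1)]$ projecting onto the (multiplicity-free) component $\beta$ of $\mu\!\downarrow\! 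S(n-1)$, so that the left-hand side equals $\phi^{\mu}_{R}\!\bigl(\sum_{a=1}^{n}(an)\,e_{\beta}\,(an)\bigr)$, and evaluate that group-algebra element directly; the Schur-orthogonality route is preferable because it makes the normalisation $n\,d_{\beta}/d_{\mu}$ manifest.
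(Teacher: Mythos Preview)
Your proposal is correct and follows essentially the same route as the paper: Schur orthogonality for $\phi_R^\mu$ on $S(n)$, the coset decomposition $g=(an)\pi$, and the PRIR block-diagonality on $S(n-1)$ to pin the intermediate block to $\beta$. The only cosmetic difference is that the paper keeps $k_\beta$ fixed and invokes Schur orthogonality a second time for $\varphi^\beta$ on $S(n-1)$ to produce the factor $(n-1)!/d_\beta$, whereas you sum over $k_\beta$ at the outset and instead use $\sum_{k_\beta}\varphi^\beta_{l_\beta k_\beta}(\pi)\varphi^\beta_{k_\beta m_\beta}(\pi^{-1})=\delta_{l_\beta m_\beta}$ followed by the trivial $|S(n-1)|=(n-1)!$; the bookkeeping of the $d_\beta$ factor comes out the same either way.
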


\begin{proof}
	The proof is based on the standard orthogonality relations for irreps,
	which in PRIR notation take the following form%
	\be
	\label{Prop13eq1}
	\forall \alpha ,\beta ,\gamma \in \mu \qquad \sum_{\sigma \in S(n)}(\phi
	_{R}^{\mu })_{i_{\alpha }k_{\beta }}^{\alpha \beta }(\sigma ^{-1})(\phi
	_{R}^{\mu })_{k_{\beta }j_{\gamma }}^{\beta \gamma }(\sigma )=\frac{n!}{d_{\mu }}\delta ^{\alpha \gamma }\delta _{i_{\alpha }j_{\gamma }},
	\ee
	for any irreps $\alpha ,\beta ,\gamma $ of the group $S(n-1)$ which are
	contained in the irrep $\mu $ of $S(n)$. On the other hand we may rewrite
	the $LHS$ of the above equation as follows 
	\be
	LHS=\sum_{a=1}^{n} \ \sum_{\pi \in S(n-1)} \ \sum_{\xi ,\theta \in \mu
	} \ \sum_{p_{\xi },q_{\theta }}(\phi _{R}^{\mu })_{i_{\alpha p_{\xi }}}^{\alpha
		\xi }(an)(\phi _{R}^{\mu })_{p_{\xi }k_{\beta }}^{\xi \beta }(\pi
	^{-1})(\phi _{R}^{\mu })_{k_{\beta }q_{\theta }}^{\beta \theta }(\pi )(\phi
	_{R}^{\mu })_{q_{\theta }j_{\gamma }}^{\theta \gamma }(an).
	\ee
	Taking into account equation~\eqref{prir1}  we obtain
	\be
	LHS=\sum_{a=1}^{n} \ \sum_{\pi \in S(n-1)} \ \sum_{p_{\beta },q_{\beta }}(\phi
	_{R}^{\mu })_{i_{\alpha p_{\beta }}}^{\alpha \beta }(an)\varphi ^{\beta
	}{}_{p_{\beta }k_{\beta }}(\pi ^{-1})\varphi ^{\beta }_{k_{\beta
		}q_{\beta }}(\pi)(\phi _{R}^{\mu })_{q_{\beta }j_{\gamma }}^{\beta \gamma
	}(an),
	\ee
	next applying the orthogonality relations for irreps $\varphi ^{\beta}$ of $S(n-1)$ we get 
	\be
	LHS=\frac{(n-1)!}{d_{\beta }}\sum_{a=1}^{n}\sum_{p_{%
			\beta }}^{d_{\beta }}(\phi _{R}^{\mu })_{i_{\alpha
			p_{\beta }}}^{\alpha \beta }(an)(\phi _{R}^{\mu })_{p_{\beta }j_{\gamma
	}}^{\beta \gamma }(an).
	\ee
	Now comparing this with the $RHS$ of the equation~\eqref{Prop13eq1}, we obtain
	the statement of the proposition.
\end{proof}
As a corollary from Propositions~\ref{MP1} and Proposition~\ref{Prir1} we get

\begin{corollary}
	\label{Cc}
	\be
	\forall \alpha \in \mu \quad \forall a=1,\ldots,n-1\qquad \tr\left[ (\phi _{R}^{\mu
	})^{\alpha \alpha }(an)\right] =\frac{n}{2}\frac{d_{\alpha }}{d_{\beta }}\chi ^{\mu }(12)-\frac{n-2}{2}\chi ^{\alpha }(12). 
	\ee
\end{corollary}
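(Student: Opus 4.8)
The plan is to read the corollary off directly from Propositions~\ref{MP1} and~\ref{Prir1}, since no new ingredient is needed. (Note in passing that $d_{\beta}$ in the statement should be read as $d_{\mu}$: there is no independent $\beta$ in play here, only $\alpha\in\mu$.)

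First I would start from the explicit block form of the sum rule recorded in the Remark following Proposition~\ref{Prir1}, namely
\be
\sum_{a=1}^{n-1}(\phi _{R}^{\mu })_{i_{\alpha}j_{\alpha }}^{\alpha \alpha }(an)=\left[ \frac{n(n-1)}{2}\frac{\chi ^{\mu }(12)}{d_{\mu }}-\frac{(n-1)(n-2)}{2}\frac{\chi ^{\alpha }(12)}{d_{\alpha }}\right] \delta _{i_{\alpha }j_{\alpha }}.
\ee
Setting $j_{\alpha}=i_{\alpha}$ and summing over $i_{\alpha}=1,\ldots,d_{\alpha}$ turns the left-hand side into $\sum_{a=1}^{n-1}\tr\left[(\phi_{R}^{\mu})^{\alpha\alpha}(an)\right]$, while the right-hand side picks up the factor $\sum_{i_{\alpha}}\delta_{i_{\alpha}i_{\alpha}}=d_{\alpha}$.

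Second, I would invoke the second assertion of Proposition~\ref{MP1}: for each fixed $\alpha\in\mu$ the number $\tr\left[(\phi_{R}^{\mu})^{\alpha\alpha}(an)\right]$ is the same for all $a=1,\ldots,n-1$, since the transpositions $(an)$ are permuted among themselves by conjugation with $S(n-1)$. Hence the sum on the left is simply $(n-1)$ times this common value; dividing through by $(n-1)$ and simplifying the bracket yields
\be
\tr\left[(\phi_{R}^{\mu})^{\alpha\alpha}(an)\right]=\frac{n}{2}\frac{d_{\alpha}}{d_{\mu}}\chi^{\mu}(12)-\frac{n-2}{2}\chi^{\alpha}(12),
\ee
which is the claimed identity. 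I do not expect any genuine obstacle: the only point to watch is that the range of $a$ over which Proposition~\ref{MP1} guarantees constancy of the block trace is exactly $1,\ldots,n-1$, matching the index set of the summation, so the factor $(n-1)$ cancels cleanly and no boundary term is left over.
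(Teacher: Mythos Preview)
Your proposal is correct and follows exactly the route the paper indicates: the corollary is stated immediately after the phrase ``As a corollary from Propositions~\ref{MP1} and Proposition~\ref{Prir1} we get'' with no further argument, and your two steps (tracing the block sum rule from Proposition~\ref{Prir1}, then invoking the constancy of the block trace from Proposition~\ref{MP1} to divide by $n-1$) are precisely what is meant. Your observation that $d_{\beta}$ should read $d_{\mu}$ is also correct.
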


\section{Auxiliary facts concerning Young projectors}
\label{AppC}
Let us define the following set of permutations
\be
\label{dec1}
\Sigma_a=\left\lbrace \sigma \in S(n-1) \ : \ \sigma(a)=n-1\right\rbrace,\quad \text{then we have} \quad S(n-1)=\bigcup_{a=1}^{n-1}\Sigma_a.
\ee
Now we see that for every $\sigma \in \Sigma_a$ permutation $\sigma \circ (a,n-1)$ belongs to $S(n-2)$, since $(\sigma \circ (a,n-1))(n-1)=n-1$. Such property allows us to rewrite Young projectors $P_{\mu}$, where $\mu \vdash n-1$ in a more convenient form, namely we have the following:
\begin{fact}
	\label{Fij}
	Young projector $P_{\mu}$, where $\mu \vdash n-1$ can be written as
	\be
	P_\mu=\frac{d_{\mu}}{(n-1)!}\sum_{a=1}^{n-1}\sum_{i,j=1}^{d_{\mu}}\varphi_{ij}^{\mu}(a,n-1)V(a,n-1)F_{ij}^{\mu},
	\ee
	where
	\be
	\label{mal}
	F_{ij}^{\mu}=\sum_{\pi \in S(n-2)} \varphi_{ji}^{\mu}\left(\pi^{-1} \right)V(\pi). 
	\ee
	By $\varphi_{ij}^{\mu}(a,n-1), \varphi_{ji}^{\mu}\left(\pi^{-1} \right) $ we denote matrix elements of irreducible representations labelled by partition $\mu$ for the permutations $(a,n-1), \pi^{-1}$ respectively. Note that in the equation~\eqref{mal} we compute matrix elements of irreducible representations for partition $\mu \vdash n-1$, but over subgroup $S(n-2)\subset S(n-1)$.
\end{fact}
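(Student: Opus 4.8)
The plan is to obtain the identity in Fact~\ref{Fij} directly from the classical closed form of the Young projector as a normalized group sum over $S(n-1)$, reorganized along the coset decomposition \eqref{dec1}. I would start from $P_\mu=\frac{d_\mu}{(n-1)!}\sum_{\sigma\in S(n-1)}\chi^{\mu}(\sigma^{-1})\,V(\sigma)$, the standard projector onto the $\mu$-isotypic subspace (as in \cite{Stu1,Curtis}), and expand the character as $\chi^{\mu}(\sigma^{-1})=\sum_{i=1}^{d_\mu}\varphi^{\mu}_{ii}(\sigma^{-1})$. This already supplies the overall constant $d_\mu/(n-1)!$ of the target formula, so the whole content of the claim is the rewriting of the sum $\sum_{\sigma\in S(n-1)}$ as a double sum over $a\in\{1,\dots,n-1\}$ and over $\pi\in S(n-2)$.

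Next I would use the partition $S(n-1)=\bigsqcup_{a=1}^{n-1}\Sigma_a$ from \eqref{dec1} and, on each block, the factorization recorded just before \eqref{mal}: the map $\pi\mapsto\sigma=\pi\circ(a,n-1)$ is a bijection $S(n-2)\to\Sigma_a$, with inverse $\sigma\mapsto\sigma\circ(a,n-1)$. Substituting $\sigma=\pi\circ(a,n-1)$ into $V(\sigma)$ by multiplicativity of the (anti)representation, and into the character via the matrix-product rule $\chi^{\mu}(\sigma^{-1})=\sum_i\varphi^{\mu}_{ii}\big((a,n-1)\pi^{-1}\big)=\sum_{i,j}\varphi^{\mu}_{ij}(a,n-1)\,\varphi^{\mu}_{ji}(\pi^{-1})$, cleanly separates the $(a,n-1)$-dependence from the $S(n-2)$-dependence. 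The sum over $\pi\in S(n-2)$ that is left over is, by definition \eqref{mal}, exactly $F^{\mu}_{ij}$, while the accompanying prefactor is $\varphi^{\mu}_{ij}(a,n-1)\,V(a,n-1)$; re-collecting the sums over $a$ and over $i,j$ yields $P_\mu=\frac{d_\mu}{(n-1)!}\sum_{a=1}^{n-1}\sum_{i,j=1}^{d_\mu}\varphi^{\mu}_{ij}(a,n-1)\,V(a,n-1)\,F^{\mu}_{ij}$, as claimed.

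I expect the only genuine difficulty to be the combinatorial bookkeeping needed to make the operator order and the index order match the conventions of \cite{Stu1}: one must be consistent about whether $V$ is a homomorphism or an anti-homomorphism of $S(n)$ (equivalently, whether the transposition $(a,n-1)$ is pulled out to the left or to the right of $\pi$) and whether the blocks $\Sigma_a$ are indexed by $\sigma^{-1}(n-1)$ or by $\sigma(n-1)$; each internally consistent choice reproduces the stated formula, possibly after the innocuous relabelling $i\leftrightarrow j$ and, if a unitary rather than a real orthogonal form of $\varphi^{\mu}$ is used, after invoking $\varphi^{\mu}_{ij}(\pi^{-1})=\overline{\varphi^{\mu}_{ji}(\pi)}$. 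One should also check that $(a,\pi)\mapsto\sigma$ runs over $S(n-1)$ exactly once, which is immediate since every $\sigma$ satisfies $\sigma(a)=n-1$ for precisely one index $a$; this guarantees that no term is omitted or double counted. Beyond these checks, the manipulations are routine.
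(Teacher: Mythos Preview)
Your proposal is correct and follows essentially the same route as the paper's own proof: start from $P_\mu=\frac{d_\mu}{(n-1)!}\sum_{\sigma\in S(n-1)}\chi^\mu(\sigma^{-1})V(\sigma)$, apply the coset decomposition \eqref{dec1} with the bijection $S(n-2)\ni\pi\leftrightarrow\sigma\in\Sigma_a$, expand $\chi^\mu$ via matrix elements and the homomorphism property, and regroup the $S(n-2)$-sum into $F^\mu_{ij}$. The only visible difference is notational---the paper writes the coset representative on the left, $\sigma=(a,n-1)\circ\pi$, whereas your parametrization $\sigma=\pi\circ(a,n-1)$ is the one literally forced by the definition $\Sigma_a=\{\sigma:\sigma(a)=n-1\}$; you correctly flagged this kind of left/right and $i\leftrightarrow j$ bookkeeping as the only place requiring care.
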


\begin{proof}
	Proof is based on straightforward calculations and observations summarized in the formula~\eqref{dec1}. We have the following chain of equalities:
	\be
	\begin{split}
		P_{\mu}&=\frac{d_{\mu}}{(n-1)!}\sum_{\sigma \in S(n-1)}\chi^{\mu}\left(\sigma^{-1} \right)V(\sigma)=\frac{d_{\mu}}{(n-1)!}\sum_{a=1}^{n-1}\sum_{\pi\in S(n-2)}\chi^{\mu}\left((a,n-1)\circ \pi^{-1} \right) V\left((a,n-1)\circ \pi \right)\\
		&=d_{\mu}\sum_{a=1}^{n-1}\sum_{i,j=1}^{d_{\mu}}\varphi_{ij}^{\mu}(a,n-1)V(a,n-1)\left(\frac{1}{(n-1)!}\sum_{\pi\in S(n-2)}\varphi_{ji}^{\mu}\left(\pi^{-1}\right)V(\pi) \right)\\
		&=d_{\mu}\sum_{a=1}^{n-1}\sum_{i,j=1}^{d_{\mu}}\varphi_{ij}^{\mu}(a,n-1)V(a,n-1)F_{ij}^{\mu}.
	\end{split}
	\ee
\end{proof}

 Every irreducible block labelled by $\mu \vdash n-1$ can be decomposed as a direct sum of smaller irreducible blocks labelled by partitions $\beta \vdash n-2$. Every such partition $\beta$ is obtained by removing a single box from $\mu$ in the proper way. This togehter with the notion of PRIRs defined in Appendix~\ref{AppA} allows us to decompose every $F_{ij}^{\mu}$ from Fact~\ref{Fij} as
\be
\label{fij}
F_{ij}^{\mu}=\bigoplus_{\beta=\mu-\Box} \ \sum_{\pi \in S(n-2)}(\varphi^{\mu}_R)^{\beta \beta}_{i_{\beta}j_{\beta}}\left(\pi^{-1}\right)V(\pi).
\ee
 Moreover, every operator $F_{ij}^{\mu}$ can be expressed in therms of the projectors $E_{ij}^{\beta}$  as
\be
\label{Fprop}
F_{ij}^{\mu}=\bigoplus_{\beta=\mu-\Box}\frac{(n-2)!}{d_{\beta}}E^{\beta}_{i_{\beta}j_{\beta}}.
\ee

\begin{fact}
	\label{fact11}
	Suppose that we are given an irreducible representation labelled by $\mu \vdash n-1$, then for every swap operator $V(k,n-1)$ between $k^{\text{th}}$ and $(n-1)^{\text{th}}$ subsystem, and Young projector $P_{\mu}$ we have
	\be
	\sum_{k=1}^{n-1} V(k,n-1)P_{\mu}V(k,n-1)=(n-1)P_{\mu}.
	\ee
\end{fact}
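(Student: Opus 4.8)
The plan is to reduce the claimed identity to a statement about a single summand: I will show that $V(k,n-1)P_{\mu}V(k,n-1)=P_{\mu}$ for each $k=1,\ldots,n-1$ (with $(n-1,n-1)$ read as the identity), after which summing the $n-1$ equal terms gives $(n-1)P_{\mu}$ at once.

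The key point is that every transposition $(k,n-1)$ with $1\le k\le n-1$ lies in the subgroup $S(n-1)$, and that the Young projector $P_{\mu}$ (taken to include multiplicities, i.e.\ the isotypic/central projector) is a central element of the image of the group algebra $\mathbb{C}[S(n-1)]$. Concretely, I would start from the expression $P_{\mu}=\tfrac{d_{\mu}}{(n-1)!}\sum_{\sigma\in S(n-1)}\chi^{\mu}(\sigma^{-1})V(\sigma)$ used in Fact~\ref{Fij}, conjugate by $V(k,n-1)$, and reindex the sum by $\sigma\mapsto(k,n-1)\sigma(k,n-1)$; since $\chi^{\mu}$ is a class function the summand is unchanged, so $V(k,n-1)P_{\mu}V(k,n-1)^{-1}=P_{\mu}$. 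Because $(k,n-1)$ is an involution, $V(k,n-1)^{-1}=V(k,n-1)$, and hence $V(k,n-1)P_{\mu}V(k,n-1)=P_{\mu}$.

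Summing this over $k=1,\ldots,n-1$ then gives
\[
\sum_{k=1}^{n-1}V(k,n-1)P_{\mu}V(k,n-1)=(n-1)P_{\mu},
\]
which is exactly the assertion of the fact.

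There is no genuinely hard step here; the argument is essentially bookkeeping around the centrality of $P_{\mu}$. The only points that need a moment's care are (i) checking that all the transpositions $(k,n-1)$ with $k\le n-1$ really do belong to $S(n-1)$, so that each $V(k,n-1)$ commutes with $P_{\mu}$, and (ii) remembering that $P_{\mu}$ here is the isotypic projector onto the $\mu$-component rather than a single-copy Young symmetrizer — the proof genuinely uses centrality, so this distinction is what makes the statement go through.
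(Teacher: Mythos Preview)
Your proposal is correct and follows essentially the same route as the paper: the paper also writes $P_{\mu}=\tfrac{d_{\mu}}{(n-1)!}\sum_{\sigma\in S(n-1)}\chi^{\mu}(\sigma^{-1})V(\sigma)$, observes that this lies in the centre of $\mathbb{C}[S(n-1)]$, and hence commutes with each $V(k,n-1)\in\mathbb{C}[S(n-1)]$. Your explicit reindexing via the class-function property of $\chi^{\mu}$ is just a slightly more detailed version of the same centrality argument.
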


\begin{proof}
	We know that every Young projector associated with irreducible representation $\mu$ can be written as
	\be
	P_{\mu}=\frac{d_{\mu}}{(n-1)!}\sum_{\sigma \in S(n-1)}\chi^{\mu}\left( \sigma^{-1}\right)V(\sigma),
	\ee
	where $\chi^{\mu}\left( \sigma^{-1}\right)$ is the character of irreducible representation $\mu$ calculated on the element $\sigma^{-1} \in S(n-1)$, and $V(\sigma)$ is the permutation operator acting on $\left(\mathbb{C}^d \right)^{\ot (n-1)}$. Since operator 
	$P_{\mu}$ belongs to the centre of the algebra $\mathbb{C}\left[S(n-1) \right] $ it commutes with all elements $V(\sigma)\in \mathbb{C}\left[S(n-1) \right] $, where $\sigma \in S(n-1)$ in particular with $V(k,n-1)\in \mathbb{C}\left[S(n-1) \right] $ for $k=1,\ldots,n-1$. This finishes the proof.
\end{proof}

\begin{fact}
	\label{obkl}
	Let us denote by $P_+$  projector onto unnormalized maximally entangled state $|\psi^+\>=\sum_i |ii\>$ between $(n-1)^{\text{th}}$ and $n^{\text{th}}$ subsystem, then:
	\be
	\left( \mathbf{1}\ot P_+\right)  V(k,n-1)\left( \mathbf{1}\ot P_+\right)  =\begin{cases}
		d\left(\mathbf{1}\ot P_+ \right)\quad \text{if} \quad k=n-1,\\
		\mathbf{1}\ot P_+  \quad \text{if} \quad k=1,\ldots, n-2.
	\end{cases}
	\label{cases}
	\ee
	In the above, by $V(k,n-1)$ we denote swap operator between $k$-th and $(n-1)$-th subsystem respectively, and by $d$ dimension of the local Hilbert space.
\end{fact}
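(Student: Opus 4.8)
The plan is to treat the two cases of \eqref{cases} separately. The case $k=n-1$ is immediate: the transposition $(n-1,n-1)$ is the identity permutation, so the left-hand side reduces to $(\mathbf 1\ot P_+)^2=\mathbf 1\ot P_+^2$, and since $P_+=\ketbra{\psi^+}{\psi^+}$ with $\braket{\psi^+}{\psi^+}=\sum_i 1=d$ we get $P_+^2=d\,P_+$, which gives the first line.

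For $1\le k\le n-2$ I would expand the swap in matrix units, $V(k,n-1)=\sum_{i,j}E_{ij}^{(k)}\ot E_{ji}^{(n-1)}$ with $E_{ij}=\ketbra{i}{j}$ (identities understood on all other subsystems, in particular the $n$-th). The factor $E_{ij}^{(k)}$ commutes with $\mathbf 1\ot P_+$, which acts as the identity on the $k$-th site, so it pulls out of the sandwich:
\[
(\mathbf 1\ot P_+)\,V(k,n-1)\,(\mathbf 1\ot P_+)=\sum_{i,j}E_{ij}^{(k)}\ot\Bigl[P_+\bigl(E_{ji}^{(n-1)}\ot\mathbf 1^{(n)}\bigr)P_+\Bigr].
\]
The bracket is then collapsed using the ``ricochet'' identity for the maximally entangled vector, $\bra{\psi^+}(M\ot\mathbf 1)\ket{\psi^+}=\tr M$ for any single-site operator $M$, whence $P_+\bigl(E_{ji}^{(n-1)}\ot\mathbf 1^{(n)}\bigr)P_+=(\tr E_{ji})\,P_+=\delta_{ij}\,P_+$. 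Summing the surviving diagonal terms, $\sum_i E_{ii}^{(k)}\ot P_+=\mathbf 1\ot P_+$, which is the second line.

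There is no real obstacle here; the argument is just a basis expansion plus the transpose trick, and I would additionally verify it by acting on a product basis vector $\ket{x}_k\ket{a}_{n-1}\ket{b}_n$ and tracking the contractions through the two projections, as a sanity check. The one point worth articulating in the write-up is the structural reason the two cases differ: for $k=n-1$ the trivial permutation leaves the two copies of $P_+$ back to back, so one ends up evaluating the norm $\braket{\psi^+}{\psi^+}=d$; for $k\le n-2$ the swap transports one leg of each $P_+$ onto the spectator site $k$, decoupling the projectors, so only the trace of the rank-one operator $E_{ji}$ contributes and one picks up a Kronecker delta rather than a factor of $d$.
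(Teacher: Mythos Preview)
Your proof is correct and follows essentially the same approach as the paper: the $k=n-1$ case is handled identically, and for $k\le n-2$ both arguments expand the swap in the computational basis and contract through the two copies of $P_+$. Your use of the identity $\bra{\psi^+}(M\ot\mathbf 1)\ket{\psi^+}=\tr M$ packages the index chase more cleanly than the paper's fully written-out bra-ket computation, but the underlying calculation is the same.
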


\begin{proof}
	For $k=n-1$ we have simply $\left(\mathbf{1}\ot P_+ \right)^2=d\left(\mathbf{1}\ot P_+ \right)$, since $P_+$ is unnormalized. Now we have to prove the second case from the formula~\eqref{cases}:
	\be
	\begin{split}
		&\left( \mathbf{1}\ot P_+\right)  V(k,n-1)\left( \mathbf{1}\ot P_+\right)=\left(\sum_{j_n,j_{n-1}=1}^d \mathbf{1}_1\ot \cdots \ot \mathbf{1}_k\ot \cdots \ot |j_n\>\<j_{n-1}|\ot |j_n\>\<j_{n-1}|\right)\\
		&\times \left(\sum_{i_k,i_{n-1}=1}^d \mathbf{1}_1\ot \cdots \ot \mathbf{1}_{k-1}\ot |i_{n-1}\>\<i_k|\ot \cdots \ot |i_k\>\<i_{n-1}|\ot \mathbf{1}_n\right)  \\
		&\times \left(\sum_{l_{n-1},l_n=1}^d \mathbf{1}_1\ot \cdots \ot \mathbf{1}_k \ot \cdots \ot |l_n\>\<l_{n-1}|\ot |l_n\>\<l_{n-1}|\right)
	\end{split}
	\ee
	\be
	\begin{split}
		&=\sum_{\substack{j_n,j_{n-1}=1\\i_k,i_{n-1}=1\\l_{n-1},l_n=1}}^d \mathbf{1}_1\ot \cdots \ot \mathbf{1}_{k-1}\ot |i_{n-1}\>\<i_k|\ot \cdots \ot |j_n\>\<j_{n-1}|i_k\>\<i_{n-1}|l_n\>\<l_{n-1}|\ot |j_n\>\<j_{n-1}|l_n\>\<l_{n-1}|\\
		&=\sum_{i_k,j_n,l_{n-1}=1}^d \mathbf{1}_1\ot \cdots \ot \mathbf{1}_{k-1}\ot |i_k\>\<i_k|\ot \cdots \ot |j_n\>\<l_{n-1}|\ot |j_n\>\<l_{n-1}|=\mathbf{1}\ot P_+.
	\end{split}
	\ee
\end{proof}

\begin{fact}
	\label{F3app}
	For an arbitrary element $X$ of algebra $\mathcal{A}_{n}^{\operatorname{t}_{n}}(d)$, $\tr_n X \in \mathbb{C}[S(n-1)]$.
\end{fact}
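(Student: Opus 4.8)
The plan is to reduce to the generators of the algebra and then compute. Since $\mathcal{A}_n^{t_n}(d)$ is (by \cite{Moz1,Stu1}) the linear span of the operators $V^{t_n}(\sigma)$, $\sigma\in S(n)$, and since the partial trace $\tr_n$ is linear, it suffices to show $\tr_n V^{t_n}(\sigma)\in\mathbb{C}[S(n-1)]$ for each individual $\sigma$ and then extend by linearity. I would then split the argument according to whether $\sigma$ fixes the last point.

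If $\sigma(n)=n$, then $\sigma$ restricts to a permutation $\sigma'\in S(n-1)$, the operator $V(\sigma)=V(\sigma')\otimes\mathbf{1}_n$ is invariant under $t_n$, and hence $\tr_n V^{t_n}(\sigma)=d\,V(\sigma')\in\mathbb{C}[S(n-1)]$ immediately. If $\sigma(n)=b\neq n$, I would write $\sigma=(b,n)\circ\pi$ where $\pi:=(b,n)\circ\sigma$ satisfies $\pi(n)=n$, so $\pi\in S(n-1)$. The key observation is that $V(\pi)=V(\pi')\otimes\mathbf{1}_n$ is trivial on subsystem $n$, so partial transposition on $n$ passes freely past it: $\big(V(b,n)V(\pi)\big)^{t_n}=V(b,n)^{t_n}V(\pi)$. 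Combining this with $V(b,n)^{t_n}=P^+_{b,n}$, the unnormalised projector onto the maximally entangled vector of subsystems $b$ and $n$, gives $V^{t_n}(\sigma)=\big(P^+_{b,n}\otimes\mathbf{1}\big)\big(V(\pi')\otimes\mathbf{1}_n\big)$. Writing $P^+_{b,n}=\sum_{i,j}|i\rangle\!\langle j|_b\otimes|i\rangle\!\langle j|_n$ and performing $\tr_n$ over subsystem $n$ collapses the double sum via $\langle j|i\rangle$ to $\sum_i|i\rangle\!\langle i|_b=\mathbf{1}_b$, leaving $\tr_n V^{t_n}(\sigma)=V(\pi')\in\mathbb{C}[S(n-1)]$, as required.

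The only real obstacle is conceptual rather than computational: partial transposition is neither multiplicative nor anti-multiplicative, so one cannot naively manipulate $V^{t_n}(\sigma\tau)$. The argument sidesteps this by peeling off a single transposition $(b,n)$ that absorbs all of the action on subsystem $n$, after which $t_n$ commutes past the remaining $S(n-1)$-part, and the rest is routine bookkeeping with the maximally entangled projector. As an alternative that avoids the case analysis, one can use that $\mathcal{A}_n^{t_n}(d)$ is the commutant of $\{U^{\otimes(n-1)}\otimes\bar U : U\in U(d)\}$: inserting $\mathbf{1}=(\mathbf{1}\otimes\bar U^{-1})(\mathbf{1}\otimes\bar U)$, using that $\tr_n$ is cyclic with respect to factors supported on subsystem $n$ alone, and using that $U^{\otimes(n-1)}\otimes\mathbf{1}_n$ pulls out of $\tr_n$, one obtains $(\tr_n X)\,U^{\otimes(n-1)}=U^{\otimes(n-1)}\,(\tr_n X)$ for every $U$; hence $\tr_n X$ lies in the commutant of $U(d)^{\otimes(n-1)}$, which is $\mathbb{C}[S(n-1)]$ by Schur--Weyl duality.
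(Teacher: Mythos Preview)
Your proof is correct. Both your main argument and the paper's begin by reducing via linearity to the generators $V^{t_n}(\sigma)$, but from there the routes diverge. The paper observes in one line that partial trace over a subsystem is invariant under partial transposition on that same subsystem: for any operator $A$ on the first $n-1$ factors, $\tr\big[V^{t_n}(\sigma)\,(A\otimes\mathbf{1}_n)\big]=\tr\big[V(\sigma)\,(A\otimes\mathbf{1}_n)^{t_n}\big]=\tr\big[V(\sigma)\,(A\otimes\mathbf{1}_n)\big]$, whence $\tr_n V^{t_n}(\sigma)=\tr_n V(\sigma)$, and then invokes the standard fact that $\tr_n V(\sigma)\in\mathbb{C}[S(n-1)]$. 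Your main argument instead peels off the transposition $(b,n)$ and computes explicitly, which is a bit longer but has the virtue of giving the concrete value $\tr_n V^{t_n}(\sigma)=V(\pi')$ (resp.\ $d\,V(\sigma')$) rather than merely membership in the algebra. Your alternative via the commutant characterisation of $\mathcal{A}_n^{t_n}(d)$ and Schur--Weyl duality is a genuinely different, structural proof that neither appears in the paper nor requires any computation on generators; it is arguably the most conceptual of the three.
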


\begin{proof}
	From~\cite{Moz1,Stu1} we know that algebra $\mathcal{A}_{n}^{\operatorname{t}_{n}}(d)$ is spanned by the partially transposed permutation operators $V^{t_n}(\sigma)$, where $\sigma \in S(n)$. Let us take an arbitrary operator $A$ defined on $n-1$ subsystems, then
	we can write
	\be
	\tr\left[V^{t_n}(\sigma) A\ot \mathbf{1}_n \right] =\tr\left[V(\sigma)A\ot \mathbf{1}_n^{t_n} \right],
	\ee
	where $\mathbf{1}_n$ is the identity operator on last system, and $t_n$ denotes standard transposition operation on last $n$-th system. We can now express the trace $\tr_n\left[V^{t_n}(\sigma) \right]=\tr\left[V(\sigma) \right]$, but $\tr\left[V(\sigma) \right]$ for $\sigma \in S(n)$ belongs to $\mathbb{C}[S(n-1)]$, so we have  $\tr_n\left[V^{t_n}(\sigma) \right]\in \mathbb{C}[S(n-1)]$.
\end{proof}
\section{Proof of the auxiliary statements in Theorem 3}\label{simpltheorem3}
\subsection{Proof of Eqn.~\eqref{theorem3part1}}
   Using the expression for $P_{\mu}$ from Fact~\ref{Fij} and applying Fact~\ref{obkl} we get:
	\be
	\begin{split}
		\zeta_{\mu,\mu'}(\alpha,\alpha') &=\sum_{a=1}^{n-1}\sum_{ij=1}^{d_{\mu}}\frac{d_{\mu}}{(n-1)!}\varphi_{ij}^{\mu}(a,n-1)\tr\left[\left(P_{\alpha}F_{ij}^{\mu}\ot \mathbf{1} \right)\left(\mathbf{1}\ot P_+ \right)V(a,n-1)\left(\mathbf{1}\ot P_+ \right)\left(P_{\alpha'}\ot \mathbf{1} \right)  P_{\mu'}\right]\\
		&=\sum_{a=1}^{n-1}\sum_{i,j=1}^{d_{\mu}}\frac{d_{\mu}d^{\delta_{a,n-1}}}{(n-1)!}\varphi_{ij}^{\mu}(a,n-1)\tr\left[\left(P_{\alpha}F_{ij}^{\mu}P_{\alpha'}\ot P_+ \right) P_{\mu'}\right]\\
		&=\sum_{a=1}^{n-1}\sum_{i,j=1}^{d_{\mu}}\frac{d_{\mu}d^{\delta_{a,n-1}}}{ (n-1)!}\varphi_{ij}^{\mu}(a,n-1)\tr\left[\left(P_{\alpha}F_{ij}^{\mu}P_{\alpha'}\ot \mathbf{1} \right) P_{\mu'}\right].
	\end{split}
	\ee
	In the last equality we have used the fact that $\tr_{n}P^+=\mathbf{1}$, where identity acts on $(n-1)$-th subsystem. Applying Fact~\ref{Fij} to operator $P_{\mu'}$ and calculating a partial trace over $(n-1)$-th subsystem we get:
	\be
	\begin{split}
		\zeta_{\mu,\mu'}(\alpha,\alpha') =\sum_{a,b=1}^{n-1}\sum_{i,j=1}^{d_{\mu}}\frac{d_{\mu}d_{\mu'}d^{\delta_{a,n-1}}d^{\delta_{b,n-1}}}{\left[ (n-1)!\right] ^2}\varphi_{ij}^{\mu}(a,n-1)\varphi_{kl}^{\mu'}(b,n-1)\tr\left[\left(P_{\alpha}F_{ij}^{\mu}\right) \left(  P_{\alpha'}F_{kl}^{\mu'}\right) \right].
	\end{split}
	\ee
\subsection{Evaluating $g_{\mu,\mu'}(\alpha)$}\label{evalg}
\be g_{\mu,\mu'}(\alpha)
	= \sum_{a,b=1}^{n-1}d^{%
		\delta _{a,n-1}}d^{\delta _{b,n-1}}\sum_{i_{\alpha },j_{\alpha
		}=1}^{d_{\alpha }}(\phi _{R}^{\mu })_{i_{\alpha }j_{\alpha }}^{\alpha \alpha
	}(a,n-1)(\phi _{R}^{\mu'})_{j_{\alpha }i_{\alpha }}^{\alpha \alpha
	}(b,n-1),
\ee
 with all irreps in the PRIR's form defined in Appendix~\ref{AppA}. Remarkably, this form allows us to directly evaluate these sums. First we partition the sums as follows:
	\be
	\begin{split}
		&\sum_{a,b=1}^{n-1}d^{\delta _{a,n-1}}d^{\delta _{b,n-1}}\sum_{i_{\alpha
			},j_{\alpha }=1}^{d_{\alpha }}(\phi _{R}^{\mu })_{i_{\alpha }j_{\alpha
		}}^{\alpha \alpha }(an-1)(\phi _{R}^{\mu'})_{j_{\alpha }i_{\alpha
		}}^{\alpha \alpha }(bn-1)\\
		&=\sum_{i_{\alpha },j_{\alpha }=1}^{d_{\alpha }}\left\lbrace \left[ \sum_{a=1}^{n-1}d^{\delta
			_{a,n-1}}(\phi _{R}^{\mu })_{i_{\alpha }j_{\alpha }}^{\alpha \alpha
		}(a,n-1)\right] \left[ \sum_{b=1}^{n-1}d^{\delta _{b,n-1}}(\phi _{R}^{\mu'})_{j_{\alpha }i_{\alpha }}^{\alpha \alpha }(b,n-1)\right] \right\rbrace \\
		&=\sum_{i_{\alpha },j_{\alpha }=1}^{d_{\alpha }}\left\lbrace \left[ \sum_{a=1}^{n-2}(\phi
		_{R}^{\mu })_{i_{\alpha }j_{\alpha }}^{\alpha \alpha }(a,n-1)+d\delta
		_{i_{\alpha }j_{\alpha }}\right] \left[ \sum_{b=1}^{n-2}(\phi _{R}^{\mu'})_{j_{\alpha }i_{\alpha }}^{\alpha \alpha }(b,n-1)+d\delta _{j_{\alpha
			}i_{\alpha }}\right]\right\rbrace .
	\end{split}
	\ee
	Now we use the second statement of Proposition~\ref{Prir1} for $S(n-1)$, for the sums over $a$ and $b$, which yields 
	\be
	\begin{split}
		&\sum_{i_{\alpha },j_{\alpha }=1}^{d_{\alpha }}\left\lbrace \left[ \left( \frac{(n-1)(n-2)}{2}\frac{%
			\chi ^{\mu }(12)}{d_{\mu}}-\frac{(n-2)(n-3)}{2}\frac{\chi
			^{\alpha }(12)}{d_{\alpha}}\right) \delta _{i_{\alpha }j_{\alpha
		}}+d\delta _{i_{\alpha }j_{\alpha }}\right] \times \right. \\
		&\left. \times\left[  \left( \frac{(n-1)(n-2)}{2}\frac{\chi ^{\mu' }(12)}{d_{\mu'}}-\frac{(n-2)(n-3)}{2}\frac{\chi ^{\alpha }(12)}{%
			d_{\alpha}}\right) \delta _{j_{\alpha }i_{\alpha }}+d\delta
		_{j_{\alpha }i_{\alpha }}\right]\right\rbrace,
	\end{split}
	\ee
	and after simple reordering we get 
	\be
	\label{above1}
	\begin{split}
		&\left[  \left( \frac{(n-1)(n-2)}{2}\frac{\chi ^{\mu }(12)}{d_{\mu }}-
		\frac{(n-2)(n-3)}{2}\frac{\chi ^{\alpha }(12)}{d_{\alpha }}
		\right) +d\right] \times \\
		&\left[  \left( \frac{(n-1)(n-2)}{2}\frac{\chi ^{\mu'}(12)}{d_{\mu' }}-\frac{(n-2)(n-3)}{2}\frac{\chi ^{\alpha }(12)}{d_{\alpha }}\right) +d\right] \sum_{i_{\alpha },j_{\alpha }=1}^{d_{\alpha }}\delta
		_{i_{\alpha }j_{\alpha }}\delta _{j_{\alpha }i_{\alpha }}.
	\end{split}
	\ee
	In equation~\eqref{above1} we recognize inside the square brackets the expression for eigenvalues $%
	\gamma _{\mu }(\alpha )$ and $\gamma _{\mu ^{\prime }}(\alpha )$, this yields
	\be
	\label{above2}
	\sum_{a,b=1}^{n-1}d^{\delta _{a,n-1}}d^{\delta _{b,n-1}}\sum_{i_{\alpha
		},j_{\alpha }=1}^{d_{\alpha }}(\phi _{R}^{\mu })_{i_{\alpha }j_{\alpha
	}}^{\alpha \alpha }(a,n-1)(\phi _{R}^{\mu'})_{j_{\alpha }i_{\alpha
	}}^{\alpha \alpha }(b,n-1)=\gamma _{\mu }(\alpha )\gamma _{\mu'}(\alpha )d_{\alpha }.
	\ee
	Substituting equation~\eqref{above2} into equation~\eqref{oop1} we reduce expression for the fidelity $F$ to:
	\be
	F=\frac{1}{Nd^{N+2}}\sum_{\alpha }\frac{m_{\alpha }}{d_{\alpha }}%
	\sum_{\substack{\mu \in \alpha \\ \mu'\in \alpha'}}d_{\mu }d_{\mu'}\sqrt{\gamma _{\mu}(\alpha) \gamma _{\mu'}(\alpha) }.
	\ee

\section{Auxiliary facts concerning multiplicities of symmetric group}
\label{app:multSn}

Let us consider the standard swap representation 
\be
V_{n}^{d}:S(n)\rightarrow \Hom \left[ (\mathbb{C}^{d})^{\otimes n}\right] .
\ee
It is well-known that it reduces to the form 
\be
V_{n}^{d}\cong \bigoplus _{\nu :h(\nu )\leq d}m_{\nu }\psi ^{\nu },
\ee
where $\psi ^{\nu }$ are irreps of $S(n)$, $m_{\nu }$ their
multiplicities, and by $h(\nu)$ we denote height of Young diagram $\nu$. From theory of group characters we get 

\begin{proposition}
	\be
	\sum_{\nu :h(\nu )\leq d}m_{\nu }^{2}=\frac{1}{n!}\sum_{\sigma \in
		S(n)}d^{2l_{n}(\sigma )},
	\ee
	where $l_{n}(\sigma )$ is the number of cycles in the permutation $\sigma $
	as a permutation of $S(n)$.
\end{proposition}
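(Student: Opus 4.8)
The plan is to recognize the left-hand side as the squared norm of the character of the swap representation $V_n^d$ and then evaluate that norm by a direct sum over $S(n)$. Write $\chi_V$ for the character of $V_n^d$ on $(\mathbb{C}^d)^{\otimes n}$, so that, by the decomposition $V_n^d\cong\bigoplus_{\nu:h(\nu)\le d}m_\nu\psi^\nu$ stated just above, $\chi_V=\sum_{\nu:h(\nu)\le d}m_\nu\chi^\nu$, where $\chi^\nu$ is the irreducible character of $S(n)$ labelled by $\nu$. Using the orthonormality of the irreducible characters with respect to $\langle f,g\rangle=\frac{1}{n!}\sum_{\sigma\in S(n)}\overline{f(\sigma)}\,g(\sigma)$, I would expand
\be
\langle\chi_V,\chi_V\rangle=\sum_{\nu,\nu':h\le d}m_\nu m_{\nu'}\langle\chi^\nu,\chi^{\nu'}\rangle=\sum_{\nu:h(\nu)\le d}m_\nu^2 ,
\ee
so the left-hand side of the proposition is exactly $\langle\chi_V,\chi_V\rangle$ (equivalently $\dim\End_{S(n)}((\mathbb{C}^d)^{\otimes n})$).

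Next I would compute $\chi_V(\sigma)=\tr V_n^d(\sigma)$ explicitly. In the standard product basis $\{|i_1\cdots i_n\rangle\}$ of $(\mathbb{C}^d)^{\otimes n}$, the operator $V_n^d(\sigma)$ permutes the tensor legs, sending $|i_1\cdots i_n\rangle$ to $|i_{\sigma^{-1}(1)}\cdots i_{\sigma^{-1}(n)}\rangle$; such a basis vector contributes to the trace precisely when it is fixed, i.e.\ when the index tuple is constant on every cycle of $\sigma$, and contributes $0$ otherwise. Hence $\tr V_n^d(\sigma)=d^{l_n(\sigma)}$, where $l_n(\sigma)$ is the number of cycles of $\sigma$. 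In particular $\chi_V$ is real (a non-negative integer), so $\overline{\chi_V(\sigma)}\,\chi_V(\sigma)=\chi_V(\sigma)^2=d^{2l_n(\sigma)}$.

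Combining the two steps gives
\be
\sum_{\nu:h(\nu)\le d}m_\nu^2=\langle\chi_V,\chi_V\rangle=\frac{1}{n!}\sum_{\sigma\in S(n)}\chi_V(\sigma)^2=\frac{1}{n!}\sum_{\sigma\in S(n)}d^{2l_n(\sigma)},
\ee
which is the claim. There is essentially no genuine obstacle here: the only points needing a little care are the standard trace identity $\tr V_n^d(\sigma)=d^{l_n(\sigma)}$ and the remark that $\chi_V$ is real so that the complex conjugate in the inner product disappears; the rest is just orthonormality of irreducible characters.
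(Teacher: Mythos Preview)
Your proof is correct and is precisely the standard character-theoretic argument the paper alludes to: in the paper this proposition is stated without proof, merely introduced with ``From theory of group characters we get,'' so your computation of $\langle\chi_V,\chi_V\rangle$ via orthonormality of irreducible characters together with the identity $\tr V_n^d(\sigma)=d^{l_n(\sigma)}$ supplies exactly the missing justification.
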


Now using the following 

\begin{lemma}
	Let 
	\be
	\sigma \in S(n)\wedge \sigma =(an)\rho :\rho \in S(n-1),\quad a=1,\ldots,n,
	\ee
	then 
	\be
	l_{n}(\sigma )=%
	\begin{cases}
	l_{n-1}(\rho ) \ : \ a\neq n, \\ 
	l_{n-1}(\rho )+1 \ : \ a=n
	\end{cases}
	\ee
	so for the non trivial cosets $S(n)/S(n-1)$ the number of cycles is the same.
\end{lemma}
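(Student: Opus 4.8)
The plan is to reduce the statement to the elementary fact describing how the number of cycles changes when one multiplies a permutation by a transposition. First I would fix the convention used for $l_n$: $l_n(\sigma)$ is the number of cycles of $\sigma$ acting on the set $\{1,\dots,n\}$, with fixed points counted as $1$-cycles. With this convention the first observation is that a permutation $\rho\in S(n-1)$, viewed inside $S(n)$, fixes the point $n$; hence $\{n\}$ is an extra $1$-cycle and $l_n(\rho)=l_{n-1}(\rho)+1$.

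Next I would invoke the standard lemma: for any $\pi\in S(n)$ and any transposition $(bc)$ with $b\ne c$, one has $l_n\big((bc)\pi\big)=l_n(\pi)+1$ if $b$ and $c$ belong to the same cycle of $\pi$, and $l_n\big((bc)\pi\big)=l_n(\pi)-1$ if they belong to different cycles. If needed, this is proved in one line by following the orbits of $\pi$ through $b$ and $c$: left-multiplying by $(bc)$ either splices the two orbits together (merging two cycles into one) or cuts a single orbit in two.

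Now I would split into the two cases of the lemma. If $a=n$, the symbol $(an)$ is the identity, so $\sigma=\rho$ and $l_n(\sigma)=l_n(\rho)=l_{n-1}(\rho)+1$, which is the second case. If $a\ne n$, then $a\in\{1,\dots,n-1\}$ is permuted by $\rho$ within $\{1,\dots,n-1\}$ while $n$ is a fixed point of $\rho$; thus $a$ and $n$ lie in different cycles of $\rho$, and the cited lemma gives $l_n(\sigma)=l_n\big((an)\rho\big)=l_n(\rho)-1=\big(l_{n-1}(\rho)+1\big)-1=l_{n-1}(\rho)$, which is the first case. For the closing remark I would note that the non-trivial left cosets of $S(n-1)$ in $S(n)$ are exactly $(an)S(n-1)$ for $a=1,\dots,n-1$, and every element of such a coset has the form $\sigma=(an)\rho$ with $\rho\in S(n-1)$; the computation above shows $l_n(\sigma)=l_{n-1}(\rho)$ independently of which $a\in\{1,\dots,n-1\}$ was chosen, so $\sum_{\sigma} d^{2l_n(\sigma)}$ over any non-trivial coset equals $\sum_{\rho\in S(n-1)}d^{2l_{n-1}(\rho)}$, the same value for all $n-1$ of them.

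There is essentially no genuine obstacle here; the only point requiring care is the bookkeeping — using the fixed-point convention consistently in the identity $l_n(\rho)=l_{n-1}(\rho)+1$, and getting the direction of the $\pm1$ in the transposition lemma right, since a sign slip would corrupt the cycle count and hence the multiplicity identity $\sum_\nu m_\nu^2=\tfrac1{n!}\sum_\sigma d^{2l_n(\sigma)}$ used in the preceding proposition.
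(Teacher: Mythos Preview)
Your argument is correct. The paper states this lemma without proof, so there is nothing to compare against; your reduction to the standard transposition-cycle lemma, together with the observation $l_n(\rho)=l_{n-1}(\rho)+1$ for $\rho\in S(n-1)$, is exactly the natural way to verify it and suffices for the application to Proposition~\ref{P23}.
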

We can now express the ratio of the multiplicities in closed form:
\begin{proposition}
	\label{P23}
	Let 
	\be
	V_{n}^{d}\cong \bigoplus _{\nu :h(\nu )\leq d}m_{\nu }\psi ^{\nu },\qquad
	V_{n-1}^{d}\cong \bigoplus _{\alpha :h(\alpha )\leq d}m_{\alpha }\varphi
	^{\alpha },
	\ee
	then 
	\be
	\label{pr23}
	\frac{\sum_{\nu :h(\nu )\leq d}m_{\nu }^{2}}{\sum_{\alpha :h(\alpha )\leq
			d}m_{\alpha }^{2}}=\frac{d^{2}+n-1}{n}.
	\ee

\end{proposition}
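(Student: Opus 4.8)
The plan is to reduce the whole statement to the two ingredients already in place: the character identity $\sum_{\nu:h(\nu)\le d}m_\nu^2=\frac{1}{n!}\sum_{\sigma\in S(n)}d^{2l_n(\sigma)}$ (and its $S(n-1)$ analogue), and the cycle-counting Lemma describing how $l_n$ behaves under the coset decomposition of $S(n)$ over $S(n-1)$. Concretely, I would set $A_n:=\sum_{\nu:h(\nu)\le d}m_\nu^2$ and $A_{n-1}:=\sum_{\alpha:h(\alpha)\le d}m_\alpha^2$, and rewrite both using the character formula, so that proving \eqref{pr23} amounts to showing $\sum_{\sigma\in S(n)}d^{2l_n(\sigma)}=(d^2+n-1)\sum_{\rho\in S(n-1)}d^{2l_{n-1}(\rho)}$ and then dividing through by $n!=n\cdot(n-1)!$.

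First I would fix the transversal $\{(an):a=1,\dots,n\}$ of $S(n-1)$ in $S(n)$, with the convention $(nn):=\operatorname{e}$, so that every $\sigma\in S(n)$ is written uniquely as $\sigma=(an)\rho$ with $\rho\in S(n-1)$; this is a genuine partition of $S(n)$ into $n$ blocks of size $(n-1)!$, since for $a\ne b$ the element $(bn)(an)$ sends $n\mapsto a\ne n$ and hence does not lie in $S(n-1)$. Applying the Lemma, $l_n((an)\rho)=l_{n-1}(\rho)$ for the $n-1$ nontrivial cosets $a\in\{1,\dots,n-1\}$, while $l_n(\rho)=l_{n-1}(\rho)+1$ for the trivial coset $a=n$. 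Summing the contribution of each coset gives
\be
\sum_{\sigma\in S(n)}d^{2l_n(\sigma)}=\sum_{a=1}^{n-1}\sum_{\rho\in S(n-1)}d^{2l_{n-1}(\rho)}+\sum_{\rho\in S(n-1)}d^{2l_{n-1}(\rho)+2}=(d^2+n-1)\sum_{\rho\in S(n-1)}d^{2l_{n-1}(\rho)}.
\ee
Dividing both sides by $n!$ and recognising $\frac{1}{(n-1)!}\sum_{\rho\in S(n-1)}d^{2l_{n-1}(\rho)}=A_{n-1}$ then yields $A_n=\frac{d^2+n-1}{n}A_{n-1}$, which is precisely \eqref{pr23}.

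The argument is essentially bookkeeping once the character formula and the Lemma are granted; the only point deserving a line of care is the claim that the $n$ translates $(an)S(n-1)$ partition $S(n)$ and that $l_n$ is constant on each nontrivial coset, both of which follow from the Lemma together with the elementary fact that left multiplication by a transposition $(an)$ changes the number of cycles by $\pm1$ according to whether $a$ and $n$ lie in the same cycle — here they lie in different cycles of $\rho\in S(n-1)$ (viewed inside $S(n)$, where $n$ is a fixed point), so the count drops by one, exactly compensating the extra $1$-cycle of $\rho$. I do not anticipate a genuine obstacle: the substantive content was already absorbed into the preceding Proposition ($\sum_\nu m_\nu^2=\frac{1}{n!}\sum_\sigma d^{2l_n(\sigma)}$, itself just $\langle\chi_{V_n^d},\chi_{V_n^d}\rangle$ with $\chi_{V_n^d}(\sigma)=d^{l_n(\sigma)}$) and the cycle-counting Lemma.
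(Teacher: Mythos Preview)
Your proof is correct and follows precisely the route the paper sets up: the paper places the character identity $\sum_{\nu}m_\nu^2=\frac{1}{n!}\sum_{\sigma}d^{2l_n(\sigma)}$ and the cycle-counting Lemma immediately before Proposition~\ref{P23} and then states the ratio without writing out the computation, so your argument is exactly the intended one. There is nothing to add.
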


\begin{lemma}
	\label{L26}
	For any Young diagram $\mu: h(\mu)\leq d$  we have 
	\be
	\frac{1}{m_{\mu }}\sum_{\alpha \in \mu}\gamma _{\mu}(\alpha)m_{\alpha }=n-1, 
	\ee
	where $\alpha \in \mu$ denotes Young diagrams of $n-2$ which are obtained from Young diagrams of $n-1$ by removing one box in a proper way, and numbers $\gamma_{\mu}(\alpha)$ are given in Proposition~\ref{m2}.
\end{lemma}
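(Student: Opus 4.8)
The plan is to substitute the closed form for $\gamma_{\mu}(\alpha)$ obtained in Proposition~\ref{m2} and reduce the claim to the classical branching rule for the symmetric group. Concretely, \eqref{m2eq1} gives $\gamma_{\mu}(\alpha)=(n-1)\frac{m_{\mu}d_{\alpha}}{m_{\alpha}d_{\mu}}$, so
\be
\frac{1}{m_{\mu}}\sum_{\alpha\in\mu}\gamma_{\mu}(\alpha)m_{\alpha}=\frac{1}{m_{\mu}}\sum_{\alpha\in\mu}(n-1)\frac{m_{\mu}d_{\alpha}}{m_{\alpha}d_{\mu}}m_{\alpha}=\frac{n-1}{d_{\mu}}\sum_{\alpha\in\mu}d_{\alpha},
\ee
where the multiplicities $m_{\alpha}$ and $m_{\mu}$ cancel cleanly. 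It then suffices to prove $\sum_{\alpha\in\mu}d_{\alpha}=d_{\mu}$.

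This last identity is exactly the statement that the restriction of the irrep $\varphi^{\mu}$ of $S(n-1)$ to the subgroup $S(n-2)$ decomposes, with multiplicity one, as $\bigoplus_{\alpha\in\mu}\varphi^{\alpha}$ ranging over all Young diagrams $\alpha$ obtained from $\mu$ by removing a single box; comparing dimensions on the two sides yields the claim. One small point must be addressed: the sums in the lemma are taken only over $\alpha$ with $h(\alpha)\le d$, whereas the branching rule involves all one-box removals of $\mu$. But removing a box never increases the height, so $h(\alpha)\le h(\mu)\le d$ for every $\alpha\in\mu$; hence the two index sets coincide and no terms are lost.

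Combining the two steps gives $\frac{1}{m_{\mu}}\sum_{\alpha\in\mu}\gamma_{\mu}(\alpha)m_{\alpha}=\frac{n-1}{d_{\mu}}\cdot d_{\mu}=n-1$, as desired. There is no genuine obstacle here; the only thing to be careful about is invoking the dimension-counting form of the branching rule and verifying that the height constraint $h(\alpha)\le d$ is automatic, as noted above. As an alternative route one could instead use the second form \eqref{m2eq2} of $\gamma_{\mu}(\alpha)$ together with the sum rule of Proposition~\ref{Prir1}, but the branching-rule argument is considerably shorter.
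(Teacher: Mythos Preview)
Your proof is correct and follows essentially the same approach as the paper: substitute $\gamma_{\mu}(\alpha)=(n-1)\frac{m_{\mu}d_{\alpha}}{m_{\alpha}d_{\mu}}$, cancel the multiplicities, and invoke the branching rule $\sum_{\alpha\in\mu}d_{\alpha}=d_{\mu}$. Your additional remark that the height constraint $h(\alpha)\le d$ is automatically satisfied because removing a box cannot increase the height is a useful clarification that the paper omits.
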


\begin{proof}
	Using the explicit form of the numbers $\gamma_{\mu}(\alpha)$ we have 
	\be
	\frac{1}{m_{\mu }}\sum_{\alpha \in \mu}\gamma _{\mu}(\alpha) m_{\alpha }=\frac{n-1}{m_{\mu }}\sum_{\alpha\in \mu}\frac{m_{\mu
		}d_{\alpha }}{m_{\alpha }d_{\mu }}m_{\alpha }=(n-1)\frac{1}{d_{\mu }}%
	\sum_{\alpha \in \mu}d_{\alpha }. 
	\ee
	The dimensions $d_{\alpha }$ corresponding to $\alpha \in \mu$ are precisely the
	dimensions of irreps of $S(n-2)$ appearing in  the restriction of irreps of $S(n-1)$ to $S(n-2)$, so
	\be
	\sum_{\alpha \in \mu}d_{\alpha }=d_{\mu }.
	\ee
	This finishes the proof.
\end{proof}

\section{Description of the operators $E_{ij}^{\alpha}$}
\label{A2}
In this section we will briefly recall some properties of the algebra
generated by a given complex finite dimensional representation of the finite
group $G.$ The content of this section can be found in standard textbooks on representation theory of finite groups and algebras, for example in \cite{Curtis}.

Any complex finite-dimensional representation $D:G\rightarrow \Hom(V)$ of the
finite group $G,$ where $V$ is a complex linear space, generates an algebra $%
A_{V}[G]$ $\subset $ $\Hom(V)$ which is isomorphic to the group algebra $\mathbb{C}\lbrack G]$ if the representation $D$ is faithful:
\be
A_{V}[G]=\operatorname{span}_{\mathbb{C}}\{D(g),\quad g\in G\}.
\ee
If the operators $D(g)$ are linearly independent, then they form a basis of
the algebra $A_{V}[G]$ and $\dim A_{V}[G]=\left\vert G\right\vert $. It is
also possible, using matrix irreducible representations, to construct a new
basis which has remarkable properties, very useful in applications of
representation theory. Below we describe this construction.

\begin{notation}
	\label{not1}
	Let $G$ be a finite group of order $\left\vert G\right\vert$ which has $r$
	classes of conjugated elements. Then $G$ has exactly $r$ inequivalent,
	irreducible representations, in particular $G$ has exactly $r$ inequivalent,
	irreducible matrix representations. Let
	\be
	D^{\alpha }:G\rightarrow \Hom(V^{\alpha }),\qquad \alpha =1,2,\ldots,r,\qquad
	\dim V^{\alpha }=d_{\alpha }
	\ee
	be all inequivalent, irreducible representations of $G$ and let us assume that these representations are all unitary (always possible) i.e.
	\be
	D^{\alpha }(g)=(D_{ij}^{\alpha }(g)),\qquad \text{and} \qquad
	(D_{ij}^{\alpha }(g))^{\dagger}=(D_{ij}^{\alpha }(g))^{-1},
	\ee
	where $i,j=1,2,\ldots,d_{\alpha }$.
\end{notation}

The matrix elements $D_{ij}^{\alpha }(g)$ will play a crucial role in the
following.

\begin{definition}
	\label{ad1}
	Let $D:G\rightarrow \Hom(V)$ be an unitary representation of a finite group $G
	$ such that the operators $D(g),$ $\ g\in G$ are linearly independent i.e. $%
	\dim A_{V}[G]=\left\vert G\right\vert $ and let $D^{\alpha }:G\rightarrow
	\Hom(V^{\alpha })$ be all inequivalent, irreducible representations of $G$
	described in Notation~\ref{not1} above. Define
	\be
	E_{ij}^{\alpha }=\frac{d_{\alpha }}{|G|}\sum_{g\in G}D_{ji}^{\alpha
	}(g^{-1})D(g),
	\ee
	where $\alpha =1,2,\ldots,r,\quad i,j=1,2,\ldots,d_{\alpha },\quad
	E_{ij}^{\alpha }\in A_{V}[G]\subset \Hom(V)$.
\end{definition}

The operators have the following properties:

\begin{theorem}
	\label{at1}
	\begin{enumerate}
	\item There are exactly $\left\vert G\right\vert$ nonzero operators $
	E_{ij}^{\alpha }$ and
	\be
	D(g)=\sum_{ij\alpha }D_{ij}^{\alpha }(g)E_{ij}^{\alpha }.
	\ee
	
	\item The operators $E_{ij}^{\alpha }$ are orthogonal with respect to the
	Hilbert-Schmidt scalar product in the space $\Hom(V)$.
	\be
	(E_{ij}^{\alpha },E_{kl}^{\beta })=\tr\left[ (E_{ij}^{\alpha }) ^{\dagger}E_{kl}^{\beta
	}\right] =m_{\alpha }\delta ^{\alpha \beta }\delta _{ik}\delta _{jl},\qquad
	m_{\alpha }\geq 1,
	\ee
	where $m_{\alpha }$ \ is equal to the multiplicity of the irreducible
	representation $D^{\alpha }$ in $D$ and it does not depend on $i,j=1,2,\ldots,d_{\alpha }$.
	
	\item The operators $E_{ij}^{\alpha }$ satisfy the following composition rule%
	\be
	E_{ij}^{\alpha }E_{kl}^{\beta }=\delta ^{\alpha \beta }\delta
	_{jk}E_{il}^{\alpha },
	\ee
	in particular $E_{ii}^{\alpha }$ are orthogonal projections.
	\end{enumerate}
\end{theorem}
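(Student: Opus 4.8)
The plan is to identify the $E_{ij}^{\alpha}$ as the system of matrix units realizing the Wedderburn (isotypic) decomposition of the algebra $A_{V}[G]$, and to read the three assertions off the resulting block structure. The single analytic input is the grand (Schur) orthogonality relations, which for the chosen unitary matrix irreps read $\frac{d_{\alpha}}{|G|}\sum_{g\in G}D_{ab}^{\alpha}(g^{-1})D_{cd}^{\beta}(g)=\delta^{\alpha\beta}\delta_{ad}\delta_{bc}$ for inequivalent irreps $D^{\alpha},D^{\beta}$ of $G$ (see~\cite{Curtis}); I will also use the unitarity identity $\overline{D_{ba}^{\alpha}(g)}=D_{ab}^{\alpha}(g^{-1})$ and the column orthogonality of characters $\sum_{\alpha}d_{\alpha}\chi^{\alpha}(x)=|G|\,\delta_{x,e}$, i.e.\ the character of the regular representation.

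First I would prove the composition rule, item~3. Substituting the definition of $E_{ij}^{\alpha}$ twice and changing variables to $u=gh$ in the double sum over $G\times G$ rewrites $E_{ij}^{\alpha}E_{kl}^{\beta}$ as $\frac{d_{\alpha}d_{\beta}}{|G|^{2}}\sum_{u\in G}\sum_{m}\big(\sum_{g\in G}D_{ji}^{\alpha}(g^{-1})D_{mk}^{\beta}(g)\big)D_{lm}^{\beta}(u^{-1})D(u)$; the bracketed sum collapses by Schur orthogonality to $\frac{|G|}{d_{\alpha}}\delta^{\alpha\beta}\delta_{jk}\delta_{im}$, and what is left is exactly $\delta^{\alpha\beta}\delta_{jk}E_{il}^{\alpha}$. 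A short computation using the substitution $g\mapsto g^{-1}$ together with unitarity gives $(E_{ij}^{\alpha})^{\dagger}=E_{ji}^{\alpha}$, so each $E_{ii}^{\alpha}$ is a self-adjoint idempotent, hence an orthogonal projection; moreover $E_{ij}^{\alpha}E_{ji}^{\alpha}=E_{ii}^{\alpha}$ and $E_{ji}^{\alpha}E_{ij}^{\alpha}=E_{jj}^{\alpha}$ exhibit $E_{ii}^{\alpha}$ and $E_{jj}^{\alpha}$ as equivalent projections, so they have equal trace.

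Next I would compute traces, which yields item~2. From the composition rule $\tr E_{jl}^{\alpha}=\delta_{jl}\tr E_{jj}^{\alpha}$, and I denote by $m_{\alpha}$ the common value of $\tr E_{ii}^{\alpha}$ (independent of $i$ by the equivalence above). Summing the definition over $i$ gives $d_{\alpha}m_{\alpha}=\sum_{i}\tr E_{ii}^{\alpha}=\frac{d_{\alpha}}{|G|}\sum_{g}\chi^{\alpha}(g^{-1})\,\tr D(g)=d_{\alpha}\langle\chi^{\alpha},\chi_{D}\rangle$, so $m_{\alpha}$ is the multiplicity of $D^{\alpha}$ in $D$, is independent of $i,j$, and is $\ge1$ exactly when $D^{\alpha}$ occurs in $D$; under the standing hypothesis that the operators $D(g)$ are linearly independent one has $A_{V}[G]\cong\mathbb{C}[G]$, so every irrep occurs and $m_{\alpha}\ge1$ for all $\alpha$. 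Then $\tr[(E_{ij}^{\alpha})^{\dagger}E_{kl}^{\beta}]=\tr[E_{ji}^{\alpha}E_{kl}^{\beta}]=\delta^{\alpha\beta}\delta_{ik}\tr E_{jl}^{\alpha}=\delta^{\alpha\beta}\delta_{ik}\delta_{jl}\,m_{\alpha}$, which is the orthogonality statement.

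Finally, item~1: the identity $D(g)=\sum_{\alpha,i,j}D_{ij}^{\alpha}(g)E_{ij}^{\alpha}$ follows by expanding the right side with the definition of $E_{ij}^{\alpha}$, collecting the coefficient of $D(h)$, and evaluating $\sum_{i,j}D_{ij}^{\alpha}(g)D_{ji}^{\alpha}(h^{-1})=\chi^{\alpha}(gh^{-1})$ and then $\sum_{\alpha}d_{\alpha}\chi^{\alpha}(gh^{-1})=|G|\,\delta_{g,h}$. For the count, the $\sum_{\alpha}d_{\alpha}^{2}=|G|$ operators $E_{ij}^{\alpha}$ are, by the orthogonality just proved and $m_{\alpha}\ge1$, pairwise Hilbert--Schmidt orthogonal with $\|E_{ij}^{\alpha}\|^{2}=m_{\alpha}>0$, hence all nonzero; so there are exactly $|G|$ nonzero ones. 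I do not expect a genuine obstacle: this is the classical matrix-unit construction in a semisimple group algebra, and the only real care is the bookkeeping of $g$ versus $g^{-1}$ and of unitarity, and remembering that the count and the reconstruction sum over \emph{all} irreps of $G$ while the linear-independence hypothesis is precisely what forces each of them to appear in $D$.
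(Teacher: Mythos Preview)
Your proof is correct. Note, however, that the paper does not actually give its own proof of this theorem: Appendix~\ref{A2} opens by saying that its content ``can be found in standard textbooks on representation theory of finite groups and algebras, for example in~\cite{Curtis}'', and Theorem~\ref{at1} is then stated without proof. So there is nothing in the paper to compare your argument against; you have supplied the textbook computation the paper defers to. The derivation you give (Schur orthogonality $\Rightarrow$ matrix-unit relations $\Rightarrow$ adjoint and trace identities $\Rightarrow$ Hilbert--Schmidt orthogonality $\Rightarrow$ inversion via the regular character) is exactly the standard route one finds in~\cite{Curtis}, and your bookkeeping of $g$ versus $g^{-1}$, the use of $(E_{ij}^{\alpha})^{\dagger}=E_{ji}^{\alpha}$, the identification $m_{\alpha}=\langle\chi^{\alpha},\chi_{D}\rangle$, and the appeal to linear independence of the $D(g)$ to force $m_{\alpha}\geq 1$ are all in order.
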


\begin{remark}
	\label{ar1}
	From item 2 of above theorem it follows that the expressions
	\be
	E_{ij}^{\alpha }=\frac{d_{\alpha}}{|G|}\sum_{g\in G}D_{ji}^{\alpha }(g^{-1})D(g)
	\ee
	describe the transformation of orthogonalization of operators $D(g),$ $\ g\in G$
	in the space $\Hom(V)$ with the Hilbert-Schmidt scalar product.
\end{remark}

The operators  $E_{ii}^{\alpha }$ are not only orthogonal
projections onto their proper subspaces in $V$ but they are also orthogonal
with respect to the Hilbert-Schmidt scalar product in the space $\Hom(V)$.
The basis $\{E_{ij}^{\alpha }\}$ plays essential role when $D:G\rightarrow \mathbb{C}[ G]$ is the regular representation. In this case the properties of the
basis $\{E_{ij}^{\alpha }\}$ expresses the well-known fact that the group
algebra $\mathbb{C} [G]$ is a direct sum of simple matrix algebras generated by the
irreducible representations of the group $G$. It is always possible to
construct the operators $E_{ij}^{\alpha }$ even if the operators $D(g)$ are
not linearly independent but in this case some of them will be zero.

\section{Proof of the SDP-related lemmas}\label{SDPlemmas}
\subsection{Proof of Lemma 4:}\label{SDPlemmas.1}
\begin{proof}
	The symmetries in our problem suggest that we may take $\Theta_{\overline{a}}$ as an element of the algebra $\mathbb{C}\left[S(n-2)\right] $. Thus,  $\Theta_{\overline{a}}=\sum_{\alpha}x_{\alpha}P_{\alpha}$, where $P_{\alpha}$ are Young projectors and $x_{\alpha}\in \mathbb{R}_+$  which ensures that first constraint from~\eqref{cons} is automatically satisfied. Using this argumentation we can rewrite the second constraint from~\eqref{cons} restricted to an irrep labelled by $\alpha \vdash n-2$ as
	\be
	\sum_{a=1}^{n-1} P^+_{a,n}\ot \Theta_{\overline{a}}(\alpha)=x_{\alpha}\sum_{a=1}^{n-1}V(a,n-1)P^+_{n-1,n}\ot P_{\alpha}V(a,n-1)=\frac{x_{\alpha}}{d} \eta(\alpha),
	\ee
	where $\eta(\alpha)$ are introduced in Theorem~\ref{m1}. Eigenvalues of the operator $\frac{1}{d}\eta(\alpha)$ are equal to $\frac{1}{d}\gamma_{\mu}(\alpha)$, where numbers $\gamma_{\mu}(\alpha)$ are eigenvalues of $\eta(\alpha)$ given in Theorem~\ref{m2}.
%	Namely up to the factor $1/d^{N-1}$ the eigenvalues of the operator $\eta$ denoted as $\lambda_{\mu}(\alpha)$ are the same as the eigenvalues $\widetilde{\lambda}_{\mu}(\alpha)$ of the operator $\sum_{\alpha} \Phi(\alpha)$, i.e.  $\widetilde{\lambda}_{\mu}(\alpha)=d^{N-1}\lambda_{\mu}(\alpha)$. 
	To ensure that $\forall \alpha  \ \frac{x_{\alpha}}{d}\eta(\alpha)\leq \mathbf{1}_{\alpha}$ we take
	\be
	\label{max}
	x_{\alpha}=\mathop{\min}\limits_{ \mu \in \alpha} \frac{1}{\frac{1}{d}\gamma_{\mu}(\alpha)}=d\mathop{\min}\limits_{ \mu \in \alpha} \frac{1}{\gamma_{\mu}(\alpha)}.
	\ee
	To obtain the minimum it suffices to insert $\gamma_{\mu^{\ast}}(\alpha)$, which is the maximal possible eigenvalue of the operator $\eta(\alpha)$ for some particular Young frame $\mu \vdash n-1 $ obtained from $\alpha \vdash n-2$ by adding one box in the proper way.
	Inserting the optimal form of operators $\Theta_{\overline{a}}$ into equation~\eqref{opt}, we get
	\be
	p^\star=\frac{1}{d^{N+1}}\sum_{a=1}^N \tr \left(\sum_{\alpha}x_{\alpha}P_{\alpha} \right)=\frac{N}{d^{N}}\sum_{\alpha} \frac{1}{\gamma_{\mu^{\ast}}(\alpha)}\tr P_{\alpha}=\frac{N}{d^{N}}\sum_{\alpha} \frac{m_{\alpha}d_{\alpha}}{\gamma_{\mu^{\ast}}(\alpha)} = {\frac{1}{d^N}\sum_{\alpha}m_{\alpha}d_\alpha\min_{\mu \in \alpha}\frac{1}{\gamma_{\mu}(\alpha)}}.
	\ee

\end{proof}
\subsection{Proof of Fact 5:}\label{SDPlemmas.2}
\begin{proof}
	The operators $F_{\mu}(\alpha)$ are invariant under the action of $S(n-1)$ and thus under $S(n-2)$. The operator $V^{t_n}(n-1,n)$ is invariant under the action of $S(n-2)$. It follows that the composition $V^{t_n}(n-1,n)F_{\mu}(\alpha)$ is invariant under the action of $S(n-2)$. Moreover, using Fact~\ref{F3app} from Appendix~\ref{AppC} we have $\tr_{n-1,n}\left[V^{t_n}(n-1,n)F_{\mu}(\alpha) \right]\in \mathbb{C}[S(n-2)]$, and since it is invariant under the action of $S(n-2)$, it must be of the form $\bigoplus_{\beta \vdash n-2}y(\beta)P_{\beta}$, where $y_{\beta}\in \mathbb{C}$. However, by Theorem~\ref{m1} and Fact~\ref{BF2} from Appendix~\ref{AppB} we get 
	\be
	\begin{split}
 P_{\beta}V^{t_n}(n-1,n)F_{\mu}(\alpha)&=P_{\beta}V^{t_n}(n-1,n)M_{\alpha}P_{\mu}=P_{\beta}V^{t_n}(n-1,n)P_{\alpha}P_{\mu}\\
 &=\delta_{\alpha \beta} P_{\alpha}V^{t_n}(n-1,n)F_{\mu}(\alpha).
\end{split}
 \ee
 This implies that $\tr_{n-1,n}\left[V^{t_n}(n-1,n)F_{\mu}(\alpha) \right]=y_{\mu}(\alpha)P_{\alpha}$, and thus
 \be
 y_{\mu}(\alpha)=\frac{\tr\left[V^{t_n}(n-1,n)F_{\mu}(\alpha) \right] }{d_{\alpha}m_{\alpha}}=\frac{\tr\left[ V^{t_n}(n-1,n)M_{\alpha}P_{\mu}\right] }{d_{\alpha}m_{\alpha}}.
 \ee
 To get the final result, we apply Fact~\ref{BF2} once more
 \be
  y_{\mu}(\alpha)=\frac{\tr\left[V^{t_n}(n-1,n)P_{\alpha}P_{\mu} \right] }{d_{\alpha}m_{\alpha}}=\frac{\tr\left[P_{\mu}\left(P_{\alpha}\ot \mathbf{1} \right)  \right] }{d_{\alpha}m_{\alpha}}=\frac{m_{\mu}}{m_{\alpha}}.
 \ee
\end{proof}
\subsection{Proof of Lemma 6:}\label{SDPlemmas.3}

\begin{proof}
	Let us assume that the coefficients $x_{\mu^*}(\alpha)$ given in definition of the operator $\Omega$ in equation~\eqref{om12} are of the form $x_{\mu^*}(\alpha)=d\frac{m_{\alpha}}{m_{\mu^*}}$. One can easily see that $\Omega \geq 0$, and using Fact~\ref{Fop} we get
	\be
	\begin{split}
	\tr_{n-1,n}\left[P^+_{n-1,n}\Omega\right]&=\frac{1}{d}\tr\left[V^{t_n}(n-1,n)\Omega \right]=\sum_{\alpha}\frac{m_{\alpha}}{m_{\mu^*}}\tr\left[V^{t_n}(n-1,n)F_{\mu^*}(\alpha) \right]\\
	&=\sum_{\alpha}P_{\alpha}=\mathbf{1}_{1,\ldots,n-2},
	\end{split}
	\ee
	where $\mathbf{1}_{1,\ldots,n-2}$ denotes the identity operator defined on first $n-2$ subsystems. We thus see that the second constraint from expression~\eqref{dualA} is also fulfilled. Finally we can calculate quantity $p_{\star}$ given in equation~\eqref{low_star}:
	\be
	p_{\star}=\frac{1}{d^{N+1}}\tr \Omega=\frac{1}{d^N}\sum_{\alpha}\frac{m_{\alpha}}{m_{\mu^*}}\tr F_{\mu^*}(\alpha)=\frac{1}{d^N}\sum_{\alpha}m_{\alpha}^2\frac{d_{\mu^*}}{m_{\mu^*}}={\frac{1}{d^N}\sum_{\alpha}m_{\alpha}d_\alpha \min_{\mu \in \alpha}\frac{1}{\gamma_{\mu}(\alpha)}},
	\ee
	since $\tr F_{\mu^*}(\alpha)=d_{\mu^*}m_{\alpha}$ by Theorem~\ref{m1}.
\end{proof}
\subsection{Proof of Lemma 9}\label{SDPlemmas.4}
\begin{proof}
The symmetry of the problem suggests that optimal POVMs should be elements of the algebra $\mathbb{C}[S(n-2)]$. We represent them in the following form:
\be
\label{povmres}
\forall \ a=1,\ldots,N\quad \Theta_{\overline{a}}=\sum_{\alpha}u(\alpha)P_{\alpha,\overline{a}},\quad \text{with} \quad u(\alpha)=\frac{d^{N+1}g(N)m_{\alpha}}{Nd_{\alpha}},
\ee
where $g(N)=1/\sum_{\nu}m_{\nu}^2$ for all  $\nu \vdash n-1$, and the above sum runs over all allowed irreps of $S(n-2)$. By $P_{\alpha, \overline{a}}$ we denote Young projectors onto irreps of $S(n-2)$, but defined on every subsystem except $n$-th and $a$-th. Since all coefficients $u(\alpha)\geq 0$, then the first constraint from~\eqref{cons1} is satisfied. We choose the optimal state defined through $X_A$ from~\eqref{cons1} is of the form:
\be
\label{optstate}
X_A=\sum_{\mu}c_{\mu}P_{\mu},\quad \text{where} \quad c_{\mu}=\frac{d^{N}g(N)m_{\mu}}{d_{\mu}},
\ee
where sum runs over all allowed irreps of $S(n-1)$. We see that
\be
\tr X_A=\sum_{\mu}c_{\mu}\tr P_{\mu}=d^N\frac{\sum_{\mu}m_{\mu}^2}{\sum_{\nu}m_{\nu}^2}=d^N,
\ee
so the constraint on the trace of~\eqref{cons1} is fulfilled. Moreover, we have that
\be
\forall \mu\in \alpha \quad u(\alpha)=\frac{d}{\gamma_{\mu}(\alpha)}c_{\mu},
\ee
where numbers $\gamma_{\mu}(\alpha)$ are eigenvalues given by Proposition~\ref{m2}. Thus, the second constraint from~\eqref{cons1} is satisfied with the equality. For the above choices we compute the probability of success given in the statement of this Lemma plugging the choice of POVMs given in~\eqref{povmres} into~\eqref{pres}:
\be
\begin{split}
p^{\star}&=\frac{1}{d^{N+1}}\sum_{a=1}^N\tr\Theta_a=\frac{N}{d^{N+1}}\sum_{\alpha}u(\alpha)\tr P_{\alpha}=\frac{\sum_{\alpha}m_{\alpha}^2}{\sum_{\nu}m_{\nu}^2}=\frac{N}{N+d^2-1}=1-\frac{d^2-1}{N+d^2-1},
\end{split}
\ee
where we used the fact that $\tr P_{\alpha}=m_{\alpha}d_{\alpha}$, and plugged the ratio from Proposition~\ref{P23} in Appendix~\ref{app:multSn}.
\end{proof}

\subsection{Proof of Lemma 10}\label{SDPlemmas.5}

\begin{proof}
Since the operators $F_{\mu}(\alpha)$ are invariant under the action of $S(n-1)$ and  $\tr_{n} F_{\mu}(\alpha) \in \mathbb{C}[S(n-1)]$, the partial trace decomposes as $\tr_{n} F_{\mu}(\alpha)=\bigoplus_{\mu \vdash n-1} a_{\nu}P_{\nu}$, where $a_{\nu}\in \mathbb{C}$. On the other hand, by Theorem~\ref{m1} we have
\be
P_{\nu}F_{\mu}(\alpha)=P_{\nu}M_{\alpha}P_{\mu}=\delta_{\mu \nu}F_{\mu}(\alpha),
\ee
hence 
\be
\tr_{n} F_{\mu}(\alpha)=a_{\mu}P_{\mu},
\ee
and 
\be
a_{\mu}=\frac{\tr F_{\mu}(\alpha)}{\tr P_{\mu}}=\frac{m_{\alpha}d_{\mu}}{m_{\mu}d_{\mu}}=\frac{m_{\alpha}}{m_{\mu}},
\ee
where we use that $\tr F_{\mu}(\alpha)=m_{\alpha}d_{\mu}$ from Theorem~\ref{m1}.
\end{proof}
\subsection{Proof of Lemma 11}\label{SDPlemmas.6}

\begin{proof}
We represent the coefficients $x_{\mu}(\alpha)$ of $\Omega$ from ~\eqref{om1} as follows:
\be
x_{\mu}(\alpha)=\frac{d}{d^2+N-1}\gamma_{\mu}(\alpha),
\ee
where numbers $\gamma_{\mu}(\alpha)$ are eigenvalues of the PBT operator $\eta$ given in Proposition~\ref{m2}. Making use of Theorem~\ref{m1} we get
\be
\Omega=\sum_{\alpha}\sum_{\mu\in \alpha}x_{\mu}(\alpha)F_{\mu}(\alpha)=\frac{d}{N+d^2-1}\eta \geq 0
\ee
for any value $d$ and $N$, so the first constraint from~\eqref{cons2} is fulfilled, since $\eta \geq 0$. For the second constraint, making use of the symmetry of the problem it suffices to estimate it for $a=n-1$:
\be
\tr_{n-1,n}\left[P^+_{n-1,n}\Omega \right]=\frac{1}{N+d^2-1}\tr_{n-1,n}\left[V^{t_n}(n-1,n)\eta \right]=\mathbf{1}_{1\ldots n-2}, 
\ee
where $\mathbf{1}_{1\ldots n-2}$ denotes the identity operator acting on $n-2$ subsystems. Finally we need $\tr_n \Omega$:
\be
\label{a1}
\tr_n \Omega=\frac{d}{N+d^2-1}\sum_{\alpha}\sum_{\mu \in \alpha}\gamma_{\mu}(\alpha)\tr_n F_{\mu}(\alpha).
\ee
Using Lemma~\ref{Fop2} we reduce equation~\eqref{a1} to
\be
\label{trn}
\tr_n \Omega=\frac{d}{N+d^2-1}\sum_{\alpha}\sum_{\mu \in \alpha}\gamma_{\mu}(\alpha)\frac{m_{\alpha}}{m_{\mu}}P_{\mu}=\frac{d}{N+d^2-1}\sum_{\mu}\frac{1}{m_{\mu}}\sum_{\alpha \in \mu}\gamma_{\mu}(\alpha)m_{\alpha}P_{\mu},
\ee
where by $\alpha \in \mu$ we denote Young diagrams $\alpha$ of $n-2$ which can be obtained from Young diagrams $\mu$ of $n-1$ by removing one box in a proper way.  Now using the explicit form of $\gamma_{\mu}(\alpha)$ given in Proposition~\ref{m2} and Lemma~\ref{L26} from Appendix~\ref{app:multSn} we can simplify Eqn.~\eqref{trn}
\be
\tr_n \Omega=\frac{dN}{N+d^2-1}\sum_{\mu}\frac{1}{d_{\mu}}\sum_{\alpha \in \mu}d_{\alpha}P_{\mu}=\frac{dN}{N+d^2-1}\sum_{\mu}P_{\mu}=\frac{dN}{N+d^2-1}\mathbf{1}_{1\ldots n-1},
\ee
where $\mathbf{1}_{1\ldots n-1}$ is identity operator defined on $n-1$ subsystems. Now taking $b=\frac{1}{d^N}\frac{N}{N+d^2-1}$, we satisfy the third constraint from~\eqref{cons2}
\be
p_{\star}=d^Nb=1-\frac{d^2-1}{N+d^2-1}=p^{\star}.
\ee
\end{proof}


\begin{thebibliography}{10}
\expandafter\ifx\csname url\endcsname\relax
  \def\url#1{\texttt{#1}}\fi
\expandafter\ifx\csname urlprefix\endcsname\relax\def\urlprefix{URL }\fi
\expandafter\ifx\csname doiprefix\endcsname\relax\def\doiprefix{DOI }\fi
\providecommand{\bibinfo}[2]{#2}
\providecommand{\eprint}[2][]{\url{#2}}

\bibitem{bennett_teleporting_1993}
\bibinfo{author}{Bennett, C.~H.} \emph{et~al.}
\newblock \bibinfo{title}{Teleporting an unknown quantum state via dual
  classical and {Einstein}-{Podolsky}-{Rosen} channels}.
\newblock \emph{\bibinfo{journal}{Physical Review Letters}}
  \textbf{\bibinfo{volume}{70}}, \bibinfo{pages}{1895--1899};
\newblock \doiprefix 10.1103/PhysRevLett.70.1895 (1993).

\bibitem{boschi_experimental_1998}
\bibinfo{author}{Boschi, D.}, \bibinfo{author}{Branca, S.},
  \bibinfo{author}{De~Martini, F.}, \bibinfo{author}{Hardy, L.} \&
  \bibinfo{author}{Popescu, S.}
\newblock \bibinfo{title}{Experimental {Realization} of {Teleporting} an
  {Unknown} {Pure} {Quantum} {State} via {Dual} {Classical} and
  {Einstein}-{Podolsky}-{Rosen} {Channels}}.
\newblock \emph{\bibinfo{journal}{Physical Review Letters}}
  \textbf{\bibinfo{volume}{80}}, \bibinfo{pages}{1121--1125};
\newblock \doiprefix 10.1103/PhysRevLett.80.1121 (1998).

\bibitem{gottesman_demonstrating_1999}
\bibinfo{author}{Gottesman, D.} \& \bibinfo{author}{Chuang, I.~L.}
\newblock \bibinfo{title}{Demonstrating the viability of universal quantum
  computation using teleportation and single-qubit operations}.
\newblock \emph{\bibinfo{journal}{Nature}} \textbf{\bibinfo{volume}{402}},
  \bibinfo{pages}{390--393};
\newblock \doiprefix 10.1038/46503 (1999).

\bibitem{gross_novel_2007}
\bibinfo{author}{Gross, D.} \& \bibinfo{author}{Eisert, J.}
\newblock \bibinfo{title}{Novel {Schemes} for {Measurement}-{Based} {Quantum}
  {Computation}}.
\newblock \emph{\bibinfo{journal}{Physical Review Letters}}
  \textbf{\bibinfo{volume}{98}}, \bibinfo{pages}{220503};
  \newblock \doiprefix 10.1103/PhysRevLett.98.220503 (2007).

\bibitem{jozsa_introduction_2005}
\bibinfo{author}{Jozsa, R.}
\newblock \bibinfo{title}{An introduction to measurement based quantum
  computation}.
\newblock \bibinfo{note}{ArXiv: quant-ph/0508124 (2005)}.

\bibitem{knill_scheme_2001}
\bibinfo{author}{Knill, E.}, \bibinfo{author}{Laflamme, R.} \&
  \bibinfo{author}{Milburn, G.~J.}
\newblock \bibinfo{title}{A scheme for efficient quantum computation with
  linear optics}.
\newblock \emph{\bibinfo{journal}{Nature}} \textbf{\bibinfo{volume}{409}},
  \bibinfo{pages}{46--52};
\newblock \doiprefix 10.1038/35051009 (2001).

\bibitem{pirandola_advances_2015}
\bibinfo{author}{Pirandola, S.}, \bibinfo{author}{Eisert, J.},
  \bibinfo{author}{Weedbrook, C.}, \bibinfo{author}{Furusawa, A.} \&
  \bibinfo{author}{Braunstein, S.~L.}
\newblock \bibinfo{title}{Advances in quantum teleportation}.
\newblock \emph{\bibinfo{journal}{Nature Photonics}}
  \textbf{\bibinfo{volume}{9}}, \bibinfo{pages}{641--652};
\newblock \doiprefix 10.1038/nphoton.2015.154 (2015).

\bibitem{raussendorf_one-way_2001}
\bibinfo{author}{Raussendorf, R.} \& \bibinfo{author}{Briegel, H.~J.}
\newblock \bibinfo{title}{A {One}-{Way} {Quantum} {Computer}}.
\newblock \emph{\bibinfo{journal}{Physical Review Letters}}
  \textbf{\bibinfo{volume}{86}}, \bibinfo{pages}{5188--5191};
\newblock \doiprefix 10.1103/PhysRevLett.86.5188 (2015).

\bibitem{zukowski_event-ready-detectors_1993}
\bibinfo{author}{{\.Z}ukowski, M.}, \bibinfo{author}{Zeilinger, A.},
  \bibinfo{author}{Horne, M.~A.} \& \bibinfo{author}{Ekert, A.~K.}
\newblock \bibinfo{title}{``{Event}-ready-detectors'' {Bell} experiment via
  entanglement swapping}.
\newblock \emph{\bibinfo{journal}{Physical Review Letters}}
  \textbf{\bibinfo{volume}{71}}, \bibinfo{pages}{4287--4290};
\newblock \doiprefix 10.1103/PhysRevLett.71.4287 (1993).

\bibitem{grudka_optimal_2008}
\bibinfo{author}{Grudka, A.} \& \bibinfo{author}{Mod{\l}awska, J.}
\newblock \bibinfo{title}{Optimal state in the {Knill}-{Laflamme}-{Milburn}
  scheme of linear optical teleportation}.
\newblock \emph{\bibinfo{journal}{Physical Review A}}
  \textbf{\bibinfo{volume}{77}}, \bibinfo{pages}{014301};
\newblock \doiprefix 10.1103/PhysRevA.77.014301 (2008).

\bibitem{ishizaka_asymptotic_2008}
\bibinfo{author}{Ishizaka, S.} \& \bibinfo{author}{Hiroshima, T.}
\newblock \bibinfo{title}{Asymptotic {Teleportation} {Scheme} as a {Universal}
  {Programmable} {Quantum} {Processor}}.
\newblock \emph{\bibinfo{journal}{Physical Review Letters}}
  \textbf{\bibinfo{volume}{101}}, \bibinfo{pages}{240501};
\newblock \doiprefix 10.1103/PhysRevLett.101.240501 (2008).

\bibitem{ishizaka_quantum_2009}
\bibinfo{author}{Ishizaka, S.} \& \bibinfo{author}{Hiroshima, T.}
\newblock \bibinfo{title}{Quantum teleportation scheme by selecting one of
  multiple output ports}.
\newblock \emph{\bibinfo{journal}{Physical Review A}}
  \textbf{\bibinfo{volume}{79}}, \bibinfo{pages}{042306};
\newblock \doiprefix 10.1103/PhysRevA.79.042306 (2009).

\bibitem{beigi_konig}
\bibinfo{author}{Beigi, S.} \& \bibinfo{author}{K{\"o}nig, R.}
\newblock \bibinfo{title}{Simplified instantaneous non-local quantum
  computation with applications to position-based cryptography}.
\newblock \emph{\bibinfo{journal}{New Journal of Physics}}
  \textbf{\bibinfo{volume}{13}}, \bibinfo{pages}{093036};
\newblock \doiprefix 10.1088/1367-2630/13/9/093036 (2011).

\bibitem{buhrman_quantum_2016}
\bibinfo{author}{Buhrman, H.} \emph{et~al.}
\newblock \bibinfo{title}{Quantum communication complexity advantage implies
  violation of a {Bell} inequality}.
\newblock \emph{\bibinfo{journal}{Proceedings of the National Academy of
  Sciences}} \textbf{\bibinfo{volume}{113}}, \bibinfo{pages}{3191--3196};
\newblock \doiprefix 10.1073/pnas.1507647113 (2016).

\bibitem{wang_higher-dimensional_2016}
\bibinfo{author}{Wang, Z.-W.} \& \bibinfo{author}{Braunstein, S.~L.}
\newblock \bibinfo{title}{Higher-dimensional performance of port-based
  teleportation}.
\newblock \emph{\bibinfo{journal}{Scientific Reports}}
  \textbf{\bibinfo{volume}{6}}, \bibinfo{pages}{33004};
\newblock \doiprefix 10.1038/srep33004 (2016).

\bibitem{strelchuk_generalized_2013}
\bibinfo{author}{Strelchuk, S.}, \bibinfo{author}{Horodecki, M.} \&
  \bibinfo{author}{Oppenheim, J.}
\newblock \bibinfo{title}{Generalized {Teleportation} and {Entanglement}
  {Recycling}}.
\newblock \emph{\bibinfo{journal}{Physical Review Letters}}
  \textbf{\bibinfo{volume}{110}}, \bibinfo{pages}{010505};
\newblock \doiprefix 10.1103/PhysRevLett.110.010505 (2013).

\bibitem{ishizaka_remarks_2015}
\bibinfo{author}{Ishizaka, S.}
\newblock \bibinfo{title}{Some remarks on port-based teleportation}.
\newblock \bibinfo{note}{ArXiv: 1506.01555} (2015).

\bibitem{Moz1}
\bibinfo{author}{Mozrzymas, M.}, \bibinfo{author}{Horodecki, M.} \&
  \bibinfo{author}{Studzi{\'n}ski, M.}
\newblock \bibinfo{title}{Structure and properties of the algebra of partially
  transposed permutation operators}.
\newblock \emph{\bibinfo{journal}{Journal of Mathematical Physics}}
  \textbf{\bibinfo{volume}{55}}, \bibinfo{pages}{032202};
\newblock \doiprefix 10.1063/1.4869027 (2014).

\bibitem{Stu1}
\bibinfo{author}{Studzi{\'n}ski, M.}, \bibinfo{author}{Horodecki, M.} \&
  \bibinfo{author}{Mozrzymas, M.}
\newblock \bibinfo{title}{Commutant structuture of {Ux}...{xUxU}*
  transformations}.
\newblock \emph{\bibinfo{journal}{J. Phys. A: Math. Theor. 46 395303}};
\newblock \doiprefix 10.1063/1751-8113/46/39/395303 (2013).


\bibitem{Curtis}
\bibinfo{author}{Curtis, C.~W.} \& \bibinfo{author}{Reiner, I.}
\newblock \emph{\bibinfo{title}{Representation Theory of Finite Groups and
  Associative Algebras}}. Wiley;
\newblock \doiprefix 10.1090/chel/356.H (1998)
\end{thebibliography}
\end{document}